\documentclass{article}

\PassOptionsToPackage{numbers,compress,sort}{natbib}
\usepackage[preprint]{neurips_2021}



\usepackage[utf8]{inputenc} 
\usepackage[T1]{fontenc}    
\usepackage{hyperref}       
\usepackage{url}            
\usepackage{booktabs}       
\usepackage{amsfonts}       
\usepackage{amsmath}        
\usepackage{nicefrac}       
\usepackage{microtype}      
\usepackage{xcolor}         
\usepackage[shortlabels]{enumitem}
\usepackage[ruled]{algorithm2e} 
\usepackage{amsthm}
\usepackage{graphicx}
\setlist{noitemsep,topsep=5pt,parsep=3pt,partopsep=5pt,leftmargin=2em}

\renewcommand{\paragraph}[1]{\smallskip\noindent\textbf{#1}}

\newcommand{\EE}{\mathbf{E}}
\newcommand{\PP}{\mathbf{P}}
\newcommand{\ones}{\mathbf{1}}
\newcommand{\RR}{\mathbb{R}}

\newcommand{\dom}{{\rm dom}\,}
\newcommand{\ind}{\mathbb{I}} 

\newcommand{\almeve}{\textnormal{a.e.}}
\renewcommand{\citet}{\cite}

\DeclareMathOperator*{\argmax}{arg\,max}
\DeclareMathOperator*{\argmin}{arg\,min}

\newtheorem{theorem}{Theorem}
\newtheorem{defn}{Definition}

\newtheorem{lemma}{Lemma}

\title{Online Market Equilibrium with Application to Fair Division}

%

\author{
  Yuan Gao \\
  Columbia University \\
  \texttt{gao.yuan@columbia.edu} \\
  \And
  Christian Kroer \\
  Columbia University \\
  \texttt{christian.kroer@columbia.edu} \\
  \texttt{email} \\
  \And
  Alex Peysakhovich \\
  Facebook AI Research \\
  \texttt{alex.peys@gmail.com} \\
}

\begin{document}

\maketitle

\begin{abstract}
    Computing market equilibria is a problem of both theoretical and applied interest. Much research to date focuses on the case of static Fisher markets with full information on buyers' utility functions and item supplies. Motivated by real-world markets, we consider an online setting: individuals have linear, additive utility functions; items arrive sequentially and must be allocated and priced irrevocably. We define the notion of an online market equilibrium in such a market as time-indexed allocations and prices which guarantee buyer optimality and market clearance in hindsight. We propose a simple, scalable and interpretable allocation and pricing dynamics termed as PACE. When items are drawn i.i.d. from an unknown distribution (with a possibly continuous support), we show that PACE leads to an online market equilibrium asymptotically. In particular, PACE ensures that buyers' time-averaged utilities converge to the equilibrium utilities w.r.t. a static market with item supplies being the unknown distribution and that buyers' time-averaged expenditures converge to their per-period budget. Hence, many desirable properties of market equilibrium-based fair division such as no envy, Pareto optimality, and the proportional-share guarantee are also attained asymptotically in the online setting. Next, we extend the dynamics to handle quasilinear buyer utilities, which gives the first online algorithm for computing first-price pacing equilibria. Finally, numerical experiments on real and synthetic datasets show that the dynamics converges quickly under various metrics.
\end{abstract}

\section{Introduction} \label{sec:intro}
A market is said to be in equilibrium when supply is equal to demand. Computing prices and allocations which constitute a market equilibrium (ME) has long been a topic of interest~\citep{scarf1967computation,kantorovich1975mathematics,othman2010finding,daskalakis2009complexity,cole2017convex,kroer2019computing}. Most existing work focuses on the case of static markets. However, in this paper we consider the case of online markets where items arrive sequentially. We consider the extension of market equilibrium to this setting and provide market dynamics which quickly converge to an equilibrium in the case of online Fisher markets.

In static Fisher markets there is a fixed supply of each item, individual preferences are linear, additive, and items are divisible (or equivalently, randomization is allowed so individuals can purchase not just items but lotteries over items). In general, finding market equilibria is a hard problem~\citep{chen2009spending,vazirani2011market,othman2016complexity}. However, in static linear Fisher markets, equilibrium prices and allocations can be computed via solving the Eisenberg-Gale (EG) convex program~\citep{eisenberg1959consensus,nisan2007algorithmic}. 

We consider an online extension of Fisher markets where buyers are constantly present but items arrive one-at-a-time. Buyers' budgets are per-period and represent their respective `bidding powers' instead of being binding constraints. We extend the definition of market equilibrium to the online setting: online equilibrium allocations and prices are time-indexed and, when averaged across time, form an equilibrium in a corresponding static Fisher market where item supplies are proportional to item arrival probabilities. Due to the stochastic nature of online Fisher markets, any online algorithm can only attain an online market equilibrium \emph{asymptotically}, that is, the allocations and prices approximately satisfy the equilibrium conditions after running the algorithm for a long time.

We propose market dynamics that find these equilibria in an online fashion based on the \emph{dual averaging} algorithm applied to a reformulation of the dual of the EG convex program. We refer to this mechanism as \textbf{PACE} (Pace According to Current Estimated utility). 
In PACE, each buyer is assigned a utility \emph{pacing} multiplier at time $0$. 
When an item arrives, the individual with the highest adjusted utility (its valuation times the multiplier) receives that item and pays a price equal to its adjusted utility. 
The pacing multipliers of all individuals are then adjusted according to a closed-form rule which is given by the time average of the subgradient of the dual of the EG program. 
Intuitively, the pacing multipliers of those that did not receive the item go up while the receiver's typically (but not always) goes down. We show that PACE yields item allocations and prices that satisfy various equilibrium properties asymptotically, for example no-regret and envy-freeness.

One important application of market equilibrium is fair allocation using the \emph{competitive equilibrium from equal incomes} (CEEI) mechanism~\citep{varian1974equity,budish2011combinatorial}. In CEEI, each individual is given an endowment of faux currency and reports her valuations for items; then, a market equilibrium is computed and the items are allocated accordingly.
However, many fair division problems are online rather than static. These include the allocation of impressions to content in certain recommender systems~\citep{robust_blog}, workers to shifts, donations to food banks~\citep{aleksandrov2015online}, scarce compute time to requestors~\cite{ghodsi2011dominant,parkes2015beyond,kash2014no}, or blood donations to blood banks~\citep{mcelfresh2020matching}. Similarly, online advertising can also be thought of as the allocation of impressions to advertisers via a market though with a budget of real money rather than faux currency. In the static CEEI case with linear additive preferences, the resulting equilibrium outcomes (i.e. results of the EG program) have been described as ``perfect justice''~\citep{arnsperger1994envy}. 
In the online case, PACE achieves the same fair allocations as CEEI asymptotically. See Appendix \ref{app:related-work} for more related work in the areas of (static and online) equilibrium computation and fair division.

We evaluate PACE experimentally in several market datasets. Convergence to good outcomes happens quickly in experiments. Taken together our results, we conclude that PACE is an attractive algorithm for both computing online market equilibria and online fair division.

\paragraph{Main contributions.} 
We consider the problem of allocating and pricing sequentially arriving items to $n$ buyers. 
This setting is termed as an \textit{online Fisher market}.
Given a sequence of item arrivals, we define an online market equilibrium as the items' allocations and prices that, in hindsight, ensure buyer optimality and market clearance.
We propose the PACE dynamics, which can be viewed as a nontrivial instantiation of the dual averaging algorithm on a reformulation of the dual of the Eisenberg-Gale convex program.
Leveraging the convergence theory of dual averaging, we show that, when item arrivals are drawn from an (unknown) underlying distribution $s$, possibly over an infinite/continuous item space, PACE ensures the following. 
\begin{itemize}
    \item The pacing multipliers generated by PACE converge to the static equilibrium \emph{utility prices}. 
    Here, ``static'' means w.r.t. to an underlying static Fisher market.
    \item Buyers' time-averaged utilities converge to the static equilibrium utilities.
    \item Buyers' time-averaged expenditures converge to their respective budgets.
\end{itemize}
These convergences are all in mean square with rates $O((\log t)/t)$, $O((\log t)/t)$ and $O((\log t)^2/t)$, respectively, where the constants in these rates involve moderate polynomials of $n$. 
In this way, PACE generates allocations and prices that constitute an online market equilibrium in the limit. 
In particular, the allocations and prices ensure that the allocation is Pareto optimal, and buyers have no regret, no envy, and get at least their proportional share asymptotically. 
We also extend PACE to the case of quasilinear buyer utilities, which yields the first online algorithm for computing first-price pacing equilibria.
Finally, numerical experiments suggest that PACE converges much faster than its theoretical rates in terms of pacing multipliers, utilities and expenditures.



\section{Static and Online Fisher Markets} \label{sec:fisher-markets}
\paragraph{Static Fisher markets and equilibria.}  
We first introduce static Fisher markets and their equilibria. Following the recent work \cite[\S 2]{gao2020infinite}, we consider a measurable (possibly continuous) item space. 
Below are the technical preliminaries for the subsequent online setting. 
They can be skimmed through and referred back to as needed.

From now on, we define $[k] := \{1, \dots, k\}$ for any $k \in \mathbb{N} := \{0, 1,2,\dots\}$ and $\mathbb{R}_+$ ($\mathbb{R}_{++}$, resp.) as the set of nonnegative (positive, resp.) real numbers.
Let $\ind\{A\} \in \{0,1\}$ denote the indicator function of an event $A$.
\begin{enumerate}[(a)]
    \item There are $n$ buyers (individuals), each having a budget $B_i>0$. 
    \item The item space is a \emph{finite} measurable space $(\Theta, \mathcal{M}, \mu)$ with $0<\mu(\Theta) < \infty$, where $\Theta$ is a ($\mu$-)measurable subset of $\RR^d$, $\mathcal{M}$ is a $\sigma$-algebra and $\mu: \mathcal{M}\rightarrow \RR_+$ is a (finite) measure.
    From now on, $L^p$ (and $L^p_+$, resp.) denote the set of (nonnegative, resp.) $L^p$ functions on $\Theta$ for any $p\in [1, \infty]$ (including $p=\infty$).
    Below are some concrete special cases for illustration. \label{item:measurable-item-space}
    \begin{enumerate}[(i)]
        \item Finite: $\Theta = [m]$, $\mathcal{M} = 2^{[m]} = \{ A: A\subseteq [m] \}$ and $\mu(A) = \sum_{a\in A} \mu(a)$ (all $2^m$ subsets are measurable and the measure is given by a point mass on each item).
        \item Lebesgue-measurable: $\mu$ is the Lebesgue measure on $\RR^d$, $\mathcal{M}$ is the Lebesgue $\sigma$-algebra and $\Theta$ is a (Lebesgue-)measurable subset of $\RR^d$ with positive finite measure. For example, $\Theta$ can be a compact subset of $\RR^d$ with a nonempty interior.
        \item Countably infinite: $\Theta = 
        \mathbb{N}$ and $\mu(A) = \sum_{a\in A} \mu(a)$ for any $A\subseteq \mathbb{N}$, where $\mu(\mathbb{N}) < 0$. For example, $\mu(a)$ can be the probability mass of a Poisson distribution, in which case $(\mathbb{N}, \mathcal{M}, \mu)$ is a probability space.
    \end{enumerate}
    \item The \emph{supplies} of items is  $ s \in L^\infty_+$, i.e., item $\theta\in \Theta$ has supply $s(\theta)$. 
    Since $\Theta$ is compact, it is measurable with a finite measure. For the finite case $\Theta = [m]$, we have $s = (s_1, \dots, s_m) \in \RR^m_+$. 
    \item The \emph{valuation} of each buyer $i$ on all items is $v_i \in L^1_+$, i.e., buyer $i$ has valuation $v_i(\theta)$ on item $\theta\in \Theta$. For the finite case $\Theta = [m]$, we have $v_i = (v_{i1}, \dots, v_{im}) \in \RR^m_+$.
    \item For buyer $i$, an \emph{allocation} of items $x_i \in L^\infty_+$ gives a utility of 
    \[ u_i(x_i) := \langle v_i, x_i \rangle := \int_\Theta v_i(\theta) x_i(\theta) d\theta,\] 
    where the angle brackets are based on the notation of applying a bounded linear functional $x_i$ to a vector $v_i$ in the Banach space $L^1$ and the integral is the usual Lebesgue integral. For the finite case $\Theta = [m]$, we have $x_i = (x_{i1}, \dots, x_{im}) \in \RR^m_+$ and the utility is 
    \[  u_i(x_i) = \langle v_i, x_i \rangle = \sum_j v_{ij} x_{ij},\] 
    the usual Euclidean vector inner product. We will use $x\in (L^\infty_+)^n$ to denote the aggregate allocation of items to all buyers, i.e., the concatenation of all buyers' allocations. 
    \item The \emph{prices} of items are modeled as $p\in L^1_+$; in other words, the price of item $\theta\in \Theta$ is $p(\theta)$. For the finite case $\Theta = [m]$, we have $p = (p_1, \dots, p_m)\in \RR^m_+$.
    \item For a measurable item subset $A\subseteq \Theta$, let $v_i(A) := \int_A v_i(\theta) d\theta$ (and similarly for $p(A)$ and $s(A)$), the $v_i$-induced measure of $A$. For the finite case $\Theta=[m]$, for any item subset $A\subset [m]$, $v_i(A) = \sum_{j\in A} v_{ij}$ (and similarly for $p(A)$ and $s(A)$). 
    \item Without loss of generality, we assume a unit total budget $\|B\|_1 = 1$, a unit total supply $s(\Theta) = 1$ and normalized buyer valuations $\langle v_i, s\rangle = 1$. In other words, all items have a total value of $1$ for every buyer.
    \label{item:static-fm-normalization}
\end{enumerate}

\begin{defn}
Given item prices $p \in L^1_+$, the \textbf{demand} of buyer $i$ is its set of utility-maximizing allocations given the prices and budget:
\[ D_i (p) := \argmax \{ \langle v_i, x_i \rangle : x_i \in L^\infty_+,\, \langle p, x_i\rangle \leq B_i \}.\] 
The associated \textbf{utility level} $\hat{U}_i (p)$ is defined as the value of $\langle v_i, x_i\rangle$ for any $x_i \in D_i(p)$.
\label{defn:demand-set}
\end{defn}
\begin{defn}
A \textbf{market equilibrium (ME)} is an allocation-price pair $(x^*, p^*) \in (L^\infty_+)^n \times L^1_+$ such that the following holds.
\begin{enumerate}[(i)]
    \item Supply feasibility: $\sum_i x^*_i \leq s$. 
    \item Buyer optimality: $x^*_i \in D_i (p^*)$ for all $i$.
    \item Market clearance: $\langle p^*, s - \sum_i x^*_i \rangle = 0$ (any item with a positive price is fully allocated).
\end{enumerate}
\label{defn:me-static}
\end{defn}
In the above definition and subsequently, all equations involving measurable functions are understood as ``holding almost everywhere.'' For example, $\sum_i x_i \leq s$ means the (measurable) set $\{\theta \in \Theta: \sum_i x_i(\theta) \leq s(\theta) \}$ has the same measure as $\Theta$.
Given a ME $(x^*, p^*)$, we often denote the (unique) equilibrium utilities as $u^*_i = \langle v_i, x^*_i \rangle$.
For a finite-dimensional linear Fisher market, it is well known that a ME can be computed via solving the EG convex program. 
Recently, \cite{gao2020infinite} generalized this framework to handle the case of an infinite item space. 
More specifically, consider the following (possibly infinite-dimensional) convex programs.
\begin{align}
    \sup_{x\in (L^\infty_+)^n} \sum_i B_i \log \langle v_i, x_i \rangle\ \ {\rm s.t.}\ \sum_i x_i \leq s. \tag{$\mathcal P_{\rm EG}$} \label{eq:eg-primal}
\end{align}
\begin{align}
    \begin{split}
    \inf_{p\in L^1_+,\, \beta\in \RR^n_+} \left( \langle p, s \rangle - \sum_i B_i \log \beta_i \right) \ \ {\rm s.t.} \ p \geq \beta_i v_i,\ \forall\,i. 
    \end{split} \tag{$\mathcal D_{\rm EG}$}
    \label{eq:eg-dual-beta-p}
\end{align}
The following theorem summarizes the results in \cite[\S 3]{gao2020infinite} regarding the above convex programs capturing market equilibria.
As shown in that work, the above convex programs satisfy strong duality and their optimal solutions (which correspond to ME) can be characterized by the KKT optimality conditions. We slightly generalize the assumptions of \cite{gao2020infinite} by allowing non-uniform item supplies $s$ instead of $s(\theta)=1$ for all $\theta\in \Theta$.
For completeness, a proof, which is mainly based on the proofs of the results in \cite[\S 3]{gao2020infinite}, can be found in the Appendix. 
\begin{theorem}
    The following hold regarding \eqref{eq:eg-primal} and \eqref{eq:eg-dual-beta-p}.
    \begin{itemize}
        \item Both suprema are attained.
        \item Given $x^*$ feasible to \eqref{eq:eg-primal} and $(p^*, \beta^*)$ feasible to \eqref{eq:eg-dual-beta-p}, they are both optimal if and only if the following holds: (i) $\langle p^*, s - \sum_i x^*_i \rangle = 0$ (market clearance), (ii) $ \langle p^* - \beta^*_i v_i, x^*_i \rangle = 0$ (buyer $i$ only receives items within its `winning set' $\{ p^* = \beta^*_i v_i \}$) (ii) and $\langle v_i, x^*_i \rangle = u^*_i := B_i / \beta^*_i$ (buyer $i$ gets its maximum possible utility from $x^*_i$). 
        In this case, $(x^*, p^*)$ is a ME. 
        \item Conversely, for a ME $(x^*, p^*)$, it holds that (i) $x^*$ is an optimal solution of \eqref{eq:eg-primal} and (ii) $(p^*, \beta^*)$, where $\beta^*_i := B_i / \langle v_i, x^*_i \rangle$, is an optimal solution of \eqref{eq:eg-dual-beta-p}.
    \end{itemize}
    \label{eq:thm:eg-capture-me}
\end{theorem}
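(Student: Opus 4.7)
The plan is to reduce to the uniform-supply setting of Gao 2020 and then adapt their arguments. Define the weighted reference measure $d\nu := s\,d\mu$, which is a probability measure on $\Theta$ by the normalization $s(\Theta)=1$. On $\{s>0\}$ the substitution $y_i := x_i/s$ converts the primal into $\sup \sum_i B_i \log \langle v_i, y_i\rangle_\nu$ subject to $\sum_i y_i \le 1$ $\nu$-a.e., with $v_i \in L^1(\nu)$ (the set $\{s=0\}$ contributes nothing to any integral); the dual reparametrizes analogously with $p$ viewed as an $L^1(\nu)$-density. This is exactly the setting of Gao 2020 \S 3. For \emph{primal attainment}, the feasible set sits inside $\{0 \le x_i \le s\}$, which is weak-star compact by Banach--Alaoglu, the objective is weak-star upper semicontinuous since $v_i \in L^1$, and it is bounded above because $\langle v_i, x_i\rangle \le \langle v_i, s\rangle = 1$. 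For \emph{strong duality}, I would introduce the multiplier $p \in L^1_+$ for $\sum_i x_i \le s$; decoupling over buyers, the inner supremum $\sup_{x_i \ge 0}\, B_i \log \langle v_i, x_i\rangle - \langle p, x_i\rangle$ is finite exactly when $p \ge \beta_i v_i$ for some $\beta_i \ge 0$, and substituting the first-order relation $\beta_i = B_i/\langle v_i, x_i\rangle$ recovers \eqref{eq:eg-dual-beta-p}. A Slater-type point (e.g., $x_i = B_i s$) together with concavity then yields zero duality gap and dual attainment.

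With strong duality and attainment in hand, optimality of a feasible pair $(x^*, (p^*, \beta^*))$ is equivalent to complementary slackness, which unwinds into the three stated conditions: (i) $\langle p^*, s - \sum_i x^*_i\rangle = 0$ from the supply constraint; (ii) $\langle p^* - \beta^*_i v_i, x^*_i\rangle = 0$ from the pricing constraint $p \ge \beta_i v_i$; and (iii) the first-order identity $\langle v_i, x^*_i\rangle = B_i/\beta^*_i$. To close the loop with the ME definition, buyer optimality $x^*_i \in D_i(p^*)$ follows from (ii) and (iii): any $x_i \ge 0$ with $\langle p^*, x_i\rangle \le B_i$ satisfies $\beta^*_i \langle v_i, x_i\rangle \le \langle p^*, x_i\rangle \le B_i$, hence $\langle v_i, x_i\rangle \le B_i/\beta^*_i = \langle v_i, x^*_i\rangle$. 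Conversely, given a ME $(x^*, p^*)$, setting $\beta^*_i := B_i/\langle v_i, x^*_i\rangle$ turns the ME conditions into primal feasibility, dual feasibility, and (i)--(iii), which by the equivalence above certify joint optimality.

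The main obstacle I expect is the infinite-dimensional topological bookkeeping: verifying weak-star upper semicontinuity and a Slater-type constraint qualification carefully enough for strong duality to apply, and ensuring the dual optimum actually lies in $L^1$ rather than escaping into the space of finite signed measures. Both were handled by Gao 2020 under uniform supply $s\equiv 1$; the change-of-measure reduction described above is the one new ingredient required to extend their arguments to an arbitrary $s \in L^\infty_+$ that need not be bounded below away from zero, so once the reduction is in place the remaining work is a careful translation rather than a new proof idea.
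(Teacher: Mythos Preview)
Your proposal is correct and follows essentially the same approach as the paper: reduce the non-uniform-supply case to the uniform case handled by Gao~2020 via the substitution $x_i \mapsto x_i/s$ (you frame it as a change of measure $d\nu = s\,d\mu$, the paper equivalently rescales the valuations to $\tilde v_i = v_i s$ while keeping $\mu$), then write out the weak-duality chain to read off the three KKT conditions and connect them to the ME definition. The only cosmetic difference is that for primal attainment you invoke Banach--Alaoglu and weak-star upper semicontinuity directly, whereas the paper goes through compactness of the attainable-utility set via Dvoretzky et al.; both routes are standard and both ultimately defer the infinite-dimensional strong-duality bookkeeping to Gao~2020.
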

In the above theorem, $\beta^*_i$ is known as buyer $i$'s \emph{utility price}, i.e., price per unit utility at equilibrium. As is well known, in a ME $(x^*, p^*)$, the allocations $x^*$ are 
\begin{itemize}
    \item Pareto optimal,
    \item envy-free (in a budget weighted sense, i.e., $\langle v_i, x^*_i \rangle / B_i \geq \langle v_i, x^*_k \rangle / B_k$ for all $k\neq i$), 
    \item proportional (i.e., $\langle v_i, x^*_i \rangle \geq \langle v_i, s\rangle / n = 1/n$); see, e.g.,\cite[Theorem 3]{gao2020infinite}.
\end{itemize}
\paragraph{Online Fisher markets and equilibria.} We now consider a simple online variant of the Fisher market setting, referred to as an online Fisher market (OFM). There are $n$ buyers, each with a valuation $v_i \in L^1_+$.
Assume there are discrete time steps $t=1,2, \dots$.
At each time step $t$, an item $\theta_t$ arrives and each buyer $i$ sees a value $v_i(\theta_t)$. The item must be allocated irrevocably to one buyer.
Each buyer $i$ has a budget $B_i > 0$ representing her per-period expenditure rate.
\footnote{This assumption is similar to one made in the literature on budget management in auctions, where each buyers has a per-period expenditure rate and the overall budget equal to the rate times the number of time periods. If a hard budget cap across all time periods is desired, then PACE and similar mechanisms may deplete some buyers' budgets close to the end of the horizon~\citep{balseiro2015repeated,balseiro2017budget,balseiro2019learning}.
}
Next, we introduce the notions of demand, utility level and online market equilibrium in an OFM. All of them are defined based on sequences of arrived items and their prices; they do not require any distributional assumption on the item arrivals. 
\begin{defn} 
    Let the arrived items be $(\theta_\tau)_{\tau\in [t]}$.  
    An allocation (of arrived items) is $(x^\tau_i)_{(\tau,i)\in [t]\times [n]}$, where $x^\tau_i \in [0,1]$ is the fraction of the item $\theta_\tau$ allocated to buyer $i$.\footnote{We allow fractional allocations in the definition for more generality. As we will see, fractional allocation is not needed: PACE generates allocations and prices that satisfy the OME conditions asymptotically via assigning each arrived item to one buyer.}
    Let the prices of the arrived items be $(p^\tau(\theta_\tau))_{\tau \in [t]}$. 
    The \emph{demand} of each buyer $i$ (in hindsight) at time $t$ is
    \begin{align}
        D^t_i = \argmax_{ (z^\tau_i)_{\tau \in [t]}} \left\{ \frac{1}{t}\sum_{\tau=1}^t v_i(\theta_\tau) z^\tau_i: 0 \leq z^\tau_i \leq 1,\, \forall\, \tau,\ \frac{1}{t}\sum_{\tau=1}^t p^\tau(\theta_\tau) z^\tau_i \leq B_i \right\}. \label{eq:def-online-demand-set}
    \end{align}
    Let $\hat{U}^t_i$ be the \emph{utility level} associated with this demand, i.e., the maximum value in \eqref{eq:def-online-demand-set}. 
    An \textbf{online market equilibrium (OME)} is a pair of allocations $(x^\tau_i)_{(\tau,i)\in [t]\times [n]}$ and prices $p^\tau(\theta_\tau)$ such that the following holds. 
    \begin{enumerate}[(i)]
        \item Total allocation does not exceed the unit amount of the item $\sum_i x^\tau_i \leq 1$ for all $\tau$.
        \item Buyers' realized allocations are optimal in hindsight: $(x^\tau_i)_{\tau \in [t]} \in D^t_i$ for all $i$.
        \item Market clearance: $\sum_i x^\tau_i = 1$ for $\tau$ such that $p^\tau(\theta_\tau) > 0$. 
    \end{enumerate}
    \label{defn:ofm-demand-ulevel-ome}
\end{defn}
In words, $\hat{U}^t_i$ is the maximum possible (time-averaged) utility buyer $i$ could have attained via choosing from the arrived items $(\theta_\tau)_{\tau \in [t]}$ in hindsight, subject to their respective posted prices $(p^\tau(\theta_\tau))_{\tau \in [t]}$ and her current total budget $t B_i$, with $D^t_i$ being the set of such utility-maximizing (time-indexed) allocations subject to per-period item availability constraints. An OME is a pair of allocations and prices that make buyers optimal in hindsight and market cleared.

Given an OFM, we define the associated underlying static Fisher market as having the same $n$ buyers and an item space $\Theta$ with supply $s$ being the (unknown) distribution from which the arriving items $\theta_t$ are drawn.
To clarify the concepts of OFM and OME, we consider some simple special cases.
\begin{itemize}
    \item Suppose all item arrivals $\theta_1, \dots, \theta_t$ are known in advance. 
    Then, the OFM is the same as a static $n\times t$ Fisher market with the same buyers and the $t$ items, each having a unit supply. Here, buyer $i$'s valuation of item $\tau$ is $v_{i\tau} = v_i(\theta_\tau)$. To compute an OME, it suffices to solve the classical (finite-dimensional) Eisenberg-Gale convex program, that is, \eqref{eq:eg-primal} with $\Theta = [t]$, $s = (1, \dots, 1) \in \RR^t_+$ and $x\in \RR_+^{n\times t}$. Let the static ME be $(x^*, p^*) \in \RR_+^{n\times t} \times \RR_{++}^t$. When each item $\theta_\tau$ arrives, OME allocates a fraction $x^*_{i\tau}$ of the item to each buyer $i$ and set its price as $p^*_\tau$.
    \item Suppose the sequentially arriving items are drawn i.i.d. from a known underlying distribution $s\in L^\infty_+$ (which specifies a random variable $\theta \sim s$ such that $\PP[\theta\in A] = s(A)$ for any measurable set $A\subseteq \Theta$) and all buyers' valuations $v_i$ are known. 
    Suppose we have also computed a static ME $(x^*, p^*)$ (Definition~
    \ref{defn:me-static}) of a market with buyer valuations $v_i$, budgets $B_i$ and item supplies being the distribution $s$ (the \emph{underlying static market}).
    Then, when a new item $\theta_t$ (which is drawn from the distribution $s$) arrives at time $t$, set its price as $p^*(\theta_t)$ and allocate a fraction $x^*_i(\theta_t) / s(\theta_t)$ of it to each buyer $i$ (assume $s(\theta_t) > 0$, i.e., only items with positive supplies can appear).
    Then, the time-averaged utility of each buyer $i$ is 
    $ \frac{1}{t} \sum_{\tau=1}^t v_i(\theta_t) x^*_i(\theta_t) / s(\theta_t)$,
    which converges to 
    \[ \EE_{\theta \sim s} [ v_i(\theta) x_i(\theta) / s(\theta) ] = \int_\Theta v_i(\theta) x^*_i(\theta) d\theta = u^*_i \ {\rm a.s.}\] by to the Strong Law of Large Numbers.
    Since the online process is carried out using static equilibrium prices and allocations, the static ME properties (Definition~\ref{defn:me-static}) ensure the required OME properties hold asymptotically.
\end{itemize}
The above special cases require full knowledge of either the exact future item arrivals or the underlying static market to attain an OME. 
Next, we propose a simple, distributed dynamics which generates allocations and prices that satisfy the OME conditions asymptotically \textbf{without} requiring such knowledge (in particular, without knowledge of the distribution $s$).

\section{The PACE Dynamics} \label{sec:pace-dynamics}
In this section, we introduce the \textbf{PACE} (Pacing According to Current Estimated utility) dynamics that prices and allocates sequentially arriving items via (i) maintaining a \emph{pacing multiplier} for each buyer and (ii) simple, distributed updates.\footnote{Pacing and pacing multipliers are terminology in budget management in large-scale ad auctions \cite{conitzer2018multiplicative,conitzer2019pacing}.} 
In \S\ref{sec:conv-analysis-pace}, we will show that PACE is an instantiation of dual averaging \cite{xiao2010dual}, a stochastic first-order method for regularized optimization, applied to a reformulation of \eqref{eq:eg-dual-beta-p}.

In the PACE dynamics, each buyer maintains a pacing multiplier $\beta^t_i$ (with an initial value $\beta^0_i = (B_i + 1)/2$, or any value in $[B_i, 1]$). At time step $t$, the following events take place.
\begin{enumerate}[(a)]
\item An item $\theta_t$ appears and each buyer $i$ sees a value $v_i(\theta_t)$ for the item.
\item \label{item:pace-buyer-bid-step} Each buyer $i$ bids their paced value $\beta^t_i v_i(\theta_t)$ for the item.
\item \label{item:pace-pricing-and-pay-step} The item is allocated to the highest bidder (the \emph{winner} at $t$): 
$i_t = \argmax_i \beta^t_i v_i(\theta_t)$, 
with ties broken arbitrarily. 
For concreteness, we always choose the lowest winning index, i.e., 
\[ i_t = \min \argmax_i \beta^t_i v_i(\theta_t).\] 
Then, the price of $\theta_t$ is set by the first-price rule
\[ p^t(\theta_t) = \max_i \beta^t_i v_i(\theta_t) = \beta^t_{i_t} v_i(\theta_t)\] 
and the winner $i_t$ pays this price $p^t(\theta_t)$ for the item $\theta_t$.
\item Each buyer $i$ gets a utility 
\[ u^t_i = v_i(\theta_t) \ind\{ i = i_t \}. \] 
In other words, the winner $i_t$ gets $v_{i_t}(\theta_t)$ and other buyers get zero. 
\item Each buyer $i$ updates its cumulative average utility $\bar{u}^t_i$:
 \[\bar{u}^t_i = \frac{1}{t}\sum_{\tau=1}^t u^\tau_i = \frac{t-1}{t}\bar{u}^{t-1}_i + \frac{1}{t}u^t_i. \]
\item Each buyer $i$ updates their pacing multiplier $\beta^{t+1}_i$ as follows:
\[ \beta^{t+1}_i = \Pi_{ [l_i, h_i] } ( B_i / \bar{u}^t_i) := \min \{ \max\{l_i, B_i / \bar{u}^t_i \}, h_i \}.\] 
where $l_i = B_i / (1+\delta_0)$ and $h_i = 1+\delta_0$ for some fixed $\delta_0 > 0$ (e.g., $\delta_0 = 0.05$). \label{item:pace-update-beta-step}  
\end{enumerate}
As will be seen in \S \ref{sec:conv-analysis-pace}, buyer $i$'s equilibrium pacing multiplier (utility price) satisfies $l_i < \beta^*_i < h_i$ and her per-period utility $u^t_i$ corresponds to the $i$th component of a stochastic subgradient of a function on $\beta$ in a reformulation of the convex program \eqref{eq:eg-dual-beta-p}, on which we run dual averaging.
Furthermore, the update rule for $\beta_i^{t+1}$ is such that, if the realized utilities $\bar{u}^t_i$ were the true static equilibrium utility for buyer $i$, then $\beta_i^{t+1}$ would be the equilibrium multiplier. Note that PACE does not randomize (any randomness can only come from the market environment from which item arrivals are drawn) and assigns every item to a single buyer without splitting it. 

The simplicity and distributed nature of PACE makes it desirable for large-scale practical use. 
\begin{itemize}
    \item It can be run on arbitrary sequential item arrivals and only requires buyers' valuations $v_i(\theta_t)$ on the arrived items (rather than all valuations $v_i$ over the potentially large item space). 
    No parameter tuning is needed (in particular, no stepsize tuning as in many first-order optimization methods).
    \item When run as a \emph{centralized} allocation mechanism, PACE only needs to maintain $O(n)$ scalars, namely, $\beta^t_i$, $B_i$ and $\bar{u}^t_i$ for all $i$. At time $t$, it observes buyers' valuations $v_i(\theta_t)$ of the item $\theta_t$, compute bids $\beta^t_i v_i(\theta_t)$, finds the winner $i_t$, set the price as the maximal bid $\beta^t_{i_t} v_{i_t}(\theta_t)$ and allocates the item to the winner; finally, it updates $\bar{u}^t$ and $\beta^{t+1}$ as in \ref{item:pace-update-beta-step}, which takes $O(n)$ time.
    \item PACE can also be run among the buyers in a \emph{decentralized} manner, in which case each buyer only maintains two scalar values: the pacing multiplier $\beta^t_i$ and time-averaged utility $\bar{u}^t_i$.
    When a new item arrives, each buyer only performs a few simple arithmetic operations to create a bid $\beta^t_i v_i(\theta_t)$, receives her utility (if she wins) and subsequently updates $\bar{u}^t_i$ and $\beta^{t+1}_i$. 
\end{itemize}
These make PACE suitable for Internet-scale online fair division and online Fisher market applications. 
In particular, it is very reminiscent of how Internet advertising auctions are run. 
There, a similar auction-based system is used, with the pacing multiplier ensuring that each advertiser smooths out their budget expenditure across the many auctions. 
The primary difference between this and our setting is that (i) the auction can be first-price or second-price and (ii) buyers usually have \emph{quasilinear} utilities, that is, utility of the item minus the expenditure (price paid) \cite{conitzer2018multiplicative,balseiro2015repeated,balseiro2017budget,balseiro2019learning}. 
In \S\ref{app:ql}, we extend PACE to quasilinear utilities, which provides a novel online algorithm for first-price pacing equilibrium computation \cite{conitzer2019pacing}.
\section{Dual Averaging} \label{sec:da-general}
In this section, we briefly recap the setup and general convergence results of \emph{dual averaging} \cite{xiao2010dual,nesterov2009primal}, which will be used in the analysis of PACE. 
First, we introduce some notation for this and subsequent sections. Let $\mathbf{e}^{(i)}$ denote the $i$'th unit basis vector in $\RR^n$ and $\mathbf{1}\in \RR^n$ denote the vector of $1$'s. 
For $x, y\in \RR^n$, $[x,y]$ denotes the Cartesian product of intervals $\prod_{i=1}^n [x_k, y_k] \subseteq \RR^n$. 
All norms $\|\cdot \|$ without a subscript are Euclidean $2$-norms, unless otherwise stated. 

Let $\Psi$ be a closed convex function with domain $\dom \Psi := \{ w\in \RR^n: \Psi(w)<\infty \}$. Let $Z \subseteq \RR^d$ be an arbitrary sample space.
For each $z\in Z$, let $f_z$ be a convex and subdifferentiable function on $\dom \Psi$. Considers the following regularized convex optimization problem \cite[\S 1.1]{xiao2010dual}:
\begin{align}
    \min_w \EE f_z(w) + \Psi(w), \label{eq:da-stoch-opt-std-form}
\end{align}
where the expectation is taken over a probability distribution $\mathcal{D}$ on $Z$. 
A more general online optimization setting, as described in \cite[\S1.2]{xiao2010dual} is as follows.
At each time $t=1, 2, 3, \dots$, we must choose an action $w^t$ before a new, unknown convex loss function $f_t$ arrives, which incurs a loss $f_t(w^t)$ (a special case is i.i.d. sampled functions, i.e., $f_t = f_{z_t}$, where $z_t$ are i.i.d. samples drawn from $\mathcal{D}$). 
The goal is to minimize \emph{regret} when comparing our sequence of actions $w^1,w^2,\ldots$ to any \emph{fixed} action $w$. 
Here, the regret against $w$ is defined as 
\[ R_t(w) := \sum_{\tau=1}^t \left( f_\tau(w^\tau) + \Psi(w^\tau) \right) - \sum_{\tau=1}^t (f_\tau(w) + \Psi(w))\] 
and the overall (maximal) regret up to time $t$ is $R_t = \max_w R_t(w)$.
We assume access to an oracle that, given any $f_t$ and $w\in \dom \Psi$, returns a subgradient $g^t \in \partial f_t(w)$. 
The dual averaging algorithm (DA) \cite[Algorithm 1]{xiao2010dual} is as follows. First, set $w_1 \in \dom \Psi$ and $\bar{g}^0 = 0$. Then, for each $t=1,2,\dots$, DA performs the following steps:
\begin{enumerate}[(1)]
    \item Observe $f_t$ and compute $g^t \in \partial f_t(w^t)$. \label{item:da-compute-g(t)}
    \item Update the average subgradient (the \emph{dual average}) via $\bar{g}^t = \frac{t-1}{t}\bar{g}^{t-1} + \frac{1}{t}\bar{g}^t$. \label{item:da-update-g-bar}
    \item Compute the next iterate $w^{t+1} = \argmin_w \{ \langle \bar{g}^t, w\rangle + \Psi(w) \}$. \label{item:da-update-w}
\end{enumerate}
Here, we do not employ any auxiliary regularizing function, since our problem has a natural source of strong convexity (i.e., a strongly convex $\Psi$) through the $-B_i \log \beta_i$ terms in \eqref{eq:eg-dual-beta-p}.
The following convergence guarantee on DA is proved as part of the proof of Corollary 4 in \cite{xiao2010dual}.
\begin{theorem} \label{thm:general-conv}
    Dual averaging generates iterates $w^t$ such that 
    \[ \EE \|w^t - w^*\|^2 \leq \frac{(6+\log t) G^2}{t \sigma^2}, \]
    where $G^2$ is an upper bound on $\EE \|g^t\|^2$, $t=1,2,\dots$ and $\sigma$ is the strong convexity modulus of $\Psi$.
\end{theorem}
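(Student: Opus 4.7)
The plan is to exploit strong convexity of $\Psi$ together with a concentration bound on the dual average $\bar g^t$. The key observation is that the DA update $w^{t+1}=\argmin_w\{\langle \bar g^t,w\rangle+\Psi(w)\}$ gives the KKT condition $-\bar g^t\in\partial\Psi(w^{t+1})$, while the optimality of $w^*$ for $\min\{F+\Psi\}$ (where $F=\EE f_z$) gives $-g^*\in\partial\Psi(w^*)$ for some $g^*\in\partial F(w^*)$. Because $\Psi$ is $\sigma$-strongly convex, $\partial\Psi$ is $\sigma$-strongly monotone, so
\[
\sigma\|w^{t+1}-w^*\|^2 \;\le\; \langle \bar g^t - g^*,\,w^*-w^{t+1}\rangle,
\]
and Cauchy--Schwarz then yields $\sigma^2\,\EE\|w^{t+1}-w^*\|^2 \le \EE\|\bar g^t - g^*\|^2$. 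This reduces the last-iterate distance bound to a mean-square bound on the dual average.

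To control $\EE\|\bar g^t - g^*\|^2$, I would decompose $\bar g^t - g^* = \tfrac{1}{t}\sum_\tau (g^\tau - g_*^\tau) + \tfrac{1}{t}\sum_\tau (g_*^\tau - g^*)$, where $g_*^\tau := \EE[g^\tau \mid \mathcal F_{\tau-1}]\in\partial F(w^\tau)$. The first (martingale-difference) sum has conditionally independent terms with $\|g^\tau-g_*^\tau\|\le 2G$, so $\EE\|\tfrac{1}{t}\sum_\tau(g^\tau-g_*^\tau)\|^2 \le \tfrac{1}{t^2}\sum_\tau \EE\|g^\tau-g_*^\tau\|^2 \le 4G^2/t$. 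The second (bias) sum couples the estimate to the trajectory: using subgradient monotonicity together with the standard regret-based bound $\sum_\tau \EE\|w^\tau-w^*\|^2 \le O(G^2\log t/\sigma^2)$---itself obtained by pairing Xiao's Theorem~2 regret bound $R_t\le O(G^2\log t/\sigma)$ with $\sigma$-strong convexity of $\phi=F+\Psi$---one controls the bias term at the rate $O(G^2\log t/t)$.

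Assembling the two pieces gives $\EE\|\bar g^t-g^*\|^2 \le O(G^2\log t/t)$, and dividing by $\sigma^2$ yields the claimed $\EE\|w^t-w^*\|^2 \le (6+\log t)G^2/(t\sigma^2)$, with the explicit constant $6+\log t$ obtained by tracking the harmonic sum $\sum_{\tau\le t}1/\tau \le 1+\log t$ in the regret bound and bookkeeping the additive $O(1)$ terms.

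The main obstacle is the second step: the bias sum $\tfrac{1}{t}\sum_\tau(g_*^\tau - g^*)$ depends on $\|w^\tau - w^*\|$ through a subgradient, and without assuming smoothness of $F$ one cannot pointwise control $\|g_*^\tau - g^*\|$ by $\|w^\tau - w^*\|$. Resolving this requires a self-referential (bootstrapping) argument where a crude preliminary bound on the cumulative distances is substituted back to tighten the bias estimate, iterated until the rate stabilizes. This is the classical inductive closing embedded in Xiao's proof of Corollary~4 and is precisely the step responsible for producing a nontrivial $\log t$ factor rather than the useless $O(1)$ bound coming from a one-shot Young-inequality argument.
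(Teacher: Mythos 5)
The paper does not prove Theorem~\ref{thm:general-conv}; it cites the proof of Corollary~4 in \cite{xiao2010dual} verbatim, so the only meaningful comparison is against Xiao's argument. Your opening reduction is correct: from the optimality conditions $-\bar g^t\in\partial\Psi(w^{t+1})$ and $-g^*\in\partial\Psi(w^*)$, strong monotonicity of $\partial\Psi$ plus Cauchy--Schwarz gives $\sigma^2\,\EE\|w^{t+1}-w^*\|^2 \le \EE\|\bar g^t-g^*\|^2$, and the martingale half of your decomposition is sound (with the sharper constant $\EE\|g^\tau-g_*^\tau\|^2 \le \EE\|g^\tau\|^2 \le G^2$, giving $G^2/t$ rather than $4G^2/t$).

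The bias term is a genuine gap that your sketch does not close. You correctly observe that without smoothness of $F=\EE f_z$ there is no inequality of the form $\|g_*^\tau-g^*\|\le L\|w^\tau-w^*\|$, so a bound on $\sum_\tau\EE\|w^\tau-w^*\|^2$ does not transfer to $\tfrac1t\sum_\tau(g_*^\tau-g^*)$. The appeal to ``self-referential bootstrapping'' does not repair this: the obstruction is not a loose preliminary rate that iteration could tighten, but the absence of any quantitative link between subgradient distance and iterate distance for nonsmooth $F$. However small $\|w^\tau-w^*\|$ is made, $\|g_*^\tau-g^*\|$ can remain $O(1)$ at a kink, and the $F$ arising here (an expectation of pointwise maxima) is precisely nonsmooth. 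The decomposition of $\bar g^t-g^*$ is therefore the wrong channel.

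Xiao's proof bypasses $\bar g^t-g^*$ entirely. The central object is the DA subproblem $\hat\ell_t(w):=\langle\bar g^t,w\rangle+\Psi(w)$, whose unique minimizer is $w^{t+1}$. The FTRL-style stability lemma for dual averaging bounds $\sum_\tau\bigl[\langle g^\tau,w^\tau\rangle+\Psi(w^\tau)\bigr]-t\,\hat\ell_t(w^{t+1})$ by $O\bigl(G^2(\log t)/\sigma\bigr)$; combining this with the fact that $\sum_\tau\bigl[\langle g^\tau,w^\tau\rangle+\Psi(w^\tau)\bigr]-t\,\hat\ell_t(w^*)$ equals the linearized regret against $w^*$, whose expectation is nonnegative in the stochastic setting because $w^*$ minimizes $F+\Psi$, yields $\EE\bigl[\hat\ell_t(w^*)-\hat\ell_t(w^{t+1})\bigr]\le O\bigl(G^2(\log t)/(\sigma t)\bigr)$. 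Finally, $\sigma$-strong convexity of $\hat\ell_t$ at its own minimizer gives $\tfrac{\sigma}{2}\|w^{t+1}-w^*\|^2\le\hat\ell_t(w^*)-\hat\ell_t(w^{t+1})$, producing the $1/\sigma^2$ factor. The two powers of $\sigma^{-1}$ thus arise from the regret bound and from strong convexity of the subproblem, not from strong monotonicity and a concentration bound on the dual average; no estimate on $\|\bar g^t-g^*\|$ is needed anywhere.
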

When solving the stochastic optimization problem \eqref{eq:da-stoch-opt-std-form}, 
in Theorem \ref{thm:general-conv}, we can set $G^2$ to be an upper bound on $ \sup_{w\in \dom \Psi} \EE \|g_z(w)\|^2$, where $g_z(w)$ is a subgradient oracle mapping each $(z,w)\in Z\times \dom \Psi$ to a subgradient and the expectation is over $z\sim \mathcal{D}$ and possible randomness of the subgradient oracle.
We will shortly see that a reformulation of \ref{eq:eg-dual-beta-p}, when cast into the form \eqref{eq:da-stoch-opt-std-form}, exhibits stochastic subgradients that are exactly buyers' received utilities in each time step. Using Theorem \ref{thm:general-conv}, we can show that the sequence of pacing multipliers $\beta^t$ generated by PACE converges to the underlying (equilibrium) utility prices $\beta^*$ of the static Fisher market. 

\section{Convergence analysis of the PACE dynamics} \label{sec:conv-analysis-pace}
We will now show that PACE correspond to running DA on the vector $\beta^t$ of pacing multipliers for the buyers. To this end, we first reformulate \eqref{eq:eg-dual-beta-p} into a (finite-dimensional) convex program in $\beta$ in the form of \eqref{eq:da-stoch-opt-std-form}:
\begin{align}
    \min_\beta \  \left( \langle \max_i \beta_i v_i, s\rangle - \sum_i B_i \log \beta_i\right)  \ {\rm s.t.}\ \beta \in [B/(1+\delta_0), (1+\delta_0)\ones],
    \label{eq:eg-dual-beta-bounds}
\end{align}
where $\delta_0 >0$ is an arbitrarily small constant. The bounds on $\beta$ do not change the optimal solution, because $\beta^*_i \in (B_i, 1)$ for each $i$.
Detailed steps of the reformulation are given in Appendix \ref{app:proofs-and-deriv}.

In order to run DA, we need to compute a subgradient of $f_\theta: \beta\mapsto \max_i \beta_i v_i(\theta)$ at any $\theta\in \Theta$. Following \cite[\S 5]{gao2020infinite}, since $f_\theta$ is a piecewise linear function, a subgradient is 
\[ g_\theta(\beta) := v_{i^*}(\theta) \mathbf{e}^{(i^*)} \in \partial f_\theta(\beta),\] 
where $i^* = \min \argmax_i \beta_i v_i(\theta)$ is the winner (see, e.g., \cite[Theorem 3.50]{beck2017first}).


We can now show that the PACE dynamics corresponds to running DA on \eqref{eq:eg-dual-beta-bounds}.
First, choose an arbitrary $\beta^0 \in (B/(1+\delta_0), (1+\delta_0) \ones)$ and $\bar{g}^0 = 0$. At each time step $t=1, 2, \dots$, given the current pacing multiplier $\beta^t$, DA applied to \eqref{eq:eg-dual-beta-bounds} unrolls the following steps. 
\begin{itemize}
    \item An item $\theta_t$ arrives, having value $v_i(\theta_t)$ for each buyer $i$. The function $f_t$ in DA is 
    \[ f_{\theta_t}: \beta \mapsto \max_i \beta_i v_i(\theta_t).\]
    \item The winner is $i_t = \min \argmax_i \beta^t_i v_i(\theta_t)$ and a  subgradient is 
    \[ g^t = v_{i_t j_t} \mathbf{e}^{(i_t)} \in \partial f_t(\beta^t).\] 
    Its $i$th entry is exactly the realized (single-period) utility of individual $i$ at time $t$ in PACE, that is, $g^t_i = v_i(\theta_t) \ind\{ i=i_t \} = u^t_i$.
    \item Update the dual average (time-averaged utilities): for each $i$, compute $\bar{g}^t = \frac{t-1}{t}\bar{g}^{t-1} + \frac{1}{t}g^t$, i.e., 
    \[ \bar{g}^t_i = \frac{t-1}{t} \bar{g}^{t-1}_i + \frac{1}{t} v_i(\theta_t) \ind\{ i=i_t \}.\]
    \item Update the pacing multipliers:
    \[ \beta^{t+1} = \argmin_{\beta\in [B/(1+\delta_0), (1+\delta_0)\ones]} \left\{ \langle \bar{g}^t, \beta\rangle - \sum_i B_i \log \beta_i \right\}.\]
    The minimization problem is separable in each $i$ and exhibits a simple and explicit solution which recovers step \ref{item:pace-update-beta-step} in PACE (where $\bar{g}^t_i = \bar{u}^t_i$): 
    \[ \beta^{t+1}_i = \argmin_{\beta_i \in  [B/(1+\delta_0), 1+\delta_0]} \left\{\bar{g}^t_i \beta_i - B_i \log \beta_i \right\}  \ \Rightarrow\ \beta^{t+1}_i = \Pi_{ [B_i/(1+\delta_0), 1+\delta_0] } \left(\frac{B_i}{\bar{u}^t_i}\right). \]
\end{itemize}
As mentioned earlier, PACE does not require a stepsize parameter. 
This is because DA is stepsize-free given a strongly convex regularizer $\Psi$, which is indeed the case in our reformulation~\eqref{eq:eg-dual-beta-bounds}.
In addition, in the above update step for $\beta^{t+1}_i$, the directions of change are as follows.
\begin{itemize}
    \item For a non-winner $i\neq i_t$, we have $u^t_i=0$ and hence $\bar{u}^t_i \leq \bar{u}^{t-1}_i$. This implies $\beta^{t+1}_i \geq \beta^t_i$. In words, a non-winner's pacing multiplier weakly increases. The increase is strict if $\bar{u}_i^{t-1} >0$, i.e., buyer $i$ has already received a nonzero utility.
    \item For the winner $i_t$, $\bar{u}^t_{i_t}$ may become greater than $\bar{g}^{t-1}_{i_t}$, in which case $\beta^{t+1}_{i_t} \leq \beta^t_{i_t}$. In words, the winner's pacing multiplier may go up or down.
\end{itemize}

In order to analyze PACE, we assume $v_i(\Theta) = 1$, $v_i \in L^\infty_+$ (normalized and a.e.-bounded valuations)\footnote{The a.e.-boundedness assumption is needed in subsequent convergence analysis. Since $\Theta$ has a finite measure, it holds that $L^\infty_+ \subseteq L^1_+$. For a finite item space $\Theta=[m]$, both are equal to $\RR^m_+$.} and that there is an underlying item distribution $s\in L^\infty_+$ from which the item arrivals $\theta_t$, $t=1,2,\dots$ are drawn i.i.d.\footnote{The distributional assumption on item arrivals (i.e., they are drawn i.i.d. from an unknown distribution $s$) is needed to establish asymptotic equilibrium properties of PACE. See Appendix \ref{app:proofs-and-deriv} for an example that any algorithm can yield arbitrarily suboptimal allocations without such a distributional assumption.}
Define the underlying static Fisher market as one having the same $n$ buyers (each with valuation $v_i$ and budget $B_i$) and item supplies $s$. Denote the equilibrium utilities and utility prices w.r.t. the underlying static market as $u^*$ and $\beta^*$, respectively.
We further assume that the valuations are $v_i \in L^\infty_+$ (i.e., a.e.-bounded on the item space). This is not restrictive: since an individual item $\theta$ has value $v_i(\theta)$ for each buyer $i$, it should be a finite value.

\paragraph{Convergence of pacing multipliers.}
After aligning PACE with DA, the convergence of the pacing multipliers $\beta^t$ follows directly from Theorem~\ref{thm:general-conv}.
\begin{theorem} 
    PACE generates pacing multipliers $\beta^t$, $t=1,2,\dots$ such that
    \[\EE \|\beta^t - \beta^*\|^2 \leq \frac{(6+\log t)G^2}{t\sigma^2}, \]
    where $G^2 = \max_i \EE_{\theta\sim s} [v_i(\theta)^2] \leq \max_i \|v_i\|_\infty^2$, $\sigma = \frac{\min_i B_i}{(1+\delta_0)^2}$.
    \label{thm:conv-beta(t)}
\end{theorem}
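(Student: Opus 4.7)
The plan is to observe that essentially all of the work has been done: the preceding derivation already established that the PACE dynamics is literally the dual averaging algorithm applied to the reformulated dual problem \eqref{eq:eg-dual-beta-bounds}, with stochastic loss $f_{\theta_t}(\beta) = \max_i \beta_i v_i(\theta_t)$, regularizer $\Psi(\beta) = -\sum_i B_i \log \beta_i$ augmented with the indicator of the box $[B/(1+\delta_0), (1+\delta_0)\mathbf{1}]$, and subgradient oracle $g^t = v_{i_t}(\theta_t)\mathbf{e}^{(i_t)}$. By Theorem \ref{eq:thm:eg-capture-me} and the fact that $\beta^*_i \in (B_i, 1)$ for each $i$, the (unique) minimizer of \eqref{eq:eg-dual-beta-bounds} is exactly the static equilibrium utility price $\beta^*$. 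So it suffices to plug into Theorem \ref{thm:general-conv}: we just need to supply (i) the strong-convexity modulus $\sigma$ of $\Psi$ on the feasible box, and (ii) a uniform bound $G^2$ on $\EE\|g^t\|^2$.

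For the strong convexity modulus, the relevant part of $\Psi$ is separable with Hessian $\mathrm{diag}(B_i/\beta_i^2)$. On the constraint box we have $\beta_i \leq 1+\delta_0$, so each diagonal entry is at least $B_i/(1+\delta_0)^2 \geq \min_i B_i/(1+\delta_0)^2$. Hence the quadratic form $\nabla^2\Psi \succeq \sigma I$ with $\sigma = \min_i B_i/(1+\delta_0)^2$, matching the statement.

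For the subgradient bound, since $g^t$ has a single nonzero coordinate we have $\|g^t\|^2 = v_{i_t}(\theta_t)^2$. This is bounded almost surely by $\max_i \|v_i\|_\infty^2$, yielding the second (coarse) form of $G^2$ in the theorem; the tighter expectation form follows by writing $\EE_{\theta\sim s}[v_{i^*(\theta,\beta)}(\theta)^2] = \sum_i \EE_{\theta}[v_i(\theta)^2 \ind\{i^*(\theta,\beta)=i\}]$ and bounding each indicator by~$1$ (absorbing any leftover factor into the constants). With these two constants in hand, substituting into Theorem~\ref{thm:general-conv} gives the claimed $O((\log t)/t)$ bound on $\EE\|\beta^t - \beta^*\|^2$ directly.

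The main (potential) obstacle is entirely notational rather than mathematical: one must be careful that the $\Psi$ implicitly used in the DA bound includes both the log-barrier and the box-constraint indicator, since only the log-barrier contributes strong convexity; and that the oracle subgradient $g^t$ really is an unbiased bounded oracle at $\beta^t$ under the i.i.d.\ sampling of $\theta_t$. Both are immediate from the PACE-as-DA correspondence already verified in \S\ref{sec:conv-analysis-pace}, so no additional analytical machinery is required beyond invoking Theorem~\ref{thm:general-conv}.
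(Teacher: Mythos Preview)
Your proposal is correct and follows the paper's own proof essentially line for line: both simply invoke Theorem~\ref{thm:general-conv} after (i) reading off the strong-convexity modulus $\sigma=\min_i B_i/(1+\delta_0)^2$ from the diagonal Hessian $\mathrm{diag}(B_i/\beta_i^2)$ on the box $[B/(1+\delta_0),(1+\delta_0)\mathbf{1}]$, and (ii) bounding $\EE\|g^t\|^2=\EE[v_{i_t}(\theta_t)^2]$ by $G^2$. Your hedge about ``absorbing any leftover factor'' when passing to the sharper form $G^2=\max_i\EE_{\theta\sim s}[v_i(\theta)^2]$ is the one soft spot, but the paper's proof simply asserts that same inequality without further justification, and in any case the almost-sure bound $\max_i\|v_i\|_\infty^2$ (which is what is actually used downstream) is immediate.
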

In other words, we have mean-square convergence of $\beta^t$ to $\beta^*$ at a $O((\log t)/t)$ rate. 
Since $\|B\|_1 = 1$, we have $ \min_i B_i \leq 1/n$. 
Hence, $\sigma = O(1/n)$ and the constant in the bound is $\Omega( n^2 )$. 
Whether such dependence on $n$ can be improved via new analysis remains an interesting research question.



\paragraph{Convergence of utilities.}
We next show that the time-averaged utility $\bar{u}^t$ (which is equal to the dual average $\bar{g}^t$) converges to the equilibrium utility vector $u^*$ of the underlying Fisher market. 
A key step in the proof is to bound the probability of a projection in updating $\beta^{t+1}_i$, that is, $\PP[ B_i / \bar{u}^t_i \notin [l_i, u_i]]$. 
\begin{theorem} 
    For each $i$, let $\epsilon_i := \min \{ h_i - \beta^*_i, \beta^*_i - l_i \}>0$ be the minimum distance to the endpoints of the pacing-multiplier interval and  $\|v\|_\infty := \max_i \|v_i \|_\infty$. 
    It holds that
    \[ \EE (\bar{u}^t_i - u^*_i)^2 \leq \left( \frac{\|v_i\|_\infty^2}{\epsilon_i^2} + \left(\frac{1+\delta_0}{B_i}\right)^2 \right) \EE (\beta^{t+1}_i - \beta^*_i)^2. \]
    Hence, letting $C = \frac{1}{(\min_i B_i)^2} \left(( \|v\|_\infty / \delta_0 )^2 + (1+\delta_0)^2\right)$, we have
    \[ \EE \|\bar{u}^t - u^* \|^2 \leq C\cdot \frac{(6 + \log (t+1)) G^2}{(t+1) \sigma^2}. \]
    \label{thm:conv-utilities}
\end{theorem}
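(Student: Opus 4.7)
The plan is to prove the per-buyer inequality first, then sum over $i$ and invoke Theorem~\ref{thm:conv-beta(t)} (applied at time $t+1$) for the aggregate bound. The per-buyer bound relates $\bar u^t_i - u^*_i$ to $\beta^{t+1}_i - \beta^*_i$ by using the explicit update $\beta^{t+1}_i = \Pi_{[l_i,h_i]}(B_i/\bar u^t_i)$ together with the KKT identity $u^*_i = B_i/\beta^*_i$. The one subtlety is that the projection may be active, so I split into two cases.

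First I would establish that $\beta^*_i \in [B_i, 1] \subset (l_i, h_i)$, using the normalizations in item~(\ref{item:static-fm-normalization}): the proportional-share property gives $u^*_i \ge B_i \langle v_i, s\rangle / \|B\|_1 = B_i$, while $u^*_i \le \langle v_i, s\rangle = 1$. Hence $\beta^*_i = B_i/u^*_i \in [B_i,1]$, which in particular is strictly inside $[l_i, h_i]$ and justifies $\epsilon_i > 0$.

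Next I decompose on the event $E_t := \{B_i/\bar u^t_i \in [l_i, h_i]\}$. On $E_t$, no clipping occurs, so $\bar u^t_i = B_i/\beta^{t+1}_i$ and
\[
\bar u^t_i - u^*_i \;=\; \frac{B_i\,(\beta^*_i - \beta^{t+1}_i)}{\beta^{t+1}_i\,\beta^*_i}.
\]
Because $\beta^{t+1}_i \ge l_i = B_i/(1+\delta_0)$ and $\beta^*_i \ge B_i$, the denominator is at least $B_i^2/(1+\delta_0)$, which yields
\[
(\bar u^t_i - u^*_i)^2 \,\mathbf{1}_{E_t} \;\le\; \left(\tfrac{1+\delta_0}{B_i}\right)^{\!2} (\beta^{t+1}_i - \beta^*_i)^2.
\]
On $E_t^c$, the projection is active, so $\beta^{t+1}_i \in \{l_i, h_i\}$ and therefore $|\beta^{t+1}_i - \beta^*_i| \ge \epsilon_i$. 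Combined with the crude deterministic bound $|\bar u^t_i - u^*_i| \le \|v_i\|_\infty$ (since $0 \le \bar u^t_i, u^*_i \le \|v_i\|_\infty$), Markov's inequality gives
\[
\EE\!\left[(\bar u^t_i - u^*_i)^2 \mathbf{1}_{E_t^c}\right] \le \|v_i\|_\infty^2\, \PP(|\beta^{t+1}_i - \beta^*_i|\ge\epsilon_i) \le \frac{\|v_i\|_\infty^2}{\epsilon_i^2}\,\EE(\beta^{t+1}_i - \beta^*_i)^2.
\]
Adding the two cases gives the stated per-buyer inequality.

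For the aggregate bound, I sum the per-buyer inequality over $i$ and pull out the worst-case coefficient. Lower-bounding $\epsilon_i$ via $h_i - \beta^*_i \ge \delta_0$ and $\beta^*_i - l_i \ge B_i - B_i/(1+\delta_0) = B_i\delta_0/(1+\delta_0)$, and recalling $\beta^*_i \ge B_i$, replaces $\epsilon_i^{-2}$ by something of order $(1+\delta_0)^2/(B_i^2\delta_0^2)$, so that the prefactor collapses into $C = (\min_i B_i)^{-2}\big((\|v\|_\infty/\delta_0)^2 + (1+\delta_0)^2\big)$ (up to absorbing an extra $(1+\delta_0)^2$ into the constants). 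Finally I apply Theorem~\ref{thm:conv-beta(t)} with $t$ replaced by $t+1$ to bound $\EE\|\beta^{t+1} - \beta^*\|^2 \le (6+\log(t+1))G^2/((t+1)\sigma^2)$, finishing the proof. The only delicate step is the projection analysis: the trick that makes it work is that the projection can only be active when $\beta^{t+1}_i$ is at least $\epsilon_i$ away from $\beta^*_i$, turning a potentially awkward discontinuity into a simple Chebyshev-type tail bound.
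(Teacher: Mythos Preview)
Your proposal is correct and follows essentially the same approach as the paper: decompose on the event that the projection is inactive, use $\bar u^t_i = B_i/\beta^{t+1}_i$ and $u^*_i = B_i/\beta^*_i$ on that event to get the $((1+\delta_0)/B_i)^2$ term, and on the complement use $|\beta^{t+1}_i-\beta^*_i|\ge \epsilon_i$ together with the crude bound $|\bar u^t_i-u^*_i|\le\|v_i\|_\infty$ to get the Chebyshev-type term; then sum over $i$, lower-bound $\epsilon_i\ge B_i\delta_0/(1+\delta_0)$, and apply Theorem~\ref{thm:conv-beta(t)} at time $t{+}1$. The only cosmetic difference is that the paper factors the ``no projection'' term as $(u^*_i)^2\bigl((\beta^{t+1}_i-\beta^*_i)/\beta^{t+1}_i\bigr)^2$ and then uses $u^*_i\le 1$, whereas you bound $\beta^{t+1}_i\beta^*_i$ directly; both yield the same constant.
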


Note that $C = \Omega(n^2 )$. Hence, in this and the next theorems, the constant in the bound is $\Omega(n^4)$, which arises from $C$ and $\sigma = O(1/n)$. 


\paragraph{Convergence of expenditures.} The \emph{expenditure} of buyer $i$ at time step $t$ is
\[ b^t_i = \beta^t_i v_i(\theta_t) \ind\{ i = i_\tau \}.\] 
In other words, only the winner $i_t$ spends a nonzero amount, which is its bid.
Let $\bar{b}^t_i = \frac{1}{t}\sum_{\tau = 1}^t b^\tau_i$ be buyer $i$'s average expenditure. 
Utilizing the above convergence results, 
we show mean-squared convergence of $\bar{b}^t$ to $B$ at an $O((\log t)^2/t)$ rate.
\begin{theorem}
    For each $i$, it holds that 
    \[ \EE (\bar{b}^t_i - B_i) \leq 2\left[ (\beta^*_i)^2 \EE (\bar{g}^t_i - u^*_i)^2 + 2\|v_i\|_\infty^2 \frac{1}{t}\sum_{\tau=1}^t \EE(\beta^\tau_i -\beta^*_i)^2 \right]. \]
    For $t\geq 3$ and the constant $C$ defined in Theorem~\ref{thm:conv-utilities}, we have
    \[ \EE\|\bar{b}^t - B\|^2 \leq \frac{2 G^2}{t\sigma^2} \left( 6(C+\|v\|_\infty^2) + (C + 6 \|v\|_\infty^2) \log t + \frac{\|v\|_\infty^2}{2} (\log t)^2 \right). \]
    \label{thm:conv-expenditures}
\end{theorem}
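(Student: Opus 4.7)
The plan is to first derive the per-buyer bound by decomposing $\bar b^t_i - B_i$ around the equilibrium, then aggregate over $i$ and plug in Theorems~\ref{thm:conv-beta(t)} and~\ref{thm:conv-utilities} together with a harmonic/logarithmic sum estimate. The key algebraic identity is $B_i = \beta^*_i u^*_i$, which follows from $u^*_i = B_i/\beta^*_i$ in Theorem~\ref{eq:thm:eg-capture-me}. Using $b^\tau_i = \beta^\tau_i u^\tau_i$ with $u^\tau_i = v_i(\theta_\tau)\ind\{i=i_\tau\}$, I would write
\[
\bar b^t_i - B_i \;=\; \frac{1}{t}\sum_{\tau=1}^t (\beta^\tau_i - \beta^*_i)\,u^\tau_i \;+\; \beta^*_i\bigl(\bar u^t_i - u^*_i\bigr),
\]
which cleanly separates the contribution of the pacing-multiplier error from the utility error.

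Next I would square both sides and apply $(a+b)^2 \le 2a^2+2b^2$, then use Jensen's inequality $\bigl(\tfrac{1}{t}\sum a_\tau\bigr)^2 \le \tfrac{1}{t}\sum a_\tau^2$ on the first term, followed by the almost-sure bound $u^\tau_i \le v_i(\theta_\tau)\le \|v_i\|_\infty$. Taking expectations delivers the first displayed inequality of the theorem (up to the absolute constant in front of $\|v_i\|_\infty^2$, which comes from the specific splitting chosen). The equality $\bar g^t_i = \bar u^t_i$ from the PACE--DA correspondence in Section~\ref{sec:conv-analysis-pace} identifies the utility term with the dual-average term.

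For the aggregated bound, I would sum over $i$, using $\beta^*_i \le 1$ so that $\sum_i (\beta^*_i)^2 \EE(\bar u^t_i - u^*_i)^2 \le \EE\|\bar u^t - u^*\|^2$, and $\sum_i \|v_i\|_\infty^2 \le n\|v\|_\infty^2$ absorbed into $\|v\|_\infty^2$ (or carry $\max_i$ through). Theorem~\ref{thm:conv-utilities} then bounds the utility term by $C\cdot(6+\log(t+1))G^2/((t+1)\sigma^2)$, while Theorem~\ref{thm:conv-beta(t)} yields
\[
\frac{1}{t}\sum_{\tau=1}^t \EE\|\beta^\tau-\beta^*\|^2 \;\le\; \frac{G^2}{t\sigma^2}\sum_{\tau=1}^t \frac{6+\log \tau}{\tau}.
\]

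The one step that needs a little care is the logarithmic sum: for $t\ge 3$, $\sum_{\tau=1}^t 1/\tau \le 1+\log t$ and, using monotonicity of $x\mapsto (\log x)/x$ on $[e,\infty)$, $\sum_{\tau=1}^t (\log\tau)/\tau \le \tfrac{1}{2}(\log t)^2 + O(1)$, so the entire sum is $\le 6(1+\log t)+\tfrac{1}{2}(\log t)^2$. Plugging in and collecting the coefficients of $1$, $\log t$, and $(\log t)^2$ produces exactly the stated bound with constants $6(C+\|v\|_\infty^2)$, $(C+6\|v\|_\infty^2)$, and $\|v\|_\infty^2/2$. The main obstacle is purely bookkeeping: keeping the constants tight while combining the $(\log(t+1))/(t+1)$ rate for utilities with the $(\log t)^2/t$ rate from averaging the pacing-multiplier errors, and ensuring that the loose bound $\log(t+1)/(t+1) \le (6+\log t)/t$ (valid for $t\ge 3$, explaining the theorem's hypothesis) is applied so that everything lives over a common denominator $t\sigma^2$.
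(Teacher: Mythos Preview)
Your proposal is correct and follows essentially the same route as the paper: the identical decomposition $\bar b^t_i - B_i = \beta^*_i(\bar g^t_i - u^*_i) + \tfrac{1}{t}\sum_\tau(\beta^\tau_i-\beta^*_i)u^\tau_i$, the same $(a+b)^2\le 2a^2+2b^2$ plus Jensen step, and the same logarithmic-sum estimate (which the paper simply cites from \cite{xiao2010dual}). In the aggregation step, use your parenthetical ``carry $\max_i$ through'' option---replace $\|v_i\|_\infty^2$ by $\|v\|_\infty^2$ \emph{before} summing so that $\sum_i\EE(\beta^\tau_i-\beta^*_i)^2=\EE\|\beta^\tau-\beta^*\|^2$ with no extra factor of $n$; this is exactly what the paper does.
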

\paragraph{PACE attains OME asymptotically. } 
Next, we show that PACE attains OME asymptotically, i.e., it generates allocations and prices that make buyers \emph{no-regret} and \emph{envy-free} in the limit (these notions will be clarified shortly).
Let $x^t_i := \ind\{ i = i_t \}$ denote whether buyer $i_t$ is the winner (i.e., whether she is allocated the item $\theta_t$ at time $t$)
Utilizing Theorems~\ref{thm:conv-utilities}~and~\ref{thm:conv-expenditures}, we can show that buyer $i$'s regret, that is, the difference between the maximum possible utility in hindsight $\hat{U}^t_i$ (Definition~\ref{defn:ofm-demand-ulevel-ome}) and the realized utility $\bar{u}^t_i$, vanishes as $t$ grows. The same holds for each buyer's envy. In other words, at a large $t$, in hindsight, no buyer prefers another buyer's set of allocated items (up to a vanishing error).\footnote{In a static market, given an allocation $x\in \RR^{n\times m}_+$, the (maximum, budget-weighted) \emph{envy} of buyer $i$ toward others' bundles is $\rho_i(x) = \max_k \langle v_i, x_k\rangle / B_k - \langle v_i, x_i \rangle / B_i$ (see, e.g., \citep{varian1974equity,budish2011combinatorial}). 
It is well-known that $\rho_i(x^*) = 0$ for all $i$ at equilibrium, a consequence of buyer optimality (Definition~\ref{defn:me-static}).}
\begin{theorem}
    Denote 
    \[ \xi^t_i = |\bar{u}^t_i - u^*_i|, \ \Delta^t_i = | \bar{b}^t_i - B_i |, \ \gamma_t = \frac{\|v\|_\infty}{t} \sum_{\tau=1}^t \|\beta^\tau - \beta^*\|_\infty.\] 
    Let $r^t_i := \max\{\hat{U}^t_i - \bar{u}^t_i , 0\}$ be the \emph{regret} of buyer $i$ at time $t$. 
    Then, it holds that
    \[ r^t_i \leq \xi^t_i + \gamma_t / B_i,\  \EE (r^t_i)^2 = O\left((\log t)^2 / t \right).\]
    Furthermore, let the \emph{envy} of buyer $i$ (w.r.t. all other buyers) at time $t$ be 
    \[ \rho^t_i = \max_k \bar{u}^t_{ik} / B_k - \bar{u}^t_i/B_i,\] 
    where 
    $ \bar{u}^t_{ik} = \frac{1}{t}\sum_{\tau=1}^t v_i(\theta_\tau) x^\tau_k$ is buyer $i$'s time-averaged utility given her own valuations and of buyer $k$'s allocations.
    Denote 
    \[ \eta^t_i = \frac{1}{t}\sum_{\tau=1}^t (p^*(\theta_t) -\beta^\tau_i v_i(\theta_t))x^t_i.\] 
    It holds that
    \begin{align*}
        \rho^t_i \leq \frac{1}{B_i} \left( \xi^t_i + \max_{k\neq i} \frac{\Delta^t_k + \eta^t_k}{B_k} \right) \ \ \text{and} \ \
        \EE (\eta^t_i)^2 \leq \frac{\|v\|_\infty^2 G^2}{ t \sigma^2} \left( 6(1+\log t) + \frac{(\log t)^2}{2} \right).
   \end{align*}
   Hence, $\EE(\rho^t_i)^2 = O\left( (\log t)^2/t\right)$. 

    \label{thm:pace-conv-to-OME-no-regret}
\end{theorem}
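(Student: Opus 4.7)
The plan is to convert the online hindsight quantities into approximations of their static-equilibrium counterparts using the KKT inequality $\beta^*_i v_i(\theta)\leq p^*(\theta)$ from Theorem~\ref{eq:thm:eg-capture-me}, and then feed the residuals (which all reduce to products of $\|\beta^\tau-\beta^*\|_\infty$ with bounded quantities) into the mean-square convergence rates of Theorems~\ref{thm:conv-beta(t)}, \ref{thm:conv-utilities}, and \ref{thm:conv-expenditures}.

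For the regret bound, first establish the deterministic in-hindsight inequality $\hat{U}^t_i \leq u^*_i + \gamma_t/B_i$. Given any $(z^\tau_i)$ feasible in~\eqref{eq:def-online-demand-set}, static KKT gives $v_i(\theta_\tau) \leq p^*(\theta_\tau)/\beta^*_i$, so
\[ \frac{1}{t}\sum_{\tau=1}^t v_i(\theta_\tau) z^\tau_i \leq \frac{1}{\beta^*_i}\left(\frac{1}{t}\sum_{\tau=1}^t p^\tau(\theta_\tau) z^\tau_i + \frac{1}{t}\sum_{\tau=1}^t (p^*(\theta_\tau)-p^\tau(\theta_\tau)) z^\tau_i\right) \leq \frac{B_i}{\beta^*_i} + \frac{\gamma_t}{\beta^*_i},\]
where the first piece uses the hindsight budget constraint and the second uses $z^\tau_i \in [0,1]$ together with $|\max_k a_k - \max_k b_k|\leq\max_k|a_k-b_k|$, so that $|p^*(\theta_\tau)-p^\tau(\theta_\tau)| \leq \|v\|_\infty \|\beta^\tau-\beta^*\|_\infty$. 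Using $\beta^*_i \geq B_i$ (equivalent to $u^*_i \leq 1$) and then $u^*_i - \bar{u}^t_i \leq \xi^t_i$ gives $r^t_i \leq \xi^t_i + \gamma_t/B_i$. Squaring via $(a+b)^2\leq 2a^2+2b^2$, substituting Theorem~\ref{thm:conv-utilities} for $\EE(\xi^t_i)^2$, and applying Jensen to get $\EE \gamma_t^2 \leq \|v\|_\infty^2 \cdot \frac{1}{t}\sum_{\tau=1}^t \EE\|\beta^\tau-\beta^*\|^2$ together with Theorem~\ref{thm:conv-beta(t)} and $\sum_{\tau=1}^t \tfrac{6+\log\tau}{\tau} = O((\log t)^2)$ yields $\EE(r^t_i)^2 = O((\log t)^2/t)$.

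For the envy, apply the same KKT step to $\bar{u}^t_{ik}$ and decompose $p^*(\theta_\tau) = p^\tau(\theta_\tau) + (p^*(\theta_\tau)-p^\tau(\theta_\tau))$ inside the sum. The crucial observation is that whenever $x^\tau_k=1$ buyer $k$ is the first-price winner, so $p^\tau(\theta_\tau) = \beta^\tau_k v_k(\theta_\tau)$; this collapses the first piece to $\bar{b}^t_k$ and the second to $\eta^t_k$:
\[ \bar{u}^t_{ik} \leq \frac{1}{\beta^*_i}\left(\frac{1}{t}\sum_\tau \beta^\tau_k v_k(\theta_\tau) x^\tau_k + \frac{1}{t}\sum_\tau (p^*(\theta_\tau)-\beta^\tau_k v_k(\theta_\tau)) x^\tau_k \right) = \frac{\bar{b}^t_k + \eta^t_k}{\beta^*_i}.\]
Dividing by $B_k$, using $\bar{b}^t_k \leq B_k + \Delta^t_k$ and $\beta^*_i \geq B_i$, then subtracting the lower bound $\bar{u}^t_i/B_i \geq 1/\beta^*_i - \xi^t_i/B_i$ produces the stated deterministic envy inequality (the $k=i$ case gives $\rho^t_i\geq 0$ trivially). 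The variance bound on $\eta^t_i$ then follows because the winner identity yields the pointwise inequality $|\eta^t_i| \leq \frac{1}{t}\sum_\tau|p^*(\theta_\tau)-p^\tau(\theta_\tau)| \leq \gamma_t$, and the same $\gamma_t$-calculation from the regret step recovers the explicit $6(1+\log t)+(\log t)^2/2$ constant. Finally, squaring the envy inequality, replacing $\max_{k\neq i}$ by $\sum_{k\neq i}$, applying $(a+b)^2\leq 2(a^2+b^2)$, and invoking Theorems~\ref{thm:conv-utilities} and~\ref{thm:conv-expenditures} together with the $\EE(\eta^t_k)^2$ bound yields $\EE(\rho^t_i)^2 = O((\log t)^2/t)$.

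The main obstacle is the envy decomposition: the cleanness of the bound $\bar{u}^t_{ik} \leq (\bar{b}^t_k + \eta^t_k)/\beta^*_i$ hinges on recognizing that the winner identity at $x^\tau_k=1$ is exactly what converts the KKT residual into the two controllable pieces $\bar{b}^t_k$ and $\eta^t_k$, and that $\eta^t_k$ can in turn be bounded pointwise by $\gamma_t$. Once this decomposition is in place alongside the analogous hindsight-demand bound, the rest is routine squaring, Jensen, and summing the $O(\log\tau/\tau)$ rates from Theorem~\ref{thm:conv-beta(t)} to obtain the characteristic $O((\log t)^2/t)$ term.
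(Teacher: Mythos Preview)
Your proposal is correct and follows essentially the same route as the paper's proof: both use the KKT inequality $p^*\geq\beta^*_i v_i$ from Theorem~\ref{eq:thm:eg-capture-me}, the Lipschitz bound $|p^*(\theta)-p^\tau(\theta)|\leq\|v\|_\infty\|\beta^\tau-\beta^*\|_\infty$ on prices, and then feed the residuals into Theorems~\ref{thm:conv-beta(t)}--\ref{thm:conv-expenditures} together with the harmonic sum $\sum_{\tau\leq t}(6+\log\tau)/\tau \leq 6(1+\log t)+\tfrac12(\log t)^2$. Your bound $|\eta^t_i|\leq\gamma_t$ via the winner identity $\beta^\tau_i v_i(\theta_\tau)x^\tau_i=p^\tau(\theta_\tau)x^\tau_i$ is in fact slightly more direct than the paper's, which routes through $\sum_\ell|\eta^t_\ell|$ first, but the two coincide and the remainder of the argument is the same.
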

In light of Definition~\ref{defn:ofm-demand-ulevel-ome}, Theorem~\ref{thm:pace-conv-to-OME-no-regret} shows that $(x^\tau_i)_{(i,\tau)\in [n]\times [t]}$ is approximately optimal for buyer $i$. Since PACE also clears the market, we conclude that it attains OME asymptotically.
Recall that theorem~\ref{thm:conv-utilities} ensures that buyers' $\bar{u}^t_i$ converge to their static equilibrium utilities $u^*_i$. 
Since the latter satisfy Pareto optimality and proportional share guarantee ($u^*_i \geq B_i$ for all $i$), so are the time-averaged realized utilities in the limit. Together with Theorem~\ref{thm:pace-conv-to-OME-no-regret}, we conclude that PACE achieves the said fairness and efficiency guarantees, namely, Pareto optimality, envy-freeness and proportional-share guarantee, asymptotically.

\section{Experiments} \label{sec:experiments}

We evaluate the PACE dynamics in several real and synthetic datasets, namely, MovieLens, Household Items and an infinite-dimensional market instance with item space $\Theta = [0,1]$ and $v_i$ being linear functions on $[0,1]$. 
For the first two datasets, see \cite{kroer2019computing} for more information and exploratory data analysis. 
For all datasets, we consider the CEEI (fair division) setting where $B_i = 1/n$ for all $i$. For each dataset (with number of buyers $n=1500,2876,100$, respectively), we run PACE for $T=10n$ time steps (iterations).
More details on the experiments and additional plots displaying convergence of expenditures can be found in Appendix~\ref{app:experiment-details}.
Figure~\ref{fig:plot-3-datasets} displays the mean values of the average and maximum relative errors of the pacing multipliers and time-averaged cumulative utilities over $10$ repeated experiments with different seeds (relative errors of cumulative spending w.r.t. total budgets are plotted separately in Appendix~\ref{app:experiment-details}). 
The standard errors are also displayed as vertical bars but are very small and nearly invisible.
Vertical dotted lines indicate $t=10n$ The figures do not show the initial iterates $t=1, \dots, 5n$.

We see that PACE converges very quickly numerically: within $10$ epochs ($10n$ time steps) average deviations in most quantities falls within $5 \%$ of the equilibrium quantity, with the worst case not far behind. An important point is that budget spend takes much longer to converge than utility. This demonstrates an important practical difference for using PACE in an allocation scenario where budgets are `real money' (e.g. Internet ad impressions) as compared to a CEEI-like setting, where budgets are faux currency only used for fair division.

\section{Conclusion}
We introduced the concept of an online Fisher market and proposed the PACE dynamics. We showed that when items arrive sequentially and stochastically, PACE converges to equilibrium outcomes of the underlying market model. 
Furthermore, we showed that, as a consequence of this, PACE can be used in online fair division problems to generate an online allocation that, asymptotically, achieves the compelling fairness properties of CEEI.

Many questions remain for future research. We mostly focused on the case where budgets are faux currency and there are many open questions for adapting PACE to a real-money budget-management setting as well as more complicated nonlinear utility models. Another imperative question, especially for practitioners, is whether PACE guarantees some level of incentive-compatibility.

\begin{center}
    \begin{figure}
        \includegraphics[scale=0.30]{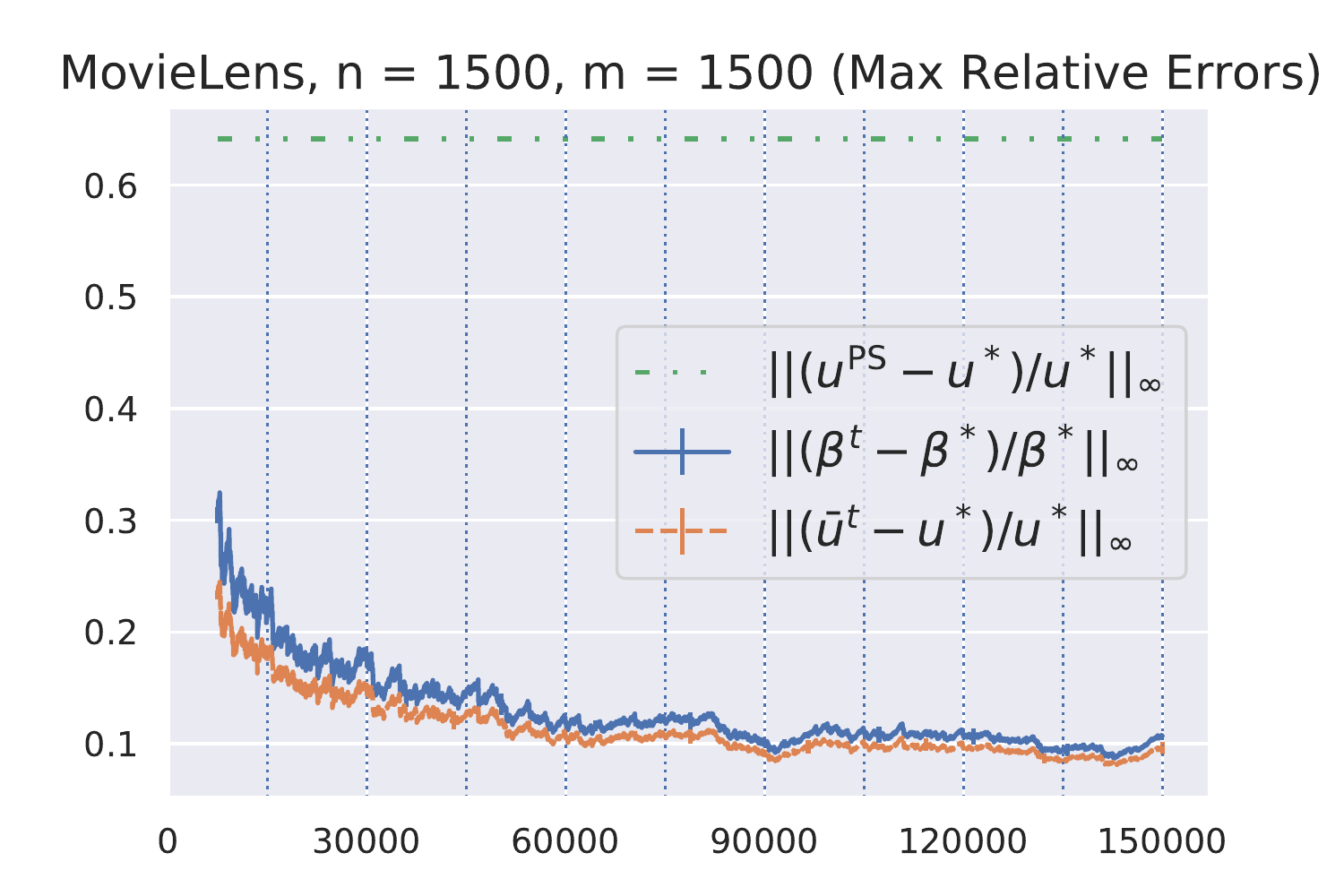} \includegraphics[scale=0.30]{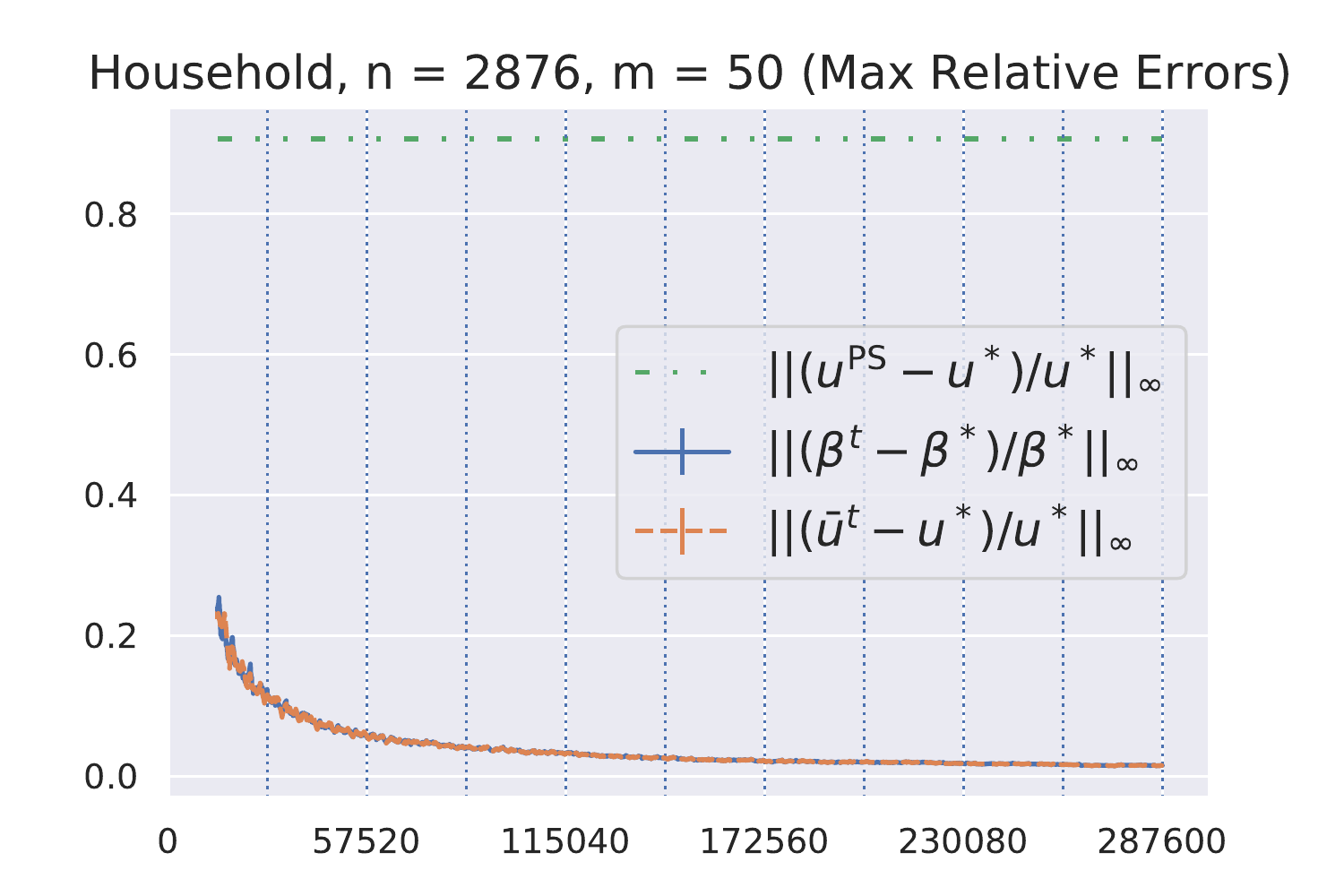} \includegraphics[scale=0.30]{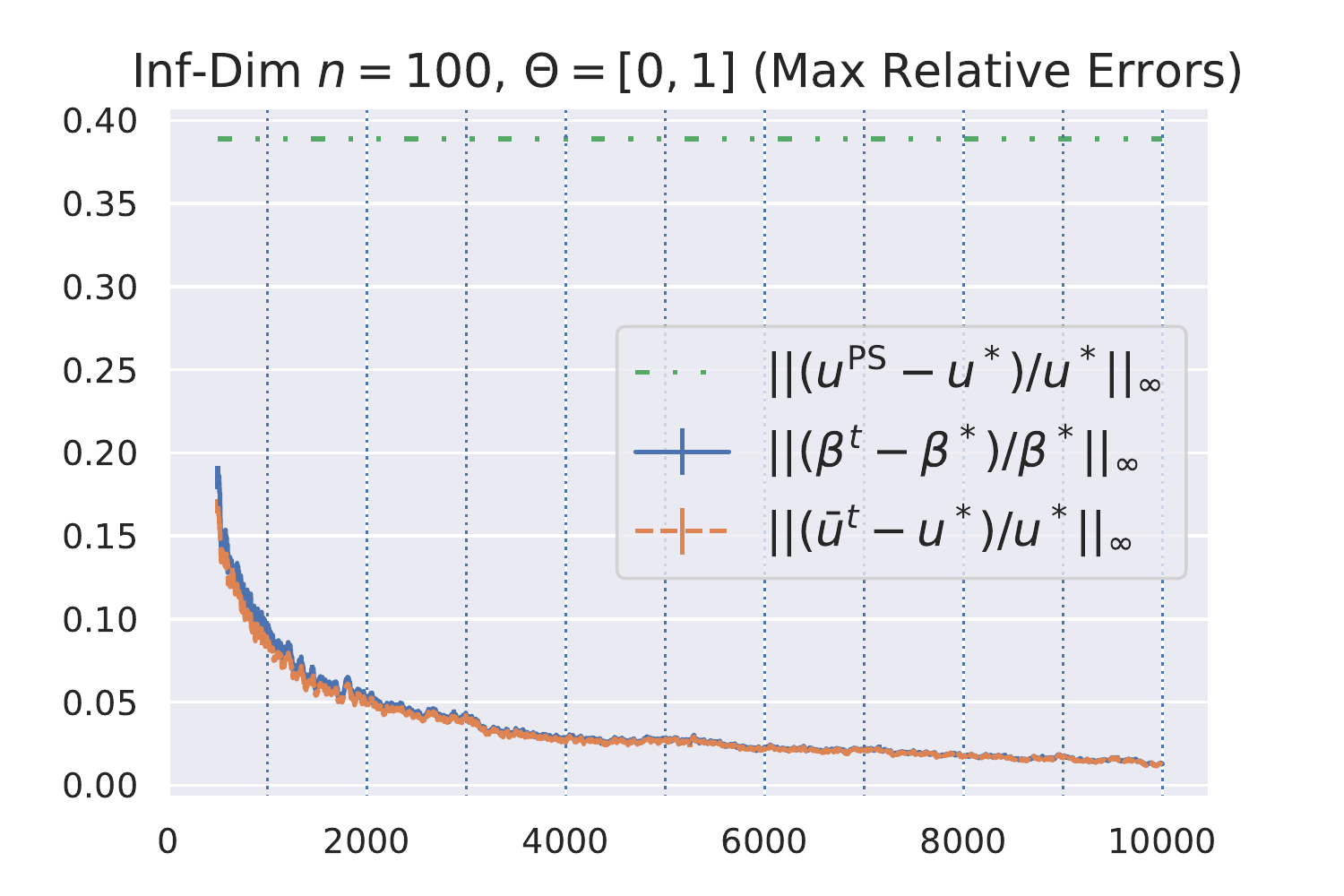} 

        \includegraphics[scale=0.30]{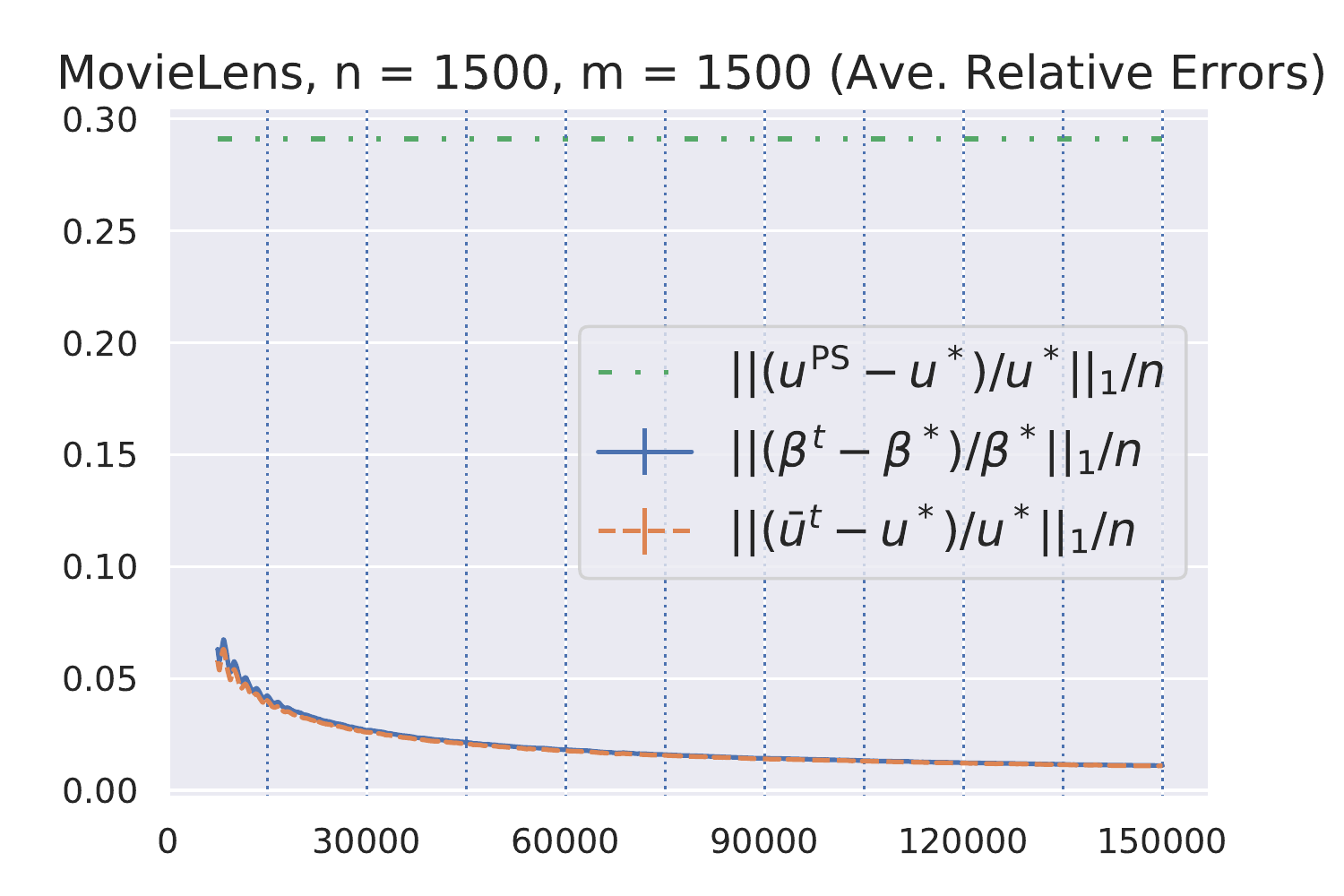} \includegraphics[scale=0.30]{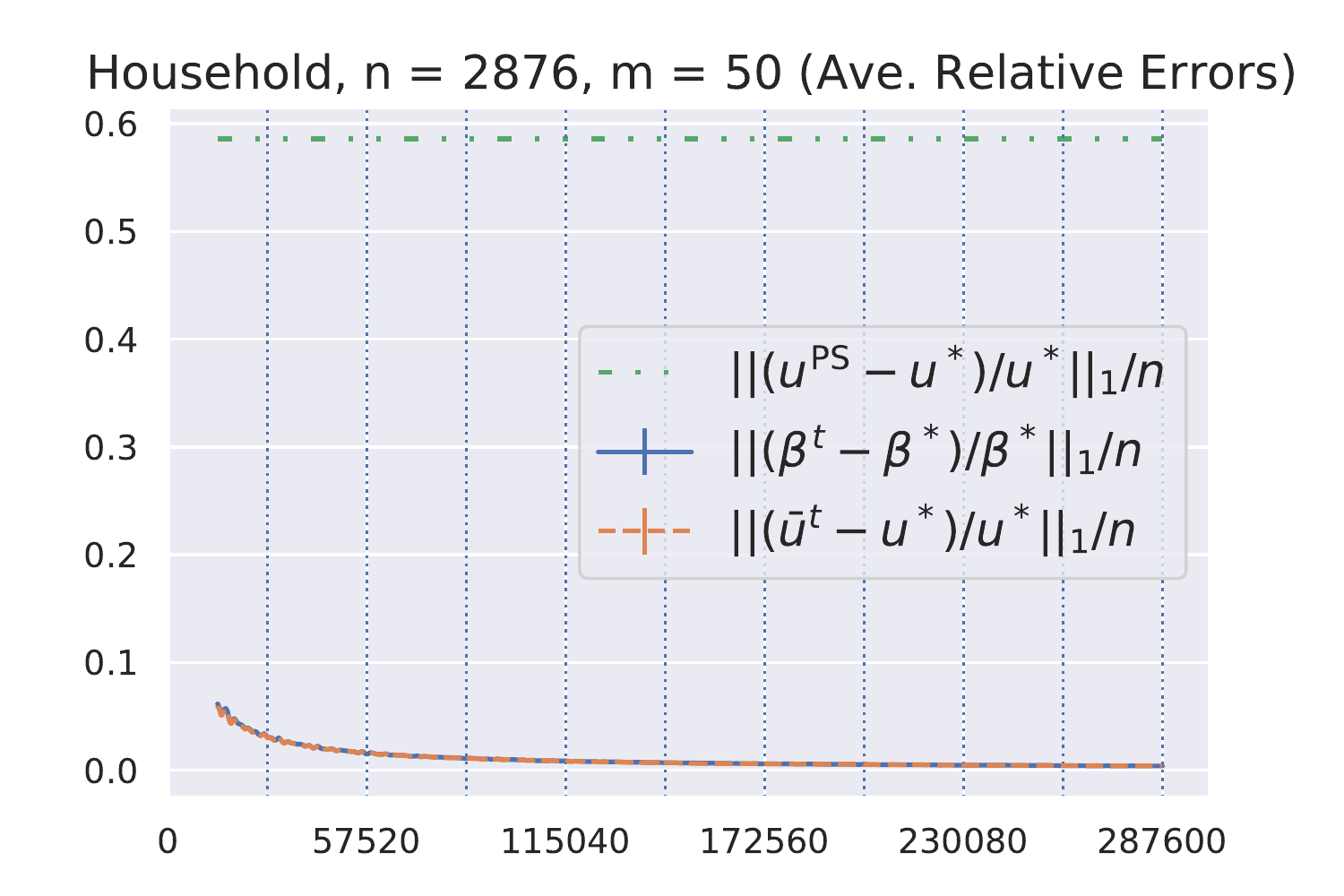} \includegraphics[scale=0.30]{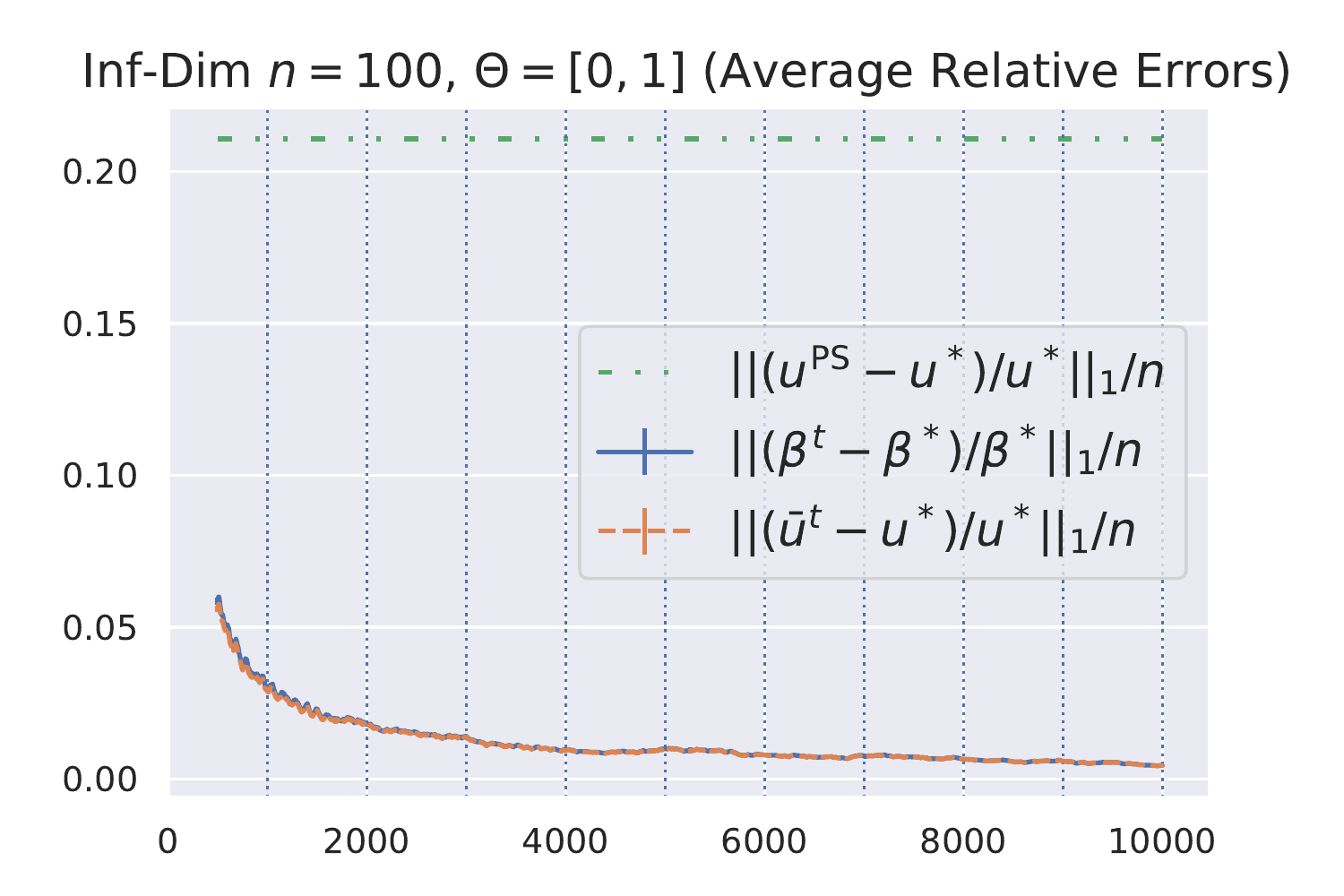}
        \caption{In all of our markets, iterates of the PACE dynamics quickly converges to their static equilibrium values both in the average case and the worst-off-buyer case. 
        The horizontal line shows the fraction of $u^*$ achieved by the proportional share solution.
        The PACE utilities quickly outperform the proportional share utilities. Vertical lines indicate when $t$ is a multiple of $10 n$. }
        \label{fig:plot-3-datasets}
    \end{figure}
\end{center}

\newpage
\bibliographystyle{abbrvnat}
\bibliography{refs.bib}

\newpage
\appendix
\section{Related Work} \label{app:related-work}

The problem of market equilibrium computation has been of interest in economics for a long time (see, e.g., \citet{nisan2007algorithmic}). There is a large literature focusing on computation of equilibrium in the specific case of (finite-dimensional) Fisher markets through various convex optimization formulations~\citep{eisenberg1959consensus,shmyrev2009algorithm,cole2017convex,kroer2019computing} and gradient-based methods~\citep{birnbaum2011distributed,nesterov2018computation,gao2020first}. Other works extend these results to settings such as quasilinear utilities, capped utilities, indivisible items, or imperfectly specified utility functions~\citep{cole2017convex,cole2018approximating,caragiannis2016unreasonable,murray2020robust,kroer2019scalable,peysakhovich2019fair}. One of the most well-known algorithm for computing static market equilibria under \emph{Constant Elasticity of Substitution} (CES) utilities is the \emph{Proportional Response} (PR) dynamics \citet{zhang2011proportional,birnbaum2011distributed}.
Recently, \cite{gao2020infinite} extends the classical Fisher market model to a measurable (possibly continuous) item space and shows that infinite-dimensional EG-type convex programs capture ME under this setting.
Our work extends these ideas to a Fisher market-like scenario where items arrive sequentially.

The Fisher market literature above focuses on divisible items or randomized allocations of indivisible items. There is also a large literature on fair allocation of indivisible items (e.g. \citet{aziz2015fair,caragiannis2016unreasonable,plaut2020almost}) including approximate ME-based methods \citep{budish2011combinatorial,othman2016complexity}. 
We note that all allocations in our setting are discrete and the relationship to Fisher markets happens in the time-average sense.

Perhaps most similar to our setting is that of \cite{azar2016allocate}, who study how to allocate allocate items in an online fashion in order to obtain a market-equilibrium-like allocation. However, they consider competitive ratios, and give a primal-dual algorithm that suffers at most a logarithmic loss compared to the best hindsight optimal solution, even for worst-case arrivals. In addition to the lack of asymptotic convergence, they also only show guarantees on various (arithmetic, geometric, harmonic) averages of the utilities.
In contrast to this, our work considers stochastic arrivals, and gives an adaptive algorithm for asymptotically achieving all the desireable market equilibrium properties (e.g. no envy, Pareto optimality, equilibrium utilities). Another important difference is that our approach is easily implemented as a distributed dynamics that requires only a first-price auction allocation mechanism with indivisible allocations, in $O(n)$ time per item arrival ($n$ being the number of buyers). At each time step, the algorithm only uses buyers' valuations of the current arriving item. This makes our approach suitable for implementation in large-scale systems with a huge, possibly infinite item space.

Other methods for online fair division have been studied by various authors. 
In this literature, there are various notions of ``online:'' either buyers, items, or both can arrive online. Here we survey only related work where items arrive online.
\cite{aleksandrov2015online} studies a simple mechanism where agents can declare if they like an item, and then a coin is flipped to determine which of the agents that liked the item will get it. 
\cite{kash2014no} studies online allocation for \emph{Leontief} utilities (where each agent wants a bundle with items of fixed proportions) and shows how to achieve various properties for this setting.
\cite{cheung2019tracing} studies an evolving market environment and shows that the PR dynamics generates iterates that are close to the (changing) equilibrium. Similar to the classical PR dynamics, in each time step, all buyers' valuations of all items are known to the algorithm. 
In contrast, our work allows an \emph{unknown} underlying market from which items are sampled: in each time step, PACE only uses buyers' valuations of the current arriving item.
\cite{bateni2018fair} studies an online fair allocation problem and proposes a stochastic approximation scheme, which relies on frequently resolving the EG convex program, that ensures a constant approximation ratio (in terms of a proportional fairness measure) relative to the offline fair allocation.
Very recently, \cite{sinclair2021sequential} studies the problem of online fair division in which there is a fixed (finite) number of divisible items and sequentially arriving agents. The authors show that envy-freeness and (Pareto) efficiency cannot be minimized simultaneously; instead, there exists a boundary such that any algorithm can only possibly achieve the envy-efficiency combinations on one side of it, while they also propose such an algorithm. See also \cite{aleksandrov2020online} for a survey of further works in this area.

The idea of pacing has been studied in the context of budget management in second-price auctions. The work most related to our work is \cite{balseiro2019learning}, which studies an online version of that setting.
\cite{balseiro2019learning} and our work are very different in terms of problem settings, results, and analysis.
Here, we point out some key differences.
\cite{balseiro2019learning} first show that, in the typical setting of a single bidder interacting with second-price auctions where values and prices are drawn from a stochastic environment, an adaptive pacing strategy achieves $O(1/\sqrt{T})$ regret and ergodic convergence of the bidder's pacing multipliers. Then, under additional assumptions on monotonicity of the bidders' expected expenditures, the authors establish \emph{game-theoretic equilibrium properties} when all bidders use the same strategy, i.e., under the ``simultaneous learning'' setting. 
In contrast, we focus on \emph{market equilibrium properties} such as fairness and efficiency. Furthermore, in terms of technical assumptions and convergence results, our PACE dynamics is stepsize-free, both in theory and numerically, while the adaptive pacing algorithm in \cite{balseiro2019learning} requires careful stepsizing rules; in our setting, we have last-iterate convergence of pacing multipliers (Theorem~\ref{thm:ql-conv-beta(t)}), whereas \cite{balseiro2019learning} only establishes ergodic convergence without a rate. These differences are, fundamentally, due to the use of different first-order optimization methods, and the fact that we leverage strong convexity of the EG dual. 
The analysis in \cite{balseiro2019learning} builds upon the convergence properties of mirror descent and stochastic approximation. 
Hence, it requires pre-determined vanishing stepsizes (their Assumption 1). In the simultaneous learning setting, the stepsizes of all bidders furthermore need to be carefully selected in joint fashion (their Assumption 3). 
In contrast, PACE is completely stepsize free, thanks to structure of the reformulated dual EG convex program \eqref{eq:eg-dual-beta-bounds}, on which dual averaging can be applied directly without any stepsize parameter or an auxiliary regularization term.
This makes it vastly easier to apply PACE in practice.

\section{Proofs, Derivations and Examples} \label{app:proofs-and-deriv}

\subsection*{Proof of Theorem~\ref{eq:thm:eg-capture-me}} 
In \cite[\S 3]{gao2020infinite}, it is assumed that item supplies are uniform, i.e., $s(\theta) = 1$ for all $\theta \in \Theta$. 
As is well-known in the finite case ($\Theta = [m]$), this assumption is w.l.o.g. when studying a static Fisher market.
Here, we show that all results in \cite[\S 3]{gao2020infinite} can be easily generalized to the case of non-uniform supplies $s$ (Theorem~\ref{eq:thm:eg-capture-me}). For any market instance $M$ with buyer valuations $v_i \in L^1_+$, budgets $B_i$, $i\in [n]$ and item supplies $s\in L^\infty_+$ (all normalized as described in \ref{item:static-fm-normalization} in \S\ref{sec:fisher-markets}), consider another market instance $\tilde{M}$ with supplies being the constant function taking $1$ on $\Theta$ (denoted as $\mathbf{1}$), valuations
\[ \tilde{v}_i(\theta) = v_i(\theta) s(\theta) \]
and the same budgets. First, note that $\tilde{v}_i\in L^1_+$ since $s\in L^\infty_+$ and $\Theta$ has a finite measure ($\|s\|_\infty = \inf\{ M: |s| \leq M\, \almeve \}$):
\[ \int_\Theta \tilde{v}_i(\theta) d\theta \leq \|s\|_\infty \int_\Theta v_i(\theta)d\theta = \|s\|_\infty v_i(\Theta) = \|s\|_\infty. \]
Denote the set of feasible allocations of $M$ as $F$, that is, the set of $(x_i)$ such that $x_i \in L^\infty_+$ for all $i$ and $\sum_i x_i \leq s$. Similarly, denote the set of feasible allocations of $\tilde{M}$ as $\tilde{F}$.
For any $x\in F$, consider
\begin{align}
    \tilde{x}_i(\theta) = \begin{cases}
        x_i(\theta) / s(\theta) & {\rm if}\ \theta>0, \\
        0 & {\rm o.w.}
        \end{cases} \label{eq:xtilde-from-x}
\end{align}
Since $0\leq x_i \leq s$ (a.e.), we have $0\leq \tilde{x}_i \leq \ones$ (which means $\tilde{x}_i \in L^\infty_+$). Since $\sum_i x_i \leq s$, we have 
\[ \sum_i \tilde{x}_i \leq \ones. \]
Therefore, the set of utilities attainable by allocations in $F$ is the same as the set of utilities attainable by $\tilde{F}$. In other words, 
\[ U = \left\{ u\in \RR^n_+: x\in F,\ \langle v_i, x_i\rangle = u_i,\, i\in [n] \right\} = \tilde{U} = \left\{ \tilde{u}\in \RR^n_+: \tilde{x}\in \tilde{F}, \ \langle \tilde{v}_i, \tilde{x}_i \rangle = \tilde{u}_i \right\}. \]
By \cite[Lemma 1]{gao2020infinite} and its proof there (here, we only need the compactness, not the existence of a pure allocation for any feasible utility vector; hence, invoking \cite[Theorem 1]{dvoretzky1951relations} suffices), $\tilde{U}$ is convex and compact and so is $U$. Hence, the suprema of \eqref{eq:eg-primal} is attained.
Completely analogous to the proof of \cite[Theorem 2]{gao2020infinite}, we can show that both suprema of \eqref{eq:eg-primal} is attained. Furthermore, completely analogous to the proofs of Lemma 3 and Theorem 2 there, we can show that strong duality holds for \eqref{eq:eg-primal} and \eqref{eq:eg-dual-beta-p}. More specifically, for any $x$ feasible to \eqref{eq:eg-primal} and $(p, \beta)$ feasible to \eqref{eq:eg-dual-beta-p}, and any $0 \leq u_i \leq \langle v_i, x_i\rangle$ (i.e., adding auxiliary variable $u_i$ to \eqref{eq:eg-primal}), it holds that
\begin{align*}
    \sum_i B_i \log u_i & \leq \sum_i B_i \log u_i - \sum_i \beta_i (u_i - \langle v_i, x_i\rangle ) - \left \langle p, \sum_i x_i - s \right\rangle  \\
    & = \sum_i (B_i \log u_i - \beta_i u_i) - \sum_i \langle p - \beta_i v_i, x_i\rangle + \langle p, s\rangle\\
    & \leq \sum_i (B_i \log \frac{B_i}{\beta_i} - \beta_i \cdot \frac{B_i}{\beta_i}) - \sum_i \langle p - \beta_i v_i, x_i\rangle + \langle p, s\rangle \\
    & \leq \sum_i (B_i \log B_i - B_i) + \langle p, s\rangle - \sum_i B_i \log \beta_i \\
    & = \langle p, s\rangle - \sum_i B_i \log \beta_i - C,
\end{align*}
where the constant $C = \sum_i B_i (1-\log B_i)$. In the above derivation, the first inequality uses $\beta_i \geq 0$, $u_i \leq \langle v_i, x_i\rangle$, $\sum_i x_i \leq s$; the second inequality uses the fact that $u_i = B_i / \beta_i$ maximizes the function 
\[ u_i \mapsto B_i \log u_i - \beta_i u_i \]
for any $\beta_i > 0$ (i.e., substituting $u_i = B_i / \beta_i$ into the first line); the third inequality uses feasibility w.r.t. \eqref{eq:eg-dual-beta-p}, i.e., $p \geq \beta_i v_i$ for all $i$. 
Hence, when all inequalities are tight at a pair of solutions $x^*$ and $(p^*, \beta^*)$ feasible to \eqref{eq:eg-primal} and \eqref{eq:eg-dual-beta-p}, respectively (i.e., both optima are attained), the following KKT conditions must hold:
\begin{itemize} \setlength\itemsep{0.5em}
    \item $\langle p^*, s - \sum_i x_i^* \rangle = 0$ (via the first inequality above being tight).
    \item $u^* = B_i / \beta^*_i$ for all $i$ (via the second above).
    \item $\langle p^* - \beta^*_i v_i, x^*_i \rangle = 0$ for for all $i$ (via the third above).
\end{itemize}
As the proof of \cite[Theorem 2]{gao2020infinite} shows, these conditions (together with feasibility w.r.t. the two convex programs) are necessary and sufficient for $(x^*, p^*)$ being a ME.

\subsection*{An example: the distributional assumption on item arrivals}
In the definitions of OFM and OME in \S\ref{sec:fisher-markets}, we do not impose any distributional assumption on the sequentially arriving items $\theta_t$. 
The PACE dynamics does not require any distributional assumption either.
It is in the analysis of PACE in \S\ref{sec:conv-analysis-pace} that we assume that $\theta_t$ are drawn i.i.d. from an (unknown) underlying distribution $s \in L^\infty_+$ (where $s(\Theta)=1$ since it is a distribution). 
We define an underlying static Fisher market with the same buyers and item supplies $s$. 
Then, \S\ref{sec:conv-analysis-pace} essentially shows that PACE guarantees that various (time-averaged) quantities converge to their static equilibrium quantities in the underlying market.

To justify the necessity of such a distributional assumption on item arrivals, consider the following example in which items arrivals are chosen by an adaptive ``adversary'' whose goal is to make the buyers' time-averaged utilities of any online algorithm deviate from the static equilibrium utilities (defined by the ``hindsight market,'' i.e., the $n\times T$ market with items $\theta_1, \dots, \theta_T$) as much as possible. For simplicity (and w.l.o.g.), budgets, valuations and supplies are not normalized in this example.

\begin{itemize}\setlength\itemsep{0.5em} 
    \item There are $n$ buyers with equal budgets $B_i = 1$ and an item space of $\Theta = [n+1]$.  
    \item Let the valuation matrix be as follows, for some large $M>n$: \[ v = \begin{bmatrix}
        1      & M      & M      & M & 0 \\
        \vdots & \vdots & \vdots & 0      & M  \\
        \vdots & M      & 0      & \vdots & \vdots  \\
        1      & 0      & M          & M & M
    \end{bmatrix}. \] In other words, for item $1$, all buyers have valuation $1$. For each item $j=2, \dots, n+1$, buyer $(n+2-j)$ has valuation $0$ and other buyers have valuation $M$.
    \item The item supplies are $1$ for all $j\in [n+1]$.
    \item The number of time periods $T$ is large.
\end{itemize}


Given any fixed $j_0 =2, \dots, n+1$, the $n\times T$ static market with $T/2$ items of type $1$ and $T/2$ items of type $j_0$ exhibits the following ME:
\begin{itemize}\setlength\itemsep{0.5em}
    \item Buyer $i_0 := n+2-j^*$ receives all $T/2$ items of type $1$ with a utility of $u^*_{i_0} = T/2$ and $\beta^*_{i_0} = 2/T$. 
    \item Each buyer $i \in [n] \setminus \{i_0\}$ receives $T/(2(n-1))$ items of type $2$ (i.e., all items of type $2$ are evenly distributed among them) with a utility $u^*_i = \frac{MT}{2(n-1)}$ and $\beta^*_i = \frac{2(n-1)}{MT}$.
    \item The price of item $1$ (with $T/2$ copies) is $p^*_1 = \max \left\{ 2/T, \frac{2(n-1)}{MT} \right\} = \frac{2}{T}$. The price of item $j_0$ (with $T/2$ copies) is $p^*_{j_0} = \max\left\{ \beta^*_{i_0} \cdot 0, \frac{2(n-1)}{MT}\cdot M \right\} = 2(n-1)/T$. To verify these are equilibrium prices, note the following:
    \begin{itemize}\setlength\itemsep{0.5em}
        \item Buyer $i_0$ has $v_{i_0 j_0} = 0 < v_{i_0 1} = 1$. Hence, given $p^*$, will strictly prefer type $1$ items over type $j_0$. Her equilibrium allocation also consists of only type-$1$ items that cost exactly her budget: $\frac{T}{2}\cdot \frac{2}{T} = 1$.
        \item Buyer $i\neq i_0$ prefers type $j_0$ over type $1$ since the former has a higher value-per-unit-price: 
        \[ \frac{M}{p^*_{j_0}} = \frac{MT}{2(n-1)} > \frac{T}{2} = \frac{1}{p^*_1}. \]
        Her equilibrium allocation also consists of only type-$j_0$ items that cost exactly her budget: $\frac{T}{2(n-1)} \cdot \frac{2(n-1)}{T} = 1$. 
    \end{itemize}
\end{itemize}

Given any online algorithm, consider the following adaptive adversary.
\begin{itemize}\setlength\itemsep{0.5em}
    \item In the first $T/2$ time steps, every item arrival is of type $1$, that is, $\theta_t = 1$, $t=1, \dots, T/2$. The algorithm must irrevocably allocate them to the buyers.
    \item Since every buyer has the same valuation $1$ on type $1$, there exists a buyer $i_0$ that receives $u_{i_0} \leq T/(2n)$ up to $T/2$. 
    \item Then, the adversary picks $j_0 = n+2 - i_0$ and every item arrival in the remaining $T/2$ periods i
    s of type $j_0$ (which has value zero for $i_0$). 
\end{itemize}
In this way, buyer $i_0$ only receives a total utility of $u_{i_0}$ across all $T$. 
As shown above, in the static (hindsight) $n\times T$ market, she should have received an equilibrium utility of $u^*_{i_0} = T/2$ (i.e., being allocated all type-$1$ items). Hence, the realized utility of buyer $i_0$ is only $1/n$ of her static equilibrium utility.



\subsection*{Reformulation of \eqref{eq:eg-dual-beta-p} into \eqref{eq:eg-dual-beta-bounds}}
The reformulation is mainly based on \cite[\S 5]{gao2020infinite}, except that we now allow non-uniform supplies $s$ instead of $s(\theta)=1$ for all $\theta\in \Theta$. Assuming uniform supplies is w.l.o.g. in the static Fisher market (via rescaling all $v_i$) but is not so in OFM, since $s$ in an FOM represents the arbitrary, unknown underlying item distribution from which item arrivals are drawn.

In \eqref{eq:eg-dual-beta-p}, fixing a $\beta > 0$, setting
\[ p = \max_i \beta_i v_i \in L^1(\Theta)_+,\] 
i.e., the smallest $L^1$ function greater than or equal to $\beta_i v_i$ for all $i$, clearly minimizes the objective subject to the constraints.
Hence, we can eliminate $p$ in this way and write \eqref{eq:eg-dual-beta-p} as a finite-dimensional convex program in $\beta$. Here, $\beta_i v_i, s\rangle$ is convex in $\beta$ since $\beta \mapsto \max_i \beta_i v_i(\theta)$ is convex for any $\theta\in \Theta$. More specifically, for any $\beta, \gamma \in \RR^n_+$, $\lambda \in [0,1]$, $\theta \in \Theta$, we have
\[ \max_i\, (\lambda \beta_i + (1-\lambda) \gamma_i) v_i(\theta) \leq \lambda \max_i \beta_i v_i(\theta) + (1-\lambda)  \max_i \gamma_i v_i(\theta). \]
Hence, 
\[ \langle \max_i\, (\lambda \beta+(1-\lambda)\gamma)v_i , s\rangle \leq  \lambda \langle \max_i \beta_i v_i, s\rangle + (1-\lambda) \langle \max_i \gamma_i v_i, s\rangle. \]

Due to the strong convexity assumption in Theorem \ref{thm:general-conv}, we would need the function 
\[ \beta\mapsto \sum_i B_i \log \beta_i \] 
to be strongly convex on its domain. 
However, it is only strictly but not strongly convex on $\RR_{++}^n$. 
To resolve this, we use the following lemma. It is similar to \cite[Lemma 4]{gao2020infinite} except we allow non-uniform supplies.
\begin{lemma}
    Assume the normalizations in \ref{item:static-fm-normalization} in \S\ref{sec:fisher-markets}.
    Then, the equilibrium utilities satisfy $B_i \leq u^*_i \leq 1$ and hence $B_i \leq \beta^*_i = B_i / u^*_i \leq 1$.
    \label{lemma:bounds-on-beta}
\end{lemma}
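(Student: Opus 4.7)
}
The plan is to establish the two bounds on $u^*_i$ separately, using only the definition of a market equilibrium (Definition~\ref{defn:me-static}) together with the normalizations $\|B\|_1 = 1$, $s(\Theta)=1$ and $\langle v_i, s\rangle = 1$. The claimed bounds on $\beta^*_i$ will then follow from the KKT relation $\beta^*_i = B_i/u^*_i$ given by Theorem~\ref{eq:thm:eg-capture-me}.

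For the upper bound $u^*_i \leq 1$, I would use supply feasibility. Since $x^*_j \geq 0$ for all $j$ and $\sum_j x^*_j \leq s$ almost everywhere, we have $x^*_i \leq s$, and therefore $u^*_i = \langle v_i, x^*_i\rangle \leq \langle v_i, s\rangle = 1$.

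For the lower bound $u^*_i \geq B_i$, I would use a proportional-share argument. The idea is that buyer $i$ can always afford the allocation $B_i \cdot s$ (i.e.\ a $B_i$-fraction of every item), which yields utility $B_i \langle v_i, s\rangle = B_i$; by buyer optimality, her equilibrium utility must be at least as large. To make this precise, first observe that at any ME each buyer exhausts her budget, i.e.\ $\langle p^*, x^*_i\rangle = B_i$ (this follows from buyer optimality in Definition~\ref{defn:me-static} together with strict monotonicity of linear utilities, and is already implicit in the KKT derivation of Theorem~\ref{eq:thm:eg-capture-me}). Summing over $i$ and applying market clearance $\langle p^*, s-\sum_i x^*_i\rangle = 0$ gives $\langle p^*, s\rangle = \sum_i B_i = 1$. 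Hence the bundle $B_i \cdot s$ costs exactly $B_i \langle p^*, s\rangle = B_i$ and is therefore budget-feasible, so by buyer optimality $u^*_i \geq \langle v_i, B_i \cdot s\rangle = B_i$.

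Combining the two inequalities with $\beta^*_i = B_i/u^*_i$ gives $B_i = B_i/1 \leq \beta^*_i \leq B_i/B_i = 1$. The only real subtlety is justifying budget exhaustion $\langle p^*, x^*_i\rangle = B_i$ cleanly in the infinite-dimensional setting; if one prefers to avoid invoking it, an alternative is to argue directly from the KKT conditions of Theorem~\ref{eq:thm:eg-capture-me} that $\langle p^*, x^*_i\rangle = \beta^*_i \langle v_i, x^*_i\rangle = \beta^*_i u^*_i = B_i$, which makes the proportional-share step fully rigorous. I expect this budget-exhaustion step to be the only non-trivial point; everything else is immediate from feasibility and the normalization conventions.
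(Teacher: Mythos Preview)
Your proposal is correct and follows essentially the same approach as the paper: the upper bound via $x^*_i \leq s$, the lower bound via the proportional-share bundle $B_i s$ after establishing $\langle p^*, s\rangle = 1$ from budget exhaustion and market clearance, and then the bounds on $\beta^*_i$ from $\beta^*_i = B_i/u^*_i$. In fact, the KKT-based justification of budget exhaustion that you suggest as the cleaner alternative ($\langle p^*, x^*_i\rangle = \beta^*_i \langle v_i, x^*_i\rangle = \beta^*_i u^*_i = B_i$) is exactly the route the paper takes.
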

\begin{proof}
    Since any buyer can get at most the entire set of items (given by the supply $s$), 
\[ u^*_i \leq \langle v_i, s \rangle = 1, \]
where the last inequality is due to the normalization $\langle v_i, s\rangle = 1$ in \ref{item:static-fm-normalization} in \S\ref{sec:fisher-markets}.
In any ME $(x^*, p^*)$, Theorem~\ref{eq:thm:eg-capture-me} implies
\[ \langle p^*, x^*_i \rangle = \beta^*_i  \langle v_i, x^*_i \rangle = \beta^*_i u^*_i = B_i, \] 
that is, each buyer $i$ spends her entire budget. Hence, by the normalization $\|B\|_1 = 1$ and market clearance $\langle p^*, s - \sum_ix^*_i \rangle = 0$, we have
\[ \langle p^*, s \rangle = \sum_i \langle p^*, x^*_i \rangle = \sum_i B_i = \|B\|_1 = 1 \ \Rightarrow \ \langle p^*, B_i s \rangle = B_i. \]
In other words, given item price $p^*$, each buyer $i$ can afford the proportional allocation $x^\circ_i := B_i s$. Hence, the buyer optimality property of ME implies that buyer $i$'s equilibrium utility is at least the proportional share:
\begin{align*}
    u^*_i \geq \langle v_i, x^\circ_i \rangle = B_i \langle v_i, s \rangle = B_i.
\end{align*}
Since $B_i \leq u^*_i \leq 1$ and $\beta^*_i = B_i / u^*_i$ at equilibrium (Theorem~\ref{eq:thm:eg-capture-me}), we have
\[ B_i \leq \beta^*_i \leq 1. \]

\end{proof}

By Lemma~\ref{lemma:bounds-on-beta}, adding the constraints 
\[ B_i / (1+\delta_0) \leq \beta_i \leq 1+\delta_0,\ \forall\, i\] 
to the convex program does not affect its optimal solution $\beta^*$. Here, $\delta_0 > 0$ is to ensure $\beta^*_i \in (l_i, h_i)$ (the open interval), which facilitates the convergence analysis of cumulative utilities. To simplify the constants, one can take $\delta_0 = 1$. Numerical experiments suggest that its value does not affect the speeds of convergence of quantities of interest.

Combining the above yields the reformulation \eqref{eq:eg-dual-beta-bounds}. To align \eqref{eq:eg-dual-beta-bounds} with \eqref{eq:da-stoch-opt-std-form}, for each $\theta\in \Theta$ (corresponding to $Z$ in \S\ref{sec:da-general}) and $\beta\in \RR^n_+$ (corresponding to $w$ in \S\ref{sec:da-general}), let 
\[ f_\theta(\beta) := \max_i \beta_i v_{ij}.\] 
Then, 
\[ f(\beta) := \EE f_\theta(\beta) = \langle \max_i \beta_i v_i, s\rangle,\] 
where the expectation is over $\theta\sim s$, i.e., a random variable with distribution $s$ (corresponding to $z\sim \mathcal{D}$ in \S\ref{sec:da-general}).

\subsection*{Proof of Theorem~\ref{thm:conv-beta(t)}}
It follows immediately from Theorem~\ref{thm:general-conv}, as long as the function \[ \Psi(\beta) = -\sum_i B_i \log \beta_i\] is strongly convex modulo $\sigma$ and $\EE\| v_{i_t}(\theta_t) \mathbf{e}^{(i_t)}\|^2 \leq G^2$. 
We now show them.
Note that $\Psi$ is twice differentiable and has a diagonal Hessian 
\[ 
    \nabla^2 \Psi(\beta) = \begin{bmatrix}
        \frac{B_1}{\beta_1^2} & & \\
        & \ddots & \\
        & & \frac{B_n}{\beta_n^2}
    \end{bmatrix}
\] 
at any $\beta>0$. 
Clearly, its smallest eigenvalue can be bounded as 
\[ \lambda_{\min}(\nabla^2 \Psi(\beta)) \geq \min_i \frac{B_i}{\beta_i}.\] 
Denote $\kappa = 1 / (\min_i B_i)$. 
For any $\beta$ feasible to \eqref{eq:eg-dual-beta-bounds}, by the constraints $B_i/(1+\delta_0) \leq \beta_i \leq 1+\delta_0$, we have 
\[ \lambda_{\min} (\nabla^2 \Psi(\beta)) \geq \min_i \min_{\beta_i\in [B_i/(1+\delta_0), 1+\delta_0]} \frac{B_i}{\beta_i^2} = \min_i \frac{B_i}{(1+\delta_0)^2} = \frac{1}{\kappa (1+\delta_0)^2}. \]
Therefore, $\Psi$ is strongly convex on $[B/(1+\delta_0), (1+\delta_0) \ones]$ with modulus $\sigma = \frac{1}{\kappa (1+\delta_0)^2}$.
Finally, we have
\[ \EE \| v_{i_t} \mathbf{e}^{(i_t)}\|^2 \leq \max_i \EE_{\theta\sim s} [ v_i(\theta)^2 ] = G^2 \leq \max_i \|v_i\|_\infty^2. \]

\section*{Proof of Theorem~\ref{thm:conv-utilities}}
Intuitively, our proof uses the fact that if $\beta^t_i$ and $\beta_i^*$ are near each other, then $\frac{B_i}{\beta^t_i}$ will be near $\frac{B_i}{\beta_i^*} = u_i^*$ as well. Recall that $g^t_i = u^t_i$ (i.e., the subgradient of $\beta\mapsto \max_i \beta_i v_i(\theta_t)$ that we choose corresponds to the utility buyer $i$ receives at time $t$) and hence $\bar{g}^t_i = \bar{u}^t_i$. 
Since 
\[ \beta^{t+1} = \Pi_{[l_i,h_i]}\left(\frac{B_i}{\bar{g}^t_i} \right),\] 
we know that if no projection occurs (i.e., if $\frac{B_i}{\bar{g}^t_i} \in [l_i, h_i]$) at iteration $t$, then 
\[ \frac{B_i}{\beta^{t+1}_i} = \bar{g}_i^t.\] 
Thus, we split our proof into two cases: the case where projection occurs (i.e., $\frac{B_i}{\bar{g}^t_i} \notin [l_i, h_i]$), and the case where projection does not occur. 
As we will see, the probability of a projection at time step $t$ converges to $0$ as $t$ grows.
  
    For each $i$, consider the event that no projection occurs:
    \[A^t_i := \{ l_i \leq B_i / \bar{g}^t_i \leq h_i \}. \]
    Conditioning on the complementary event $(A^t_i)^c = \{ \bar{g}^t_i \notin [l_i, h_i]\}$, it holds that 
       \[ |\beta^{t+1}_i - \beta^*_i | > \epsilon_i \ \Rightarrow \ \EE(\beta^{t+1}_i - \beta^*_i)^2 \geq \PP[(A^t_i)^c] \epsilon_i^2 \ \Rightarrow \ \PP[(A^t_i)^c]\leq \frac{1}{\epsilon_i^2}\EE(\beta^{t+1}_i - \beta^*_i)^2. \]
    Conditioning on $A^t_i$, we have $B_i / \bar{g}^t_i = \beta^{t+1}_i$. Furthermore, since 
    \[ 0\leq \bar{g}^t_i = \frac{1}{t}\sum_{\tau=1}^t v_{i j_\tau}\ind\{ i=i_\tau \} \leq \|v_i\|_\infty \]
    and $\|v_i\|_\infty \geq 1 \geq u^*_i$, we have the following upper bound on the difference between the time average of realized utilities and the equilibrium utility of buyer $i$:
    \[ |\bar{g}^t_i - u^*_i| \leq \max\{ u^*_i, \|v_i\|_\infty \} = \|v_i\|_\infty. \] 

    Now, splitting the expectation by the two complementary events $A^t_i$ and $(A^t_i)^c$, we can apply the above bounds to get
    \begin{align*}
        \EE( \bar{g}^t_i -u^*_i )^2
        &  = \EE [ \ind_{(A^t_i)^c} \cdot (\bar{g}^t_i - u^*_i)^2 ] + \EE \left[ \ind_{A^t_i} \cdot \left(\frac{B_i}{\beta^{t+1}_i} - u^*_i\right)^2\right] \\
        & \leq \|v_i\|_\infty^2\EE [ \ind_{(A^t_i)^c} ] + (u^*_i)^2 \EE \left[ \ind_{A^t_i} \cdot \left(\frac{B_i}{\beta^{t+1}_iu^*_i} - 1\right)^2\right] \\
        & \leq \|v_i\|_\infty^2 \PP[(A^t_i)^c]  + (u^*_i)^2 \cdot \EE \left( \frac{\beta^{t+1}_i - \beta^*_i}{\beta^{t+1}_i }\right)^2 \\
        & \leq \frac{\|v_i\|_\infty^2}{\epsilon_i^2} \EE (\beta^{t+1}_i-\beta^*_i)^2 + \left(\frac{(1+\delta_0) u^*_i}{B_i}\right)^2 \cdot \EE (\beta^{t+1}_i - \beta^*_i)^2 \\
        & \leq \left( \frac{\|v_i\|_\infty^2}{\epsilon_i^2} + \left( \frac{1+\delta_0}{B_i} \right)^2 \right) \EE (\beta^{t+1}_i - \beta^*_i)^2.
    \end{align*}
    Since $B_i \leq \beta^*_i \leq 1$, we have ($\kappa := 1 / (\min_i B_i)$)
    \[  \epsilon_i \geq B_i \delta_0 /(1+\delta_0) > \delta_0 / \kappa > 0. \] 
    Summing up across all $i$, using Theorem~\ref{thm:conv-beta(t)} and the above bound, we get
    \begin{align*}
      \EE \|\bar{g}^t - u^* \|^2
      & \leq \sum_i \left( \frac{\|v_i\|_\infty^2}{\epsilon_i^2} + \left( \frac{1+\delta_0}{B_i} \right)^2 \right) \EE (\beta^{t+1}_i - \beta^*_i)^2 \\
      & \leq \left(  \|v\|_\infty^2 \left(\frac{\kappa}{\delta_0}\right)^2 + ((1+\delta_0)\kappa)^2\right) \sum_i  \EE (\beta^{t+1}_i - \beta^*_i)^2 \\
        & \leq \left(  \|v\|_\infty^2 \left(\frac{\kappa}{\delta_0}\right)^2 + ((1+\delta_0)\kappa)^2\right) \frac{(6 + \log (t+1)) G^2}{(t+1) \sigma^2} \\
        & = C \cdot \frac{(6+\log (t+1)) G^2 }{(t+1) \sigma^2}.
    \end{align*}

\subsection*{Proof of Theorem~\ref{thm:conv-expenditures}}
    First, note that $\bar{b}^t_i$ can be decomposed as follows.
    \begin{align*}
        \bar{b}^t_i &= \frac{1}{t} \sum_{\tau = 1}^t \beta^\tau_i v_{i j_\tau} \ind\{i = i_\tau\} \\
        &= \beta^*_i \cdot \frac{1}{t}\sum_{\tau =1}^t v_{ij_\tau} \ind\{i = i_\tau\} + \frac{1}{t} \sum_{\tau = 1}^t (\beta^\tau_i - \beta^*_i) v_{ij_\tau} \ind\{i = i_\tau\} \\
        & = \beta^*_i \bar{g}^t_i + \frac{1}{t} \sum_{\tau = 1}^t (\beta^\tau_i - \beta^*_i) v_{ij_\tau} \ind\{i = i_\tau\}.
    \end{align*}
    Next, we bound the second term as follows, using convexity of $(\cdot)^2$ and $\|v_{i j_\tau}\| \leq \|v_i\|_\infty$:
    \[ \left(\frac{1}{t} \sum_{\tau = 1}^t (\beta^\tau_i - \beta^*_i) v_{ij_\tau} \ind\{i = i_\tau\} \right)^2 \leq \frac{1}{t} \sum_{\tau=1}^t (\beta^\tau_i - \beta^*_i)^2 \|v_i\|_\infty^2. \]
    Then, we bound the square difference between expenditure and budget as follows, using $(x+y)^2 \leq 2(x^2+y^2)$ for any $x,y\in \RR$:
    \begin{align*}
        (\bar{b}^t_i - B_i)^2 & \leq 2\left[ (\beta^*_i \bar{g}^t_i - B_i)^2 + \left(\frac{1}{t} \sum_{\tau = 1}^t (\beta^\tau_i - \beta^*_i) v_{ij_\tau} \ind\{i = i_\tau\} \right)^2\right].
    \end{align*}
    Combining the above two inequalities, taking expectation on both sides and using $\beta^*_i = B_i / u^*_i$, we have
    \begin{align}
        \EE (\bar{b}^t_i - B_i)^2 
        \leq 2\left[ (\beta^*_i)^2 \EE ( \bar{g}^t_i - u^*_i)^2 + \|v_i\|_\infty^2 \frac{1}{t} \sum_{\tau=1}^t \EE(\beta^\tau_i - \beta^*_i)^2 \right]. \label{eq:E(b_bar(t,i)-B(i))^2<=...}
    \end{align}
    When $t\geq 3$, we have $\frac{\log (t+1)}{t+1} < \frac{\log t}{t}$ (since $(\frac{\log t}{t})' = \frac{1-\log t}{t^2} < 0$ for all $t\geq 3$). By the proof of \cite[Corollary 4]{xiao2010dual}, 
    \begin{align}
        \frac{1}{t} \sum_{\tau=1}^t \frac{(6 + \log \tau) G^2}{\tau \sigma^2} \leq \frac{1}{t} \left( 6(1+\log t) + \frac{(\log t)^2}{2} \right) \frac{G^2}{\sigma^2}. \label{eq:sum-up-bounds-beta}
    \end{align}
    Finally, summing up \eqref{eq:E(b_bar(t,i)-B(i))^2<=...} across all $i$, using $\beta^*_i\leq 1$, Theorems~\ref{thm:conv-beta(t)} and~\ref{thm:conv-utilities}, and \eqref{eq:sum-up-bounds-beta}, we have
    \begin{align*}
        \EE \|\bar{b}^t - B\|^2 
        & \leq 2\left[ \EE\|\bar{g}^t - u^*\|^2 + \|v\|_\infty^2 \frac{1}{t} \sum_{\tau=1}^t \EE \|\beta^\tau - \beta^*\|^2 \right] \\
        & \leq 2\left[ C\cdot \frac{(6 + \log t) G^2}{t \sigma^2} + \|v\|_\infty^2 \frac{1}{t} \left( 6(1+\log t) + \frac{(\log t)^2}{2} \right) \frac{G^2}{\sigma^2} \right] \\
        &= \frac{2 G^2}{t\sigma^2} \left( 6(C+\|v\|_\infty^2) + (C + 6 \|v\|_\infty^2) \log t + \frac{\|v\|_\infty^2}{2} (\log t)^2 \right).
    \end{align*}

\subsection*{Proof of Theorem~\ref{thm:pace-conv-to-OME-no-regret}}

    
    For any $\theta\in \Theta$, since $\|\cdot \|_\infty$ is $1$-Lipschitz continuous w.r.t. itself, we have
    \begin{align}
        \left|p^*(\theta) - \max_i \beta^t_i v_i(\theta) \right|
        &\leq \left| \max_i \beta^*_i v_i(\theta) - \max_i \beta^t_i v_i(\theta) \right| \nonumber \\
        &\leq \max_i |\beta^*_i v_i(\theta) - \beta^t_i v_i(\theta) | \nonumber \\
        & \leq \|v\|_\infty \| \beta^t -  \beta^*\|_\infty. 
        \label{eq:bound-|p*(j)-max-beta(i,t)*v(i,j)|-any-j}
    \end{align}


    \paragraph{Analysis of regret $r^t_i$.} 
    Let $(z^\tau_i)_{\tau \in [t]} \in [0,1]^t$ be any feasible allocation on the arrived items $\theta_\tau$, $\tau \in [t]$ such that 
    \[ \frac{1}{t} \sum_{\tau = 1}^t p^\tau(\theta_t)  z^\tau_i \leq B_i. \] 
    Using $p^\tau(\theta_\tau) = \max_i \beta^\tau_i v_i(\theta_\tau)$, we have 
    \begin{align}
        \frac{1}{t} \sum_{\tau = 1}^t p^*(\theta_\tau) z^\tau_i  &= \frac{1}{t}\sum_{\tau=1}^t p^\tau(\theta_\tau) z^t_i + \frac{1}{t}\sum_{\tau=1}^t (p^*(\theta_\tau) - p^\tau(\theta_\tau)) z^\tau_i \nonumber \\
        & \leq B_i + \frac{1}{t} \|v\|_\infty \sum_{\tau = 1}^t \|\beta^\tau - \beta^*\|_\infty \quad \text{[by \eqref{eq:bound-|p*(j)-max-beta(i,t)*v(i,j)|-any-j} and $0\leq z^\tau_{ij} \leq 1$] }. 
        \label{eq:1/t*p*-vs-1/t*ptau}
    \end{align}
    Denote 
    \[\gamma_t = \frac{1}{t} \|v\|_\infty \sum_{\tau = 1}^t \|\beta^\tau - \beta^*\|_\infty. \]
    In a static ME, by Theorem~\ref{eq:thm:eg-capture-me} and the constraints in \eqref{eq:eg-dual-beta-p}, we have $p^* \geq \beta^*_i v_i$. 
    Hence,
    \begin{align}
        \frac{1}{t}\sum_{\tau = 1}^t p^*(\theta_\tau) z^\tau_i \geq \beta^*_i \left( \frac{1}{t}\sum_{\tau =1}^t v_i(\theta_\tau) z^\tau_i \right).
        \label{eq:1/t*p*-vs-1/t*v}
    \end{align}
    By \eqref{eq:1/t*p*-vs-1/t*ptau}, \eqref{eq:1/t*p*-vs-1/t*v}, $u^*_i = B_i / \beta^*_i$ (Theorem~\ref{eq:thm:eg-capture-me}) and the definition of $\xi^t_i$, we have 
    \begin{align*}
        \frac{1}{t}\sum_{\tau =1}^t v_i(\theta_\tau) z^\tau_i \leq \frac{1}{\beta^*_i}(B_i + \gamma_t)
        = u^*_i \left( 1 + \frac{\gamma_t}{B_i} \right) \leq u^*_i + \frac{\gamma_i}{B_i} \leq \bar{u}^t_i + \xi^t_i + \frac{\gamma_t}{B_i}.
    \end{align*}
    Hence, the utility level $\hat{U}^t_i$ (Definition~\ref{defn:ofm-demand-ulevel-ome}) satisfies
    \begin{align}
        \hat{U}^t_i \leq \bar{u}^t_i + \xi^t_i + \frac{\gamma_t}{B_i}.
        \label{eq:bound-hat-U-by-baru-...}
    \end{align}
    Note that $\EE(\gamma_t^2)$ can be bounded as follows:
    \begin{align}
        \EE(\gamma_t^2) \leq \|v\|_\infty^2 \frac{1}{t} \sum_{\tau=1}^t \EE\|\beta^\tau - \beta^*\|^2 \leq \frac{\|v\|_\infty^2}{t} \left( 6(1+\log t) + \frac{(\log t)^2}{2} \right)\frac{G^2}{\sigma^2} = O\left(\frac{(\log t)^2}{t}\right). \label{eq:bound-gamma(t)}
    \end{align}
    where the second inequality is due to Theorem \ref{thm:conv-beta(t)} and \eqref{eq:sum-up-bounds-beta}.
    Combining \eqref{eq:bound-hat-U-by-baru-...}, \eqref{eq:bound-gamma(t)} and $\EE(\xi^t_i)^2 = O((\log t)/t)$ (Theorem~\ref{thm:conv-utilities}), we have
    \[ \EE(r^t_i)^2 \leq 2\left( \EE(\xi^t_i)^2 + \frac{1}{B_i^2} \EE(\gamma_t^2)\right) = O\left( \frac{(\log t)^2}{t} \right). \]

    \paragraph{Analysis of envy $\rho^t_i$.} 
        Let $p^* = \max_i \beta^*_i v_i$ (a.e.) be the equilibrium prices. Similar to \ref{eq:1/t*p*-vs-1/t*ptau}, for any $i$, we have
        \begin{align}
            \frac{1}{t}\sum_{\tau =1}^t p^*(\theta_\tau) x^\tau_i 
            & = \bar{b}^t_i + \frac{1}{t}\sum_{\tau=1}^t (p^*(\theta_\tau)-\beta^\tau_i v_i(\theta_\tau)) x^\tau_i \nonumber \\
            & \leq B_i + \Delta^t_i + \eta^t_i. \label{eq:bounding-p*-x(i,t)}
        \end{align}

        Using the above (replacing $i$ with $k$) and $p^* \geq \beta^*_i v_i$, we have
        \begin{align*}
            \beta^*_i \bar{u}^t_{ik} =  \frac{1}{t}\sum_{\tau=1}^t \beta^*_i v_i(\theta_\tau) x^\tau_k 
            \leq \frac{1}{\tau} \sum_{\tau = 1}^t p^*(\theta_\tau) x^\tau_k 
            \leq B_k + \Delta^t_k + \eta^t_k.
        \end{align*}
        Hence, using $u^*_i = B_i / \beta^*_i \leq 1$ (Theorem~\ref{eq:thm:eg-capture-me} and Lemma~\ref{lemma:bounds-on-beta}),
        \begin{align*}
            \frac{\bar{u}^t_{ik}}{B_k} 
            & \leq \frac{1}{B_k} \cdot \frac{1}{\beta^*_i}(B_k + \Delta^t_k + \eta^t_k) \\
            & \leq \frac{u^*_i}{B_i} \left(1 + \frac{\Delta^t_i + \eta^t_i}{B_k} \right) \\
            & \leq \frac{\bar{u}^t_i}{B_i} + \frac{\xi^t_i}{B_i} + \frac{\Delta^t_i + \eta^t_i}{B_k B_i}\quad\text{[by definition of $\xi^t_i$]}.
        \end{align*}
        Using the above inequality, we can bound the envy as follows:
        \begin{align}
            \rho^t_i \leq \frac{\xi^t_i}{B_i} + \frac{1}{B_i} \max_k \frac{ \Delta^t_k+\eta^t_k }{B_k} \leq \kappa \xi^t_i + \kappa^2\max_k (\Delta^t_k+\eta^t_k). \label{eq:rho(t,i)<=...}
        \end{align}
        
        
        Next, we show the convergence of $\eta^t_i$. 
        By \eqref{eq:bound-|p*(j)-max-beta(i,t)*v(i,j)|-any-j}, we have
            \begin{align}
                |\eta^t_i| \leq \sum_\ell |\eta^t_\ell| \leq \frac{1}{t}\sum_{\tau=1}^t |p^*_{j_\tau} -\beta^\tau_{i_\tau} v_{i_\tau j_\tau} | \leq \frac{1}{t}\sum_{\tau=1}^t \|v\|_\infty \|\beta^\tau - \beta^*\|_\infty = \gamma_t.
                \label{eq:bound-eta(t,i)}
            \end{align}
        Hence, same as \eqref{eq:bound-gamma(t)},
            \begin{align}
                \EE (\eta^t_i)^2 \leq \|v\|_\infty^2 \frac{1}{t}\sum_{\tau=1}^t \EE \|\beta^\tau - \beta^*\|^2 \leq \|v\|_\infty^2 \frac{1}{t} \left( 6(1+\log t) + \frac{(\log t)^2}{2} \right) \frac{G^2}{\sigma^2}. \label{eq:bound-E(eta(t,i))}
            \end{align}
            By Theorems~\ref{thm:conv-beta(t)} and~\ref{thm:conv-expenditures}, we know that $\EE (\xi^t_i)^2 = O\left((\log t) / t \right)$ and $\EE(\Delta^t_i)^2 = O\left((\log t)^2 / t\right)$. 
            Together with \eqref{eq:bound-E(eta(t,i))} and \eqref{eq:rho(t,i)<=...}, we have
            \[ \EE(\rho^t_i)^2 \leq \kappa \EE(\xi^t_i)^2 + \kappa^2 \sum_\ell (\EE (\Delta^t_\ell)^2 + \EE(\eta^t_\ell)^2) = O\left(\frac{(\log t)^2}{t}\right). \]

\section{Extension to quasilinear utilities} \label{app:ql}
We show that PACE can be easily extended to the case of a \emph{quasilinear} (QL) market (i.e., where buyers have QL utilities). We show that most of the convergence results in \S\ref{sec:conv-analysis-pace} still holds. 
The static quasilinear market setup is the same as the linear case in \S\ref{sec:fisher-markets} (which allows a possibly infinite item space $\Theta$), except the following: 
\begin{itemize} \setlength\itemsep{0.5em}
    \item For given item prices $p\in L^1_+$, 
    each buyer $i$ has a quasilinear utility function, i.e., 
    \[ u_i(x_i) = \langle v_i, x_i \rangle - \langle p, x_i\rangle. \]
    \item Without loss of generality, assume $\|B\|_1 = 1$ and all buyers' valuations are nontrivial, i.e., $\langle v_i, s\rangle >0$ for all $i$. 
    Due to the structure of QL utilities, we cannot normalize the valuations $v_i$ and budgets $B_i$ separately without loss of generality. Instead, they can only be scaled at the same time by the same constant.
\end{itemize}
Same as before, each buyer $i$ has a budget $B_i>0$ and can only choose among budget-feasible allocations, that is, $x_i$ such that $\langle p, x_i \rangle \leq B_i$. In this case, an allocation-price pair $(x^*, p^*)$ is a \emph{quasilinear market equilibrium} (QLME) if the following holds (see \cite[\S 6]{gao2020infinite} and \cite[\S 4]{cole2017convex}):
\begin{itemize} \setlength\itemsep{0.5em}
    \item Buyers are optimal: $x^*_i \in D_i(p^*) := \argmax \{ \langle v_i - p^*, x_i \rangle: x_i \in L^\infty_+, \, \langle p^*, x_i \rangle \leq B_i\}$.
    \item The market clears: $\sum_i x_i \leq s$ and $\langle p^*, s - \sum_i x^*_i \rangle = 0$.
\end{itemize}

As shown in \cite[\S 6]{gao2020infinite}, the following pair of (possibly infinite-dimensional) convex programs capture QLME:\footnote{There, the authors assume $s = \ones$, which is w.l.o.g. for static Fisher markets. 
Similar to the case of Theorem~\ref{eq:thm:eg-capture-me} for linear utilities, all results can be easily extended to the case of $s\in L^\infty_+$.}
\begin{align}
    \begin{split}
    \sup\, & \sum_i (B_i \log u_i - \delta_i) \\ 
    {\rm s.t.} & u_i \leq \langle v_i, x_i \rangle + \delta_i,\, \forall\, i \in [n], \\
    & \sum_i x_i \leq s,\,  \\
    & u_i \geq 0,\ \delta_i \geq 0,\ x_i \in L_1(\Theta)_+,\ \forall\, i \in [n].
    \end{split}
    \label{eq:ql-eg-primal} 
    \tag{$\mathcal P_{\rm QLEG}$}
\end{align}
\begin{align}
    \begin{split}
        \inf\, & \langle p, s \rangle - \sum_i B_i \log \beta_i \\
        {\rm s.t.} & p \geq \beta_i v_i,\ \beta_i \leq 1,\ \forall\, i \in [n], \\
        & p \in L_1(\Theta)_+,\ \beta\in \RR^d_+.
    \end{split} 
    \label{eq:ql-eg-dual-p-beta}
    \tag{$\mathcal D_{\rm QLEG}$}
\end{align}

In the sequel, we use $(x^*, u^*, \delta^*)$ to denote an optimal solution of \eqref{eq:ql-eg-primal} (in which $u^*$ and $\delta^*$ are unique) and $(p^*, \beta^*)$ to denote the optimal solution of \eqref{eq:ql-eg-dual-p-beta}. As shown in \cite[\S 6]{gao2020infinite}, 
the following KKT conditions of \eqref{eq:ql-eg-primal} and \eqref{eq:ql-eg-dual-p-beta} are necessary and sufficient for $(x^*, p^*)$ being a QLME.
\begin{itemize} \setlength\itemsep{0.5em}
    \item $\delta^*_i (1 - \beta^*_i) = 0$ for all $i$ (complementary slackness).
    \item $u^*_i = B_i / \beta^*_i$ for all $i$.
    \item $p^* = \max_i \beta^* v_i$ (a.e.) for all $j$.
    \item $\langle p^* - \beta^*_i v_i, x^*_i \rangle = 0$ for all $i$.
\end{itemize}
Let $(x^*, p^*)$ denote a QLME. The equilibrium utility of buyer $i$ (i.e., the amount of utility buyer $i$ receives at a QLME) is 
\[ u^{\rm QLME}_i := \langle v_i - p^*, x^*_i \rangle = (1-\beta^*_i) \langle v_i, x^*_i \rangle = (1-\beta^*_i)(u^*_i - \delta^*_i), \]
which is unique and does not depend on the choice of the equilibrium allocation $x^*$. 
In general, $u^{\rm QLME}_i$ is not the same as $u^*_i$ in the optimal solution of \eqref{eq:ql-eg-primal}. 
In comparison, the term $\langle v_i, x^*_i\rangle$ can be viewed as the equilibrium gross utility before subtracting the price $\langle p^*, x^*_i \rangle$ of the allocation $x^*_i$. 
The above equilibrium quantities satisfy the following \citet[\S 6]{gao2020infinite}.
\begin{itemize} \setlength\itemsep{0.5em}
    \item If $\beta^*_i = 1$, then $\langle p^* - \beta^*_i v_i, x^*_i \rangle = 0$ implies its gross utility and expenditure are equal, which give an equilibrium utility of zero:
        \[ \langle v_i, x^*_i \rangle = \beta^*_i\langle v_i, x^*_i \rangle = \langle p^*, x^*_i \rangle = u^*_i - \delta^*_i \ \Rightarrow \ u^{\rm QLME}_i = \langle v_i - p^*, x^*_i\rangle = 0. \]
    \item If $\beta^*_i < 1$, then $\delta^*_i = 0$ by complementary slackness (the first KKT condition above). Hence, the gross utility is $\langle v_i, x^*_i \rangle = u^*_i$ and
    \[ u^{\rm QLME} = \langle v_i - p^*, x^*_i \rangle = (1-\beta^*_i)\langle v_i, x^*_i \rangle = (1-\beta^*_i) u^*_i. \]
\end{itemize}
Similar to the proof of \cite[Lemma 5]{gao2020first}, we can show that
\[ u^*_i \leq \langle v_i, s \rangle + B_i. \]
Hence,
\[ \beta^* = \frac{B_i}{u^*_i} \geq \frac{B_i}{\langle v_i, s \rangle + B_i} > \beta^{\min}_i := \frac{B_i}{\langle v_i, s \rangle + 2 B_i} > 0. \]
The choice of $\beta^{\min}_i$ is to ensure that $\beta^*_i - \beta^{\min}_i > 0$, which simplifies the analysis of the dynamics. 
Substituting $p = \max_i \beta_i v_i$ and using the bounds $ \beta^{\min}_i \leq \beta^*_i \leq 1$, we can solve the following convex program for the equilibrium utility prices $\beta^*$, where $\beta^{\min} := (\beta^{\min}_1, \dots, \beta^{\min}_n)$:
\begin{align}
    \min_{ \beta\in [\beta^{\min}, \ones]} \ \langle p, s\rangle - \sum_i B_i \log \beta_i.  \label{eq:ql-dual-beta-cp-bounds}
\end{align}
Applying dual averaging to the convex program \eqref{eq:ql-dual-beta-cp-bounds}, we arrive at the following PACE dynamics for an online QL market (i.e., an OFM with buyers having QL utilities). 
At time $t$, the following steps take place.
\begin{itemize} \setlength\itemsep{0.5em}
    \item An item $\theta_t \in \Theta $ arrives, which determines a winner $i_t = \min \argmax_i \beta^t_i v_i(\theta_t)$. 
    \item The stochastic subgradient is $g^t = v_{i_t}(\theta_t) \mathbf{e}^{(i_t)}$, or $g^t_i = v_i(\theta_t) \ind\{ i=i_t \}$ for each $i$.
    \item Each buyer $i$ pays a price (expenditure)  
        \[ b^t_i =  \beta^t_i v_i(\theta_t) \ind\{ i=i_t \} \] 
        and receives a (net) utility of
    \begin{align}
        u^t_i = g^t_i - b^t_i = (1 - \beta^t_i) v_i(\theta_t) \ind\{ i=i_t\},
        \label{eq:u(t,i)=g(t,i)-p(t,i)-ql}
    \end{align}
    which is is the value of the item minus the price paid.
    Here, only the winning buyer $i_t$ may get a potentially nonzero utility $u^t_{i_t}$; other buyers $i\neq i_t$ gets $0$ (and pays zero).
    \item Update the dual average: for each $i$, $\bar{g}^t = \frac{t-1}{t}\bar{g}^{t-1} + \frac{1}{t}g^t$, which ensures $\bar{g}^t_i = \frac{1}{t}\sum_{\tau=1}^t v_{ij_\tau}\ind\{i=i_\tau\}$ for all $i$ (same as in the linear case).
    \item Compute the next pacing multiplier (similar to the linear case, except the lower bound for $\beta^t_i$ being $\beta^{\min}_i$ instead of $B_i$):
        \[ \beta^{t+1}_i = \argmin_{\beta_i \in [\beta^{\min}_i, 1]} \left\{\bar{g}^t_i \beta_i - B_i \log \beta_i \right\}  \ \Rightarrow\ \beta^{t+1}_i = \Pi_{ [\beta^{\min}_i, 1] } \left(\frac{B_i}{\bar{g}^t_i}\right). \]
\end{itemize}
Same as in the linear case, we do not need any distributional assumption on the item arrivals to run PACE. In subsequent convergence analysis, however, we assume that the items $\theta_t$ are drawn i.i.d. from a distribution $s$ (i.e., $s\in L^\infty_+$ and $s(\Theta)=1$). We also assume that $v_i \in L^\infty_+$ for all $i$.
Let $(x^*, p^*)$ denote a QLME of the underlying static QL market with supplies $s$.
\paragraph{Convergence of QL pacing multipliers.}
Analogous to Theorem~\ref{thm:conv-beta(t)}, in the QL case, we can show that the pacing multipliers $\beta^t$ converge to the equilibrium utility prices $\beta^*$ in mean square. It is a direct consequence of the general convergence result of dual averaging (Theorem~\ref{thm:general-conv}).
\begin{theorem}
    For $t=1,2,\dots$, it holds that 
    \[ \EE\|\beta^t - \beta^* \|^2 \leq \frac{(6+\log t) G^2}{t\sigma^2}, \]
    where $G^2$, $\kappa$ are the same as in Theorem \ref{thm:conv-beta(t)} and 
    \[ \sigma = \min_i \min_{\beta_i \in [\beta^{\min}_i, 1]} \frac{B_i}{\beta_i^2} = \min_i B_i = \frac{1}{\kappa}. \]
    \label{thm:ql-conv-beta(t)}
\end{theorem}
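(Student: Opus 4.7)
The plan is to recognize this result as a direct application of Theorem~\ref{thm:general-conv} to the finite-dimensional convex program \eqref{eq:ql-dual-beta-cp-bounds}, exactly paralleling the linear case in Theorem~\ref{thm:conv-beta(t)}. First I would verify that the PACE-QL updates listed in the bulleted dynamics above are precisely the iterates of dual averaging applied to \eqref{eq:ql-dual-beta-cp-bounds}: the loss functions are $f_{\theta_t}(\beta) = \max_i \beta_i v_i(\theta_t)$ so that $f(\beta) = \EE_{\theta\sim s} f_\theta(\beta) = \langle \max_i \beta_i v_i, s\rangle$, the regularizer is $\Psi(\beta) = -\sum_i B_i \log \beta_i + \mathbb{I}_{[\beta^{\min}, \ones]}(\beta)$, the stochastic subgradient is $g^t = v_{i_t}(\theta_t)\mathbf{e}^{(i_t)} \in \partial f_{\theta_t}(\beta^t)$ (this is the same piecewise-linear subgradient selection as in \S\ref{sec:conv-analysis-pace}), and the closed-form $\beta^{t+1}_i = \Pi_{[\beta^{\min}_i,1]}(B_i/\bar g^t_i)$ is exactly the solution of $\argmin \{\langle \bar g^t,\beta\rangle + \Psi(\beta)\}$ since the problem is separable and the unconstrained minimizer is $B_i/\bar g^t_i$.

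With this identification in hand, it remains to verify the two hypotheses of Theorem~\ref{thm:general-conv}: strong convexity of $\Psi$ and a bound on $\EE\|g^t\|^2$. For the first, $\Psi$ is twice differentiable in the interior with diagonal Hessian whose entries are $B_i/\beta_i^2$. On the feasible box $[\beta^{\min},\ones]$, the worst case is $\beta_i = 1$ (the upper endpoint), yielding
\[ \lambda_{\min}(\nabla^2 \Psi(\beta)) \geq \min_i \min_{\beta_i \in [\beta^{\min}_i,1]} \frac{B_i}{\beta_i^2} = \min_i B_i = \frac{1}{\kappa}, \]
so $\sigma = \min_i B_i = 1/\kappa$ as claimed. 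Note the only difference from Theorem~\ref{thm:conv-beta(t)} is that the upper bound on $\beta_i$ is now $1$ instead of $1+\delta_0$, which removes the $(1+\delta_0)^2$ factor from $\sigma$. For the second, since $g^t$ is a single unit basis vector scaled by $v_{i_t}(\theta_t)$,
\[ \EE\|g^t\|^2 = \EE[v_{i_t}(\theta_t)^2] \leq \max_i \EE_{\theta\sim s}[v_i(\theta)^2] = G^2 \leq \max_i \|v_i\|_\infty^2, \]
which matches the $G^2$ of Theorem~\ref{thm:conv-beta(t)}. Plugging these $\sigma$ and $G^2$ into Theorem~\ref{thm:general-conv} yields the stated bound.

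I do not expect any substantive obstacle here: the argument is a direct transplant of the proof of Theorem~\ref{thm:conv-beta(t)}, the only subtle point being to justify that $\beta^*$ lies in the open interior of $[\beta^{\min},\ones]$ so that bounding $\beta^{\min}_i < \beta^*_i$ (via $\beta^{\min}_i = B_i/(\langle v_i,s\rangle + 2 B_i) < B_i/(\langle v_i,s\rangle + B_i) \leq \beta^*_i$) and $\beta^*_i \leq 1$ ensures $\beta^*$ is the optimizer of the constrained problem \eqref{eq:ql-dual-beta-cp-bounds}, which is required to invoke Theorem~\ref{thm:general-conv} with $w^* = \beta^*$. This is exactly the role played by the derivation preceding \eqref{eq:ql-dual-beta-cp-bounds} in the excerpt, so no new argument is needed.
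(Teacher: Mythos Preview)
Your proposal is correct and matches the paper's approach: the paper simply states that the theorem is a direct consequence of Theorem~\ref{thm:general-conv}, and your verification of the strong-convexity modulus $\sigma=\min_i B_i$ (via the Hessian on $[\beta^{\min},\ones]$) and the subgradient bound $G^2$ fills in exactly the details that justify that invocation. One minor remark: in the QL case $\beta^*_i=1$ is possible, so $\beta^*$ need not lie in the \emph{open} interior of the box, but this is irrelevant for applying Theorem~\ref{thm:general-conv}, which only requires that $\beta^*$ be the optimizer of \eqref{eq:ql-dual-beta-cp-bounds}.
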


\paragraph{Convergence of QL utilities and expenditures.}
Next, we show mean-square convergence of time-averaged utilities 
$\bar{u}^t = \frac{1}{t}\sum_{\tau=1}^t u^t_i$ and expenditures $\bar{b}^t = \frac{1}{t}\sum_{\tau=1}^t \beta^\tau_i v_{ij_\tau}\ind\{i = i_\tau\}$.
In the QL case, the dual average $g^t_i$ can also be viewed as the per-period gross utility before subtracting the price $\beta^t_i v_i(\theta_t)$. 
In this way, $\bar{g}^t_i$ is the time-averaged gross utility of buyer $i$.
\begin{theorem}
    For $t = 1, 2, \dots$ and each $i$, the following holds.
    \begin{itemize}
        \item If $\beta^*_i = 1$, then $u^{\rm QLME}_i = 0$ and $B_i = u^*_i$. In this case, 
        \begin{align}
            & \EE \left(u^t_i - u^{\rm QLME}_i\right)^2 = \EE |u^t_i|^2 \leq \|v_i\|_\infty^2 \EE (1 - \beta^t_i)^2  = O\left( \frac{\log t}{t} \right). 
            \label{eq:ql-util-conv-beta=1-case}
        \end{align}
        \item If $\beta^*_i < 1$, then $u^{\rm QLME}_i > 0$, $\delta^*_i = 0$ and $u^{\rm QLME} = (1-\beta^*_i) u^*_i$. 
        In this case, the gross utility $\bar{g}^t_i$, realized (net) utility $\bar{u}^t_i$ and expenditures $\bar{b}^t_i$ converge as follows:
            \begin{align*}
                & \EE( \bar{g}^t_i -u^*_i )^2 \leq C_i \EE (\beta^{t+1}_i - \beta^*_i)^2 = O\left( \frac{\log t}{t} \right),\\
                & \EE \left(\bar{u}^t_i - u^{\rm QLME}_i \right)^2 \leq R^t_i, \ \ 
                 \EE (\bar{b}^t_i - B_i)^2 \leq R^t_i,
            \end{align*}
            where 
            \begin{align*}
                C_i &= \frac{\|v_i\|_\infty^2}{\epsilon_i^2} + \frac{\left( \frac{\|v_i\|_1}{m} + 2B_i \right)^4}{B_i^2}, \ \ \epsilon_i = \min\{ 1 - \beta^*_i, \beta^*_i - \beta^{\min}_i \}, \\ 
                R^t_i &= 2 \left[ \EE (\bar{g}^t_i - u^*_i)^2 + \frac{\|v_i\|_\infty^2 }{t} \sum_{\tau=1}^t \EE (\beta^\tau_i -\beta^*_i)^2 \right], \ \ \EE R^t_i = O\left(\frac{(\log t)^2}{t}\right).
            \end{align*}
    \end{itemize}
    Hence, the mean-square error $\EE \left\| \bar{u}^t - u^{\rm QLME} \right \|^2$ is either $O((\log t)^2/t)$ (when some $\beta^*_i<1$) or $O((\log t)/t)$ (when all $\beta^*_i = 1$).
    \label{thm:ql-conv-g(t)}
\end{theorem}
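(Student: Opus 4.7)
The proof naturally splits along the dichotomy $\beta^*_i = 1$ versus $\beta^*_i < 1$, and in the second case it mirrors (but must be adapted from) the linear-case arguments for Theorems~\ref{thm:conv-utilities} and~\ref{thm:conv-expenditures}.

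\textbf{Case $\beta^*_i = 1$.} The per-period utility formula \eqref{eq:u(t,i)=g(t,i)-p(t,i)-ql} gives $u^t_i = (1-\beta^t_i) v_i(\theta_t)\ind\{i=i_t\}$, and since $1-\beta^t_i = \beta^*_i - \beta^t_i$ I can bound $|u^t_i| \leq \|v_i\|_\infty |\beta^t_i - \beta^*_i|$. Squaring and taking expectations, Theorem~\ref{thm:ql-conv-beta(t)} immediately yields the $O((\log t)/t)$ rate in \eqref{eq:ql-util-conv-beta=1-case}.

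\textbf{Case $\beta^*_i < 1$, Step 1 (gross utility).} The KKT conditions give $\delta^*_i = 0$ and $u^*_i = B_i/\beta^*_i$, so I will replicate the projection-splitting trick used in the proof of Theorem~\ref{thm:conv-utilities}. Define the no-projection event $A^t_i = \{\beta^{\min}_i \leq B_i/\bar{g}^t_i \leq 1\}$, so that on $A^t_i$ one has $\beta^{t+1}_i = B_i/\bar{g}^t_i$ and hence $\bar{g}^t_i - u^*_i = B_i(1/\beta^{t+1}_i - 1/\beta^*_i)$; on $(A^t_i)^c$, since $\beta^*_i \in (\beta^{\min}_i, 1)$ lies strictly inside the projection interval, $|\beta^{t+1}_i - \beta^*_i| \geq \epsilon_i$ forces $\PP[(A^t_i)^c] \leq \epsilon_i^{-2}\EE(\beta^{t+1}_i - \beta^*_i)^2$ by Chebyshev. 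On $A^t_i$ I also use the uniform upper bound $1/\beta^{t+1}_i \leq 1/\beta^{\min}_i = (\langle v_i,s\rangle + 2B_i)/B_i$, which is what gives rise to the $\|v_i\|_1/m + 2B_i$ factor in $C_i$ (interpreting $\langle v_i, s\rangle$ accordingly). Combining both cases as in the linear proof yields $\EE(\bar{g}^t_i - u^*_i)^2 \leq C_i \EE(\beta^{t+1}_i - \beta^*_i)^2$, and Theorem~\ref{thm:ql-conv-beta(t)} gives the $O((\log t)/t)$ rate.

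\textbf{Case $\beta^*_i < 1$, Step 2 (net utility and expenditure).} By \eqref{eq:u(t,i)=g(t,i)-p(t,i)-ql}, $u^t_i = g^t_i - b^t_i$, so $\bar{u}^t_i = \bar{g}^t_i - \bar{b}^t_i$. The expenditure decomposes exactly as in the linear case:
\[ \bar{b}^t_i - B_i = \beta^*_i(\bar{g}^t_i - u^*_i) + \frac{1}{t}\sum_{\tau=1}^t (\beta^\tau_i - \beta^*_i)v_i(\theta_\tau)\ind\{i=i_\tau\}, \]
using $B_i = \beta^*_i u^*_i$. Since $u^{\rm QLME}_i = u^*_i - B_i$, subtracting gives
\[ \bar{u}^t_i - u^{\rm QLME}_i = (1 - \beta^*_i)(\bar{g}^t_i - u^*_i) - \frac{1}{t}\sum_{\tau=1}^t (\beta^\tau_i - \beta^*_i)v_i(\theta_\tau)\ind\{i=i_\tau\}. \]
Applying $(x+y)^2 \leq 2(x^2+y^2)$, the bounds $(\beta^*_i)^2, (1-\beta^*_i)^2 \leq 1$, convexity of $(\cdot)^2$ on the averaged sum, and $|v_i(\theta_\tau)| \leq \|v_i\|_\infty$, both $\EE(\bar{u}^t_i - u^{\rm QLME}_i)^2$ and $\EE(\bar{b}^t_i - B_i)^2$ are bounded by the common quantity $R^t_i$. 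Substituting the Step 1 bound together with Theorem~\ref{thm:ql-conv-beta(t)} and the telescoping estimate \eqref{eq:sum-up-bounds-beta}, which yields $\frac{1}{t}\sum_{\tau=1}^t \EE(\beta^\tau_i - \beta^*_i)^2 = O((\log t)^2/t)$, produces the advertised $O((\log t)^2/t)$ rate.

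\textbf{Main obstacle.} The delicate part is Step 1: ensuring that the constant $C_i$ correctly reflects the QL structure through $\beta^{\min}_i$. In the linear case we had the clean bound $\beta^*_i \geq B_i$, whereas here we must use $\beta^*_i \geq \beta^{\min}_i = B_i/(\langle v_i,s\rangle + 2B_i)$ and verify that $\beta^*_i$ lies strictly inside $[\beta^{\min}_i, 1]$ (which is ensured by our choice of $\beta^{\min}_i$ strictly below $B_i/u^*_i$). Getting $\epsilon_i > 0$ and the upper bound on $1/\beta^{t+1}_i$ to combine into the stated closed-form $C_i$ is the computational heart of the proof; the remaining steps are algebraic rearrangements of the linear-case argument.
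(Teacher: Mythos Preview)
Your proposal is correct and follows essentially the same approach as the paper's own proof: the same projection-splitting argument on the event $A^t_i = \{B_i \leq \bar g^t_i \leq B_i/\beta^{\min}_i\}$ for the gross utility, the same decomposition $\bar b^t_i = \beta^*_i \bar g^t_i + \frac{1}{t}\sum_\tau(\beta^\tau_i-\beta^*_i)v_i(\theta_\tau)x^\tau_i$ for expenditures, and the analogous decomposition with $(1-\beta^*_i)$ for net utilities, all bounded by the common $R^t_i$. Your identification of the QL-specific bound $1/\beta^{t+1}_i \leq 1/\beta^{\min}_i$ as the source of the $(\langle v_i,s\rangle + 2B_i)$ factor in $C_i$ matches the paper's derivation exactly.
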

\begin{proof}
The $\beta^*_i = 1$ case is clear. We prove the $\beta^*_i < 1$ case. 

\paragraph{Convergence of $\bar{g}^t_i$.}
Denote the event $A^t_i = \{ B_i \leq \bar{g}^t_i \leq B_i / \beta^{\min}_i \}$.
Then, $(A^t_i)^c$ means $ B_i / \bar{g}^t_i \notin [ \beta^{\min}_i, 1 ]$ and hence $|\beta^{t+1} - \beta^*_i | > \epsilon_i$.
Similar to the linear case, we deduce
\[ \EE(\beta^{t+1}_i - \beta^*_i)^2 \geq \PP[(A^t_i)^c] \epsilon_i^2 \Rightarrow \ \PP[(A^t_i)^c]\leq \frac{1}{\epsilon_i^2}\EE(\beta^{t+1}_i - \beta^*_i)^2. \]
Furthermore, since $ 0 \leq \bar{g}^t_i \leq \|v_i\|_\infty$ (same as in the linear case) and $u^*_i \leq \langle v_i, s\rangle + B_i$, we have
\begin{align*}
    \EE( \bar{g}^t_i -u^*_i )^2
    &  = \EE [ \ind_{(A^t_i)^c} \cdot (\bar{g}^t_i - u^*_i)^2 ] + \EE \left[ \ind_{A^t_i} \cdot \left(\frac{B_i}{\beta^{t+1}_i} - u^*_i\right)^2\right] \\
    & \leq \|v_i\|_\infty^2\EE [ \ind_{(A^t_i)^c} ] + (u^*_i)^2 \EE \left[ \ind_{A^t_i} \cdot \left(\frac{B_i}{\beta^{t+1}_iu^*_i} - 1\right)^2\right] \\
    & \leq \|v_i\|_\infty^2 \PP[(A^t_i)^c]  + (u^*_i)^2 \cdot \EE \left( \frac{\beta^{t+1}_i - \beta^*_i}{\beta^{t+1}_i }\right)^2 \\
    & \leq \frac{\|v_i\|_\infty^2}{\epsilon_i^2} \EE (\beta^{t+1}_i-\beta^*_i)^2 + \left(\frac{u^*_i}{\beta^{\min}_i}\right)^2 \cdot \EE (\beta^{t+1}_i - \beta^*_i)^2 \\
    & \leq \left( \frac{\|v_i\|_\infty^2}{\epsilon_i^2} + \left( \frac{\left( \langle v_i, s\rangle +B_i\right) \left( \langle v_i, s\rangle + 2B_i\right) }{B_i} \right)^2 \right) \EE (\beta^{t+1}_i - \beta^*_i)^2 \\
    & = C_i \EE (\beta^{t+1}_i - \beta^*_i)^2 = O\left( \frac{\log t}{t} \right).
\end{align*}

\paragraph{Convergence of expenditures $\bar{b}^t_i$.} 
Similar to the linear case, note that $\bar{b}^t_i$ can be decomposed as follows (where $x^\tau_i := \ind\{ i=i_\tau\}$ denotes whether buyer $i$ wins at time step $\tau$):
\begin{align*}
    \bar{b}^t_i := \frac{1}{t}\sum_{\tau = 1}^t \beta^\tau_i v_i(\theta_\tau)x^\tau_i = \beta^*_i \bar{g}^t_i + \frac{1}{t}\sum_{\tau=1}^t (\beta^\tau_i-\beta^*_i)v_i(\theta_\tau)x^\tau_i.
\end{align*}

Hence, using $\beta^*_i = B_i / u^*_i \leq 1$, $(x+y)^2 \leq 2(x^2+y^2)$, convexity of $(\cdot)^2$ and $v_i(\theta_t)x^\tau_i \leq \|v_i\|_\infty$, we have
\begin{align}
    \EE (\bar{b}^t_i - B_i)^2 & \leq 2 \left[  \EE (\beta^*_i \bar{g}^t_i - B_i)^2 +  \EE \left( \frac{1}{t}\sum_{\tau=1}^t (\beta^\tau_i-\beta^*_i)v_i(\theta_\tau) x^\tau_i \right)^2\right] \nonumber \\
    & \leq 2 \left[ (\beta^*_i)^2 \EE (\bar{g}^t_i - u^*_i)^2 + \frac{\|v_i\|_\infty^2 }{t} \sum_{\tau=1}^t \EE (\beta^\tau_i -\beta^*_i)^2  \right] \nonumber \\
    & \leq 2 \left[ \EE (\bar{g}^t_i - u^*_i)^2 + \frac{\|v_i\|_\infty^2 }{t} \sum_{\tau=1}^t \EE (\beta^\tau_i -\beta^*_i)^2  \right] = R^t_i.
    \label{eq:ql-bounding-|b(t,i)-B(i)|}
\end{align}
The order of $\EE R^t_i$ is given by those of $\EE (\bar{g}^t_i - u^*_i)^2$ and $\sum_{\tau = 1}^t \EE (\beta^\tau_i - \beta^*_i)$, which are $O((\log t)/t)$ and  $O((\log t)^2 / t )$, respectively.

\paragraph{Convergence of utilities $\bar{u}^t_i$.}
For a buyer $i$ with $\beta^*_i < 1$, let 
 \[ 
    \epsilon_i = \min\{ 1 - \beta^*_i, \beta^*_i - \beta^{\min}_i \} > 0. \footnote{The analysis of this case also works for $\beta^*_i = 0$ but its resulting bound is not as tight as the above one for the case $\beta^*_i = 1$.}
 \] 
 Express $\bar{u}^t_i$ as follows:
\begin{align*}
    \bar{u}^t_i &= \frac{1}{t} \sum_{\tau=1}^t (1- \beta^\tau_i) v_i(\theta_\tau)x^\tau_i \\ 
    & = (1 - \beta^*_i) \bar{g}^t_i + \frac{1}{t} \sum_{\tau=1}^t (\beta^\tau_i - \beta^*_i) v_i(\theta_\tau) x^\tau_i. 
\end{align*}
Since $u^{\rm QLME}_i = (1 - \beta^*_i) u^*_i$, similar to \eqref{eq:ql-bounding-|b(t,i)-B(i)|}, we have
\begin{align*}
    \EE \left(\bar{u}^t_i - u^{\rm QLME}_i\right)^2 
    & \leq 2 \left[ (1-\beta^*_i) \EE (\bar{g}^t_i - u^*_i)^2 + \EE \left( \frac{1}{t} \sum_{\tau=1}^t (\beta^\tau_i - \beta^*_i) v_i(\theta_\tau) x^\tau_i  \right)^2 \right] \\
    & \leq 2\left[ \EE (\bar{g}^t_i - u^*_i)^2 + \frac{\|v_i\|_\infty^2 }{t} \sum_{\tau=1}^t \EE (\beta^\tau_i -\beta^*_i)^2  \right] = R^t_i. 
\end{align*}


Finally, if all $\beta^*_i =1$, \eqref{eq:ql-util-conv-beta=1-case} implies that $\EE \|\bar{u}^t - \bar{u}^{\rm QLME}\|^2 = O((\log t)/t)$. It some $\beta^*_i < 1$, since $\EE R^t_i = O((\log t)^2/t)$, so is $\EE \|\bar{u}^t - \bar{u}^{\rm QLME}\|^2$. 
\end{proof}

\section{More details on the experiments} \label{app:experiment-details}
In each experiment, we will have some underlying valuations, items will be drawn one-at-a-time, uniformly at random, from the set of possible items, on which we run the PACE dynamics. 
We have several outcome measures of interest for asking how close we are to the static equilibrium quantities at each point.
First, we look at convergence of realized utilities. 
In each case we consider the realized utilities up to time $t$ and look at the deviation from equilibrium utility normalized by the equilibrium utility level. We look at both the average and the worst-case deviations. Formally these are calculated as $\|(\bar{u}^t - u^*) / u^*\|_1 / n$ for the average deviation and $\|(\bar{u}^t - u^*)/ u^*\|_\infty$ for the maximum (over buyers) deviation. 
We also measure deviations of the pacing multiplier $\beta^t$ from  $\beta^*$ 
and deviations of time-averaged cumulative expenditure $\bar{b}^t$ from buyers' budgets $B = (B_1, \dots, B_n)$ 
using analogous normalizations. In the plots, we add horizontal lines for the same error measures for the proportional shares of the static underlying Fisher market (each buyer receiving $B_i$ of each item), a `baseline' solution.

We consider 3 different market datasets. The first two datasets are recommender systems which we turn into markets. The final is taken from a survey experiment. We point the reader to \cite{kroer2019computing} for a more in-depth discussion and exploratory data analysis of these $3$ datasets. The first dataset uses MovieLens \citep{harper2015movielens}. MovieLens is a dataset of individual ratings of movies, \citep{kroer2019computing} turn it into a market by using matrix completion to fill in missing user-movie ratings, they then take the top 1500 most active users and 1500 most rated movies and set the valuations $v_{ij}$ as the predicted ratings from the matrix completion. We also use the Jester Jokes dataset \citep{goldberg2001eigentaste}. 
Here, we have $7200$ individuals that have rated $100$ jokes. We treat the jokes as the item to be allocated.Finally, we use the Household Items dataset introduced in \citep{kroer2019computing}. Here we have $2876$ survey takes entering a willingness to pay for $50$ household items (vacuum cleaners, toasters, gas grills, etc.). 
For each dataset, we first rescale (w.l.o.g.) buyer valuations as described in \S \ref{sec:conv-analysis-pace}.

We also consider an experiment on a simple infinite-dimensional market instance (which we refer to as ``Inf-Dim'') of $n=100$ buyers and item space $\Theta = [0,1]$, similar to the examples in \cite[\S 4.2]{gao2020infinite}. Let each buyer valuation $v_i$ be normalized linear functions on $[0,1]$, that is, $v_i(\theta) = c_i(\theta) + d_i$ such that $v_i(\Theta) : = \int_\Theta v_i d\mu = \int_0^1 v_i(\theta) d\theta = 1 \ \Leftrightarrow \ \frac{c_i}{2} + d_i = 1$. We randomly generate $(c_i, d_i)$, $i=1, \dots, n$ and run the dynamics for $T = 100 n$ time steps.

For the finite dimensional datasets we compute equilibrium utilities $u^*$ and utility prices $\beta^*$ by solving the corresponding static instances using standard methods. 
For the infinite dimensional synthetic instance, we use the approach based on convex conic reformulation \cite[\S 4]{gao2020infinite} to compute $\beta^*$. 

Figure~\ref{fig:plot-3-datasets} in \S\ref{sec:experiments} contains the plots for the MovieLens, Household Items and Inf-Dim datasets. Figure~\ref{fig:jokes-app} contains the plots for the Jokes dataset.

Since items arrive one at a time, $t=100$ time steps in a market with $n=10$ buyers is very different from the same number of time steps in a market with $n=1000$ buyers. To deal with this, we run PACE for $T = 100 n$ time steps, referring to each $n$ time steps as an \emph{epoch}.

We record the average and maximum values of relative errors of the pacing multipliers $\beta^t$, time-averaged cumulative utilities $\bar{u}^t$ and time-averaged expenditures $\bar{b}^t$. 

\paragraph{Convergence of expenditures to total budget.}
For each $i$, the quantity 
\[ \left|\frac{\bar{b}^t_i - B_i}{B_i} \right| = \left| \frac{\sum_{\tau=1}^t b^t_i - tB_i }{ t B_i} \right|\] 
can be viewed as the relative deviation of current cumulative expenditure at time $t$ from the total budget $tB_i$ available up to $t$. Hence, the residuals $\left\| (\bar{b}^t_i - B) / B \right\| / n$ and $\left\| (\bar{b}^t_i - B) / B \right\|_\infty$ are the average and maximum such deviations across all buyers. 
For each dataset (MovieLens, Household, Jokes and Inf-Dim), we plot the various quartiles of these residuals across all seeds, as shown in Figure~\ref{fig:quartiles-spending-vs-budget}.
\begin{center}
    \begin{figure}
        \centering
        \includegraphics[scale=0.33]{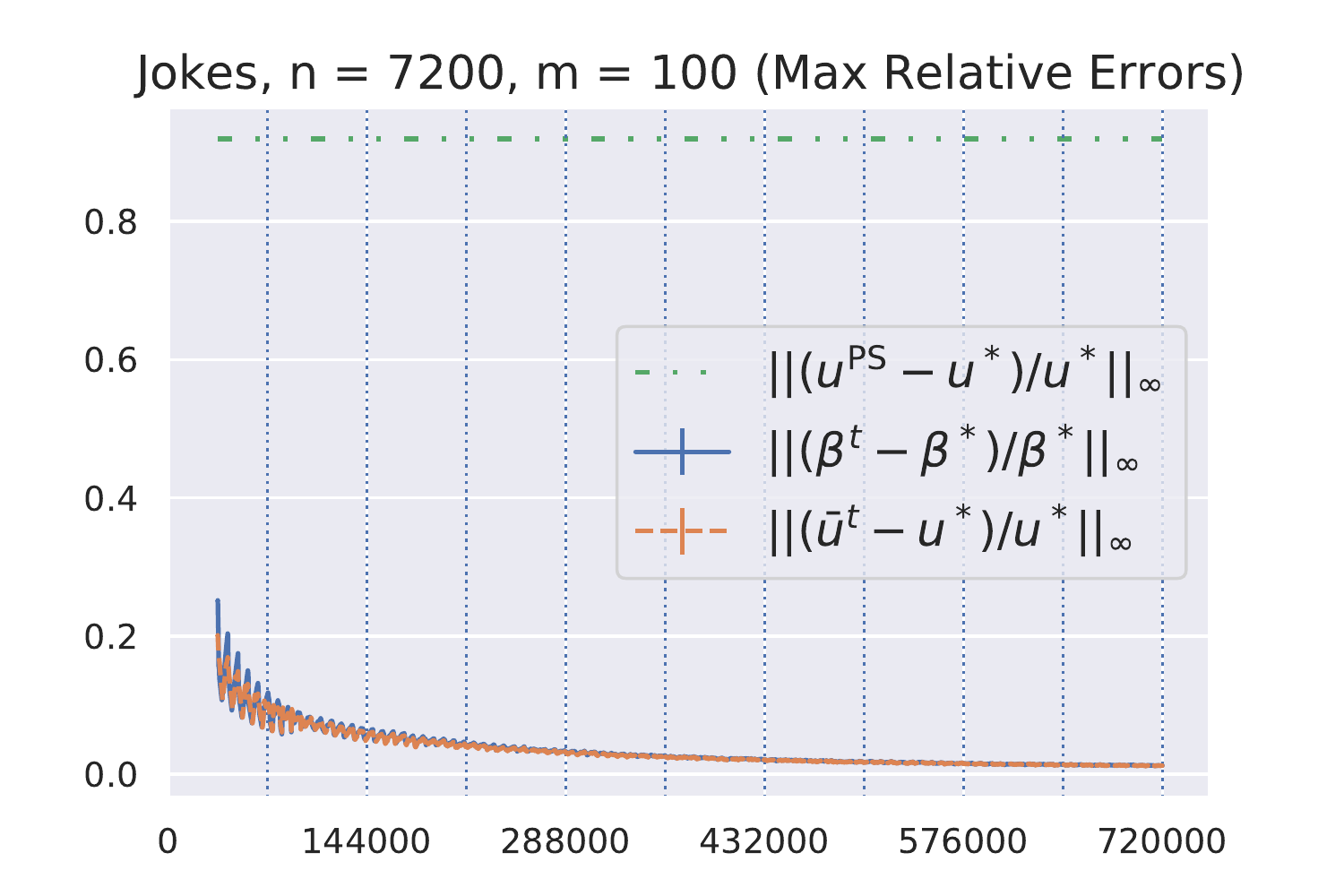} 
        \includegraphics[scale=0.33]{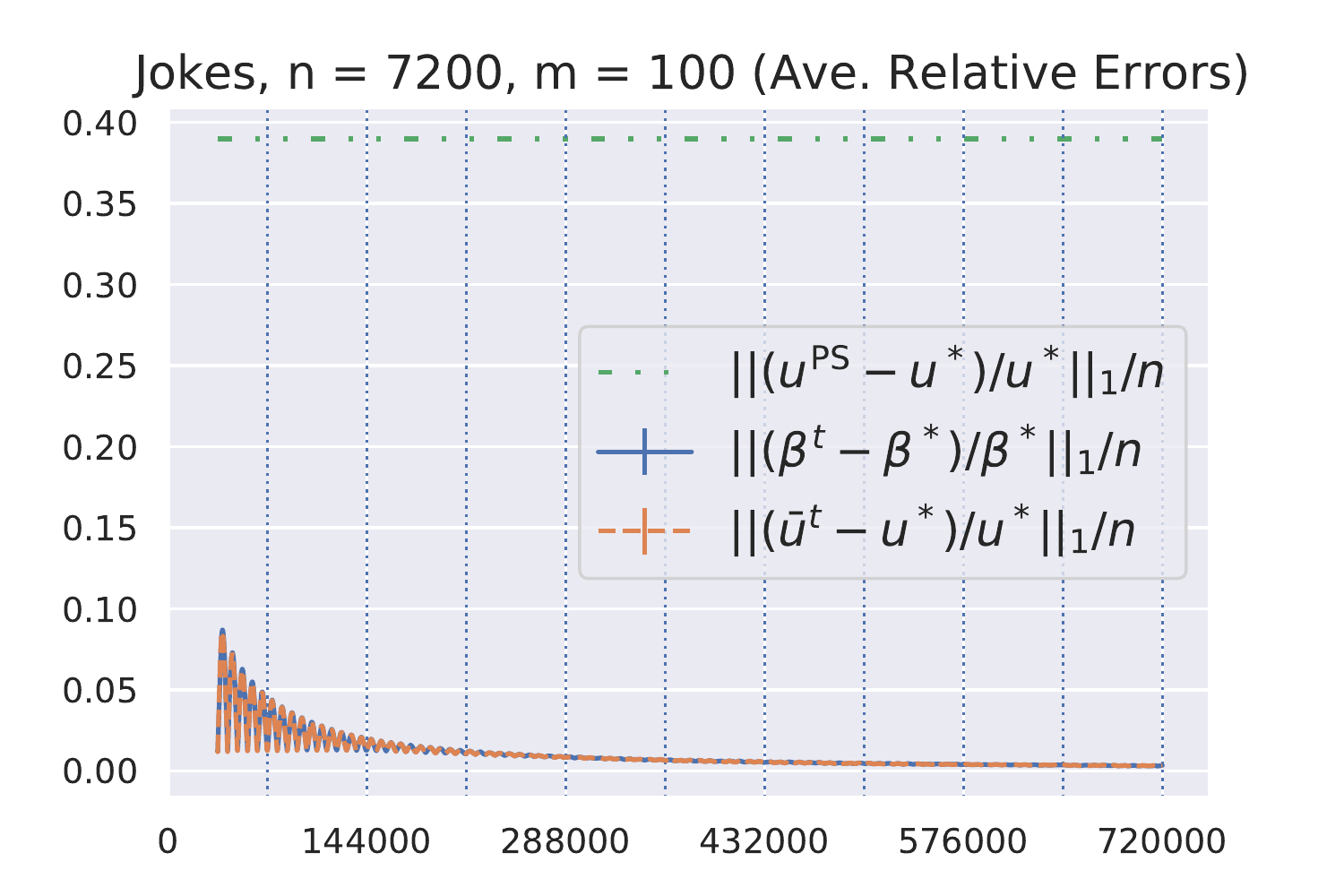}
        \caption{Results for the same experiments as in Figure~\ref{fig:plot-3-datasets} (convergence of pacing multipliers, utilities and expenditures) on the Jokes dataset.}
        \label{fig:jokes-app}
    \end{figure}
\end{center}

\begin{center}
    \begin{figure}
        \centering
        \includegraphics[scale=0.35]{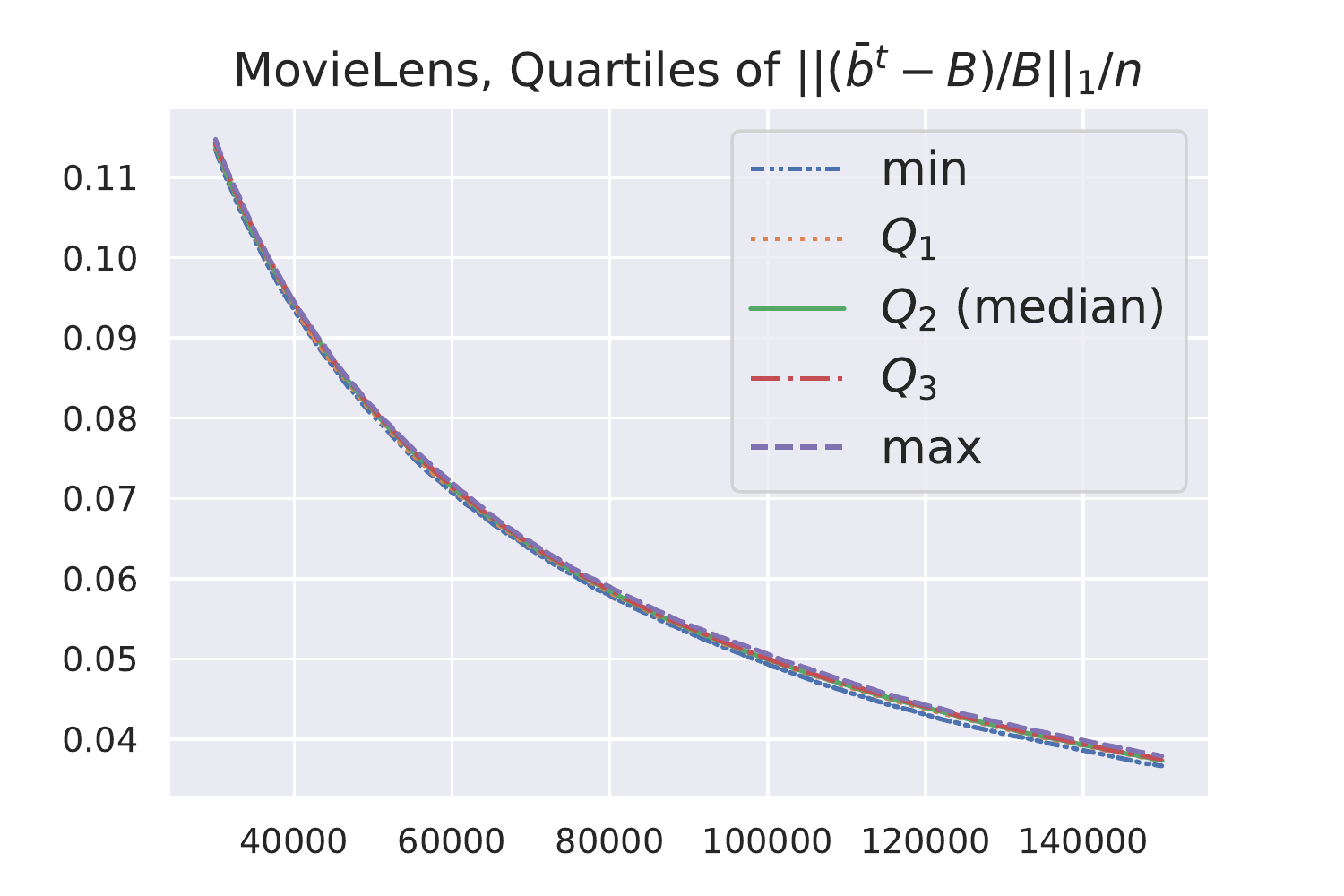} 
        \includegraphics[scale=0.35]{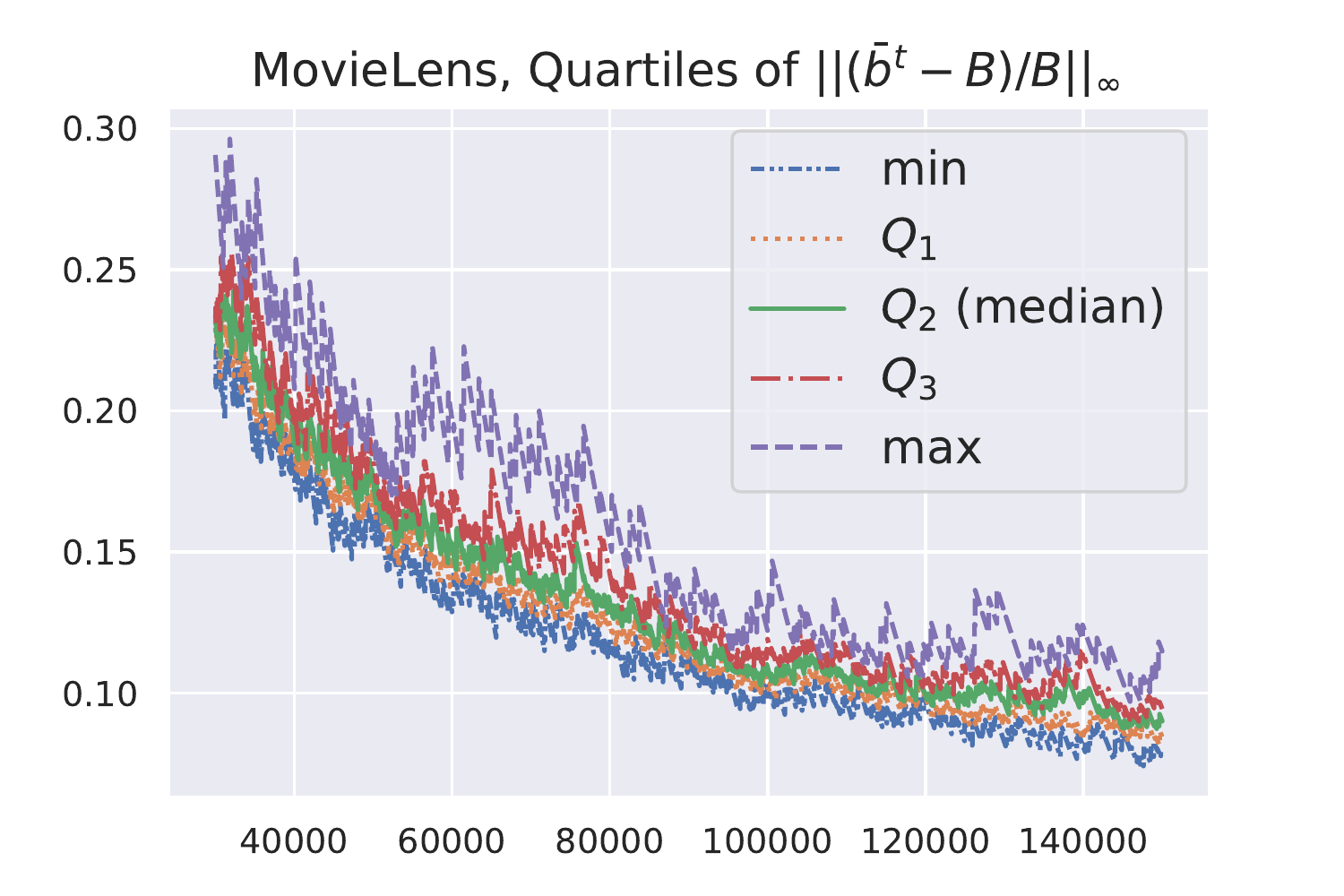}

        \includegraphics[scale=0.35]{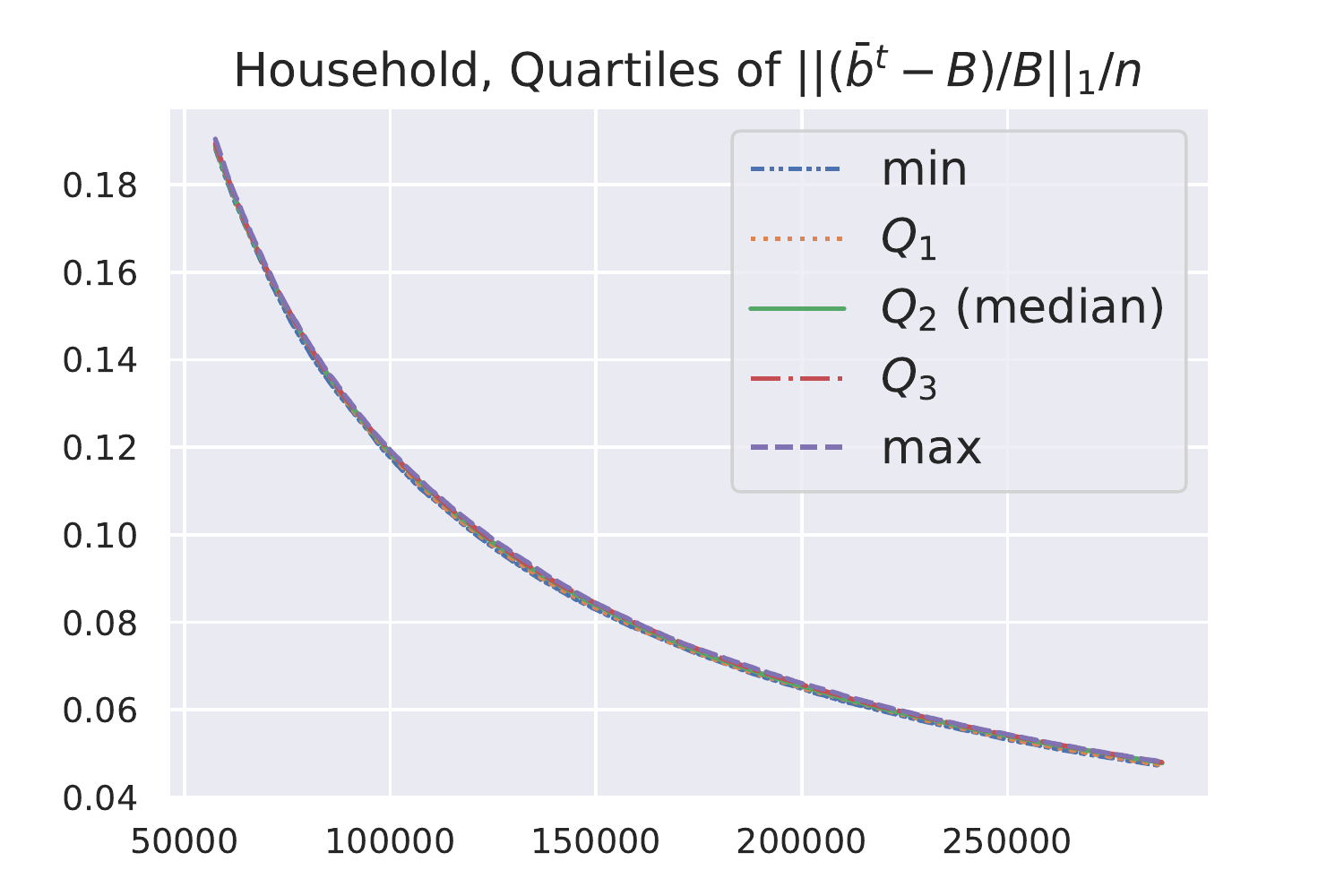} 
        \includegraphics[scale=0.35]{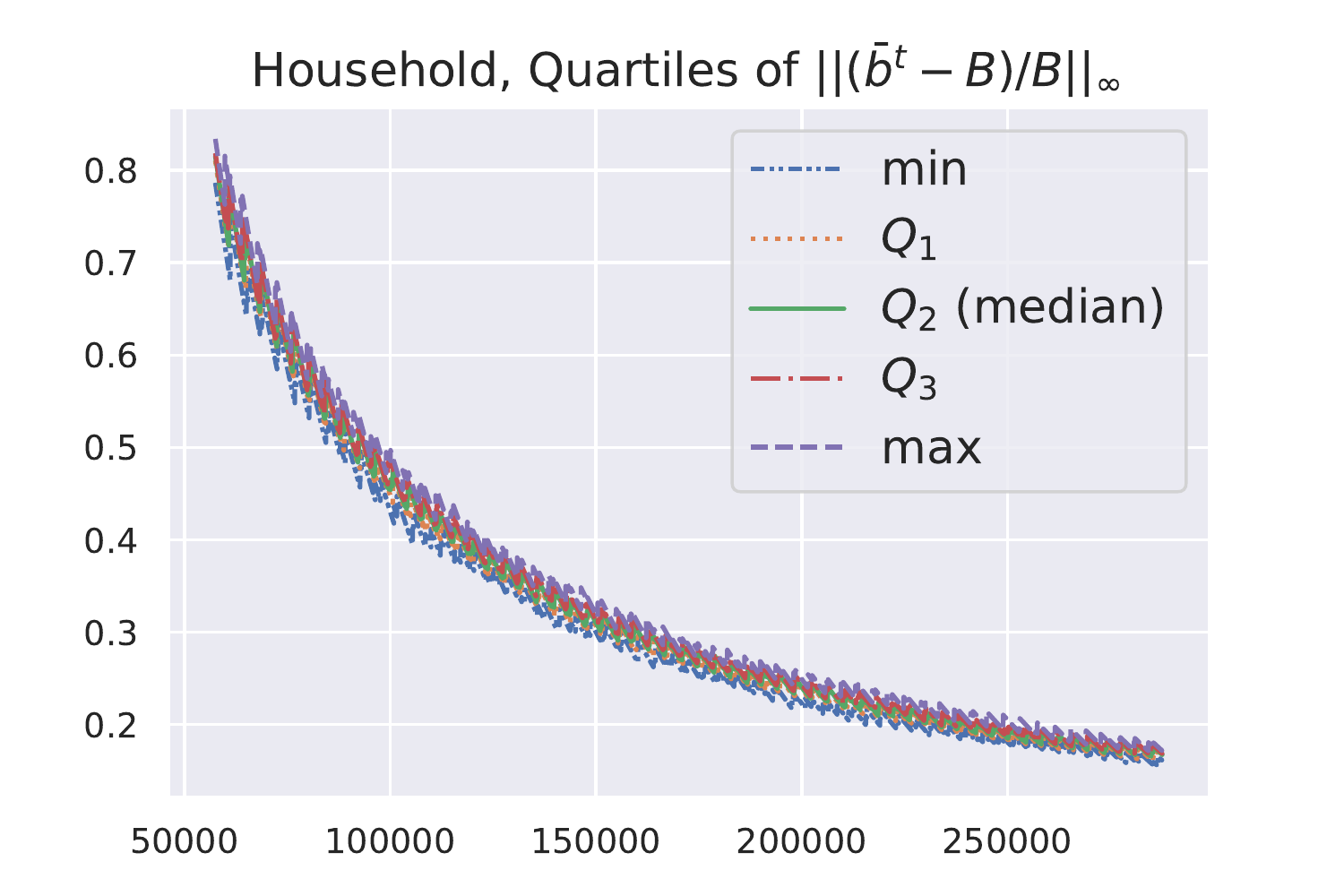} 

        \includegraphics[scale=0.35]{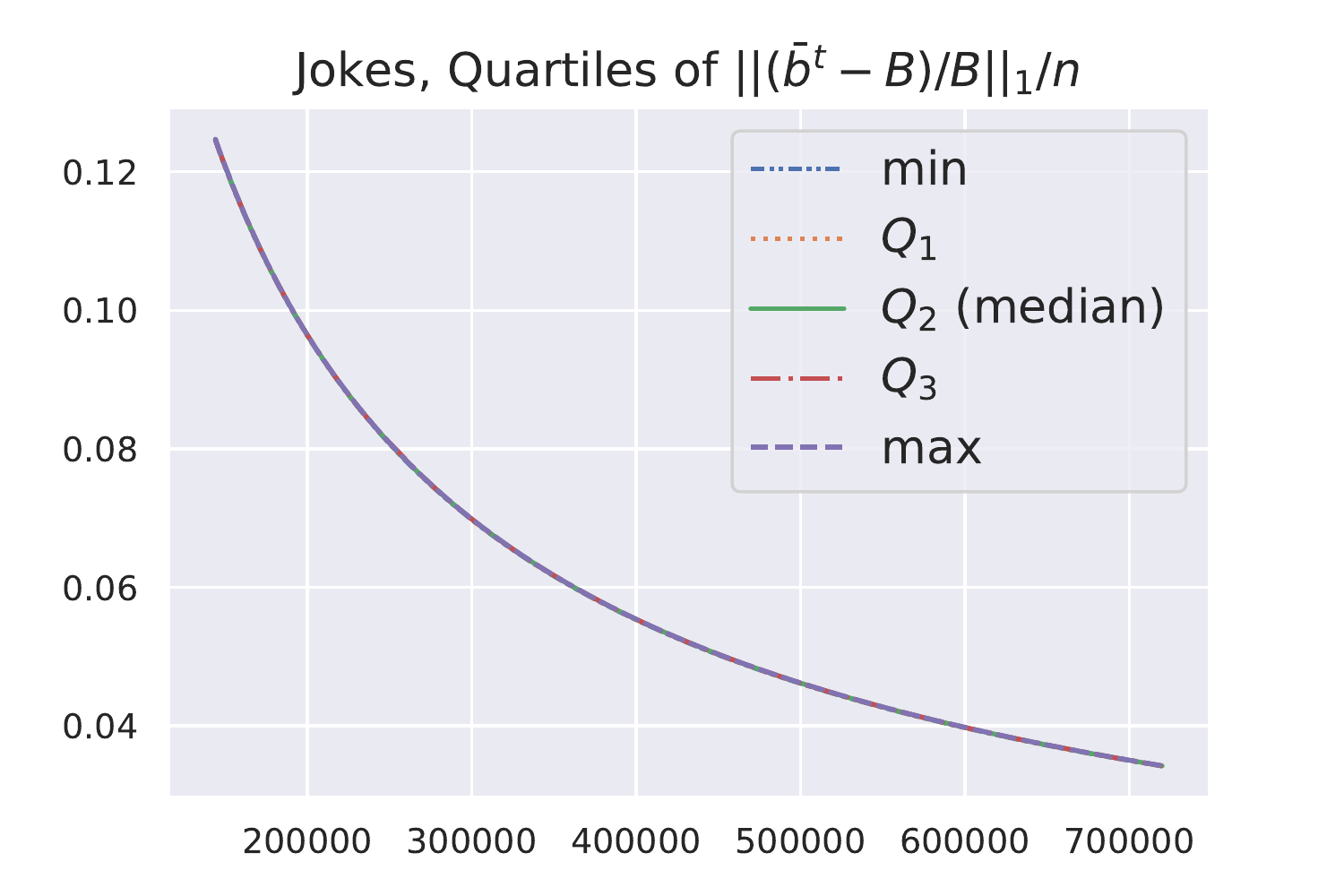}
        \includegraphics[scale=0.35]{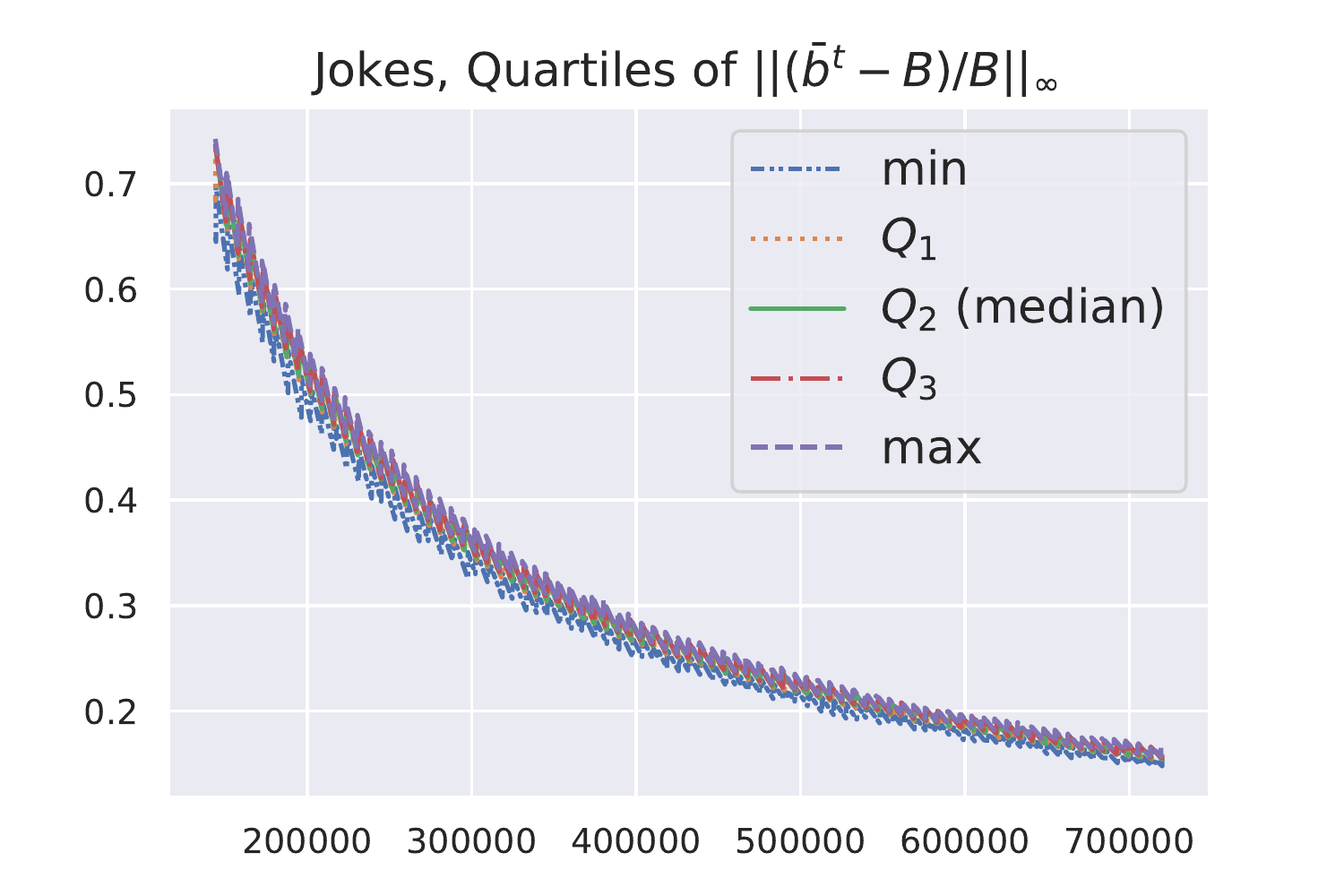}
        
        \includegraphics[scale=0.35]{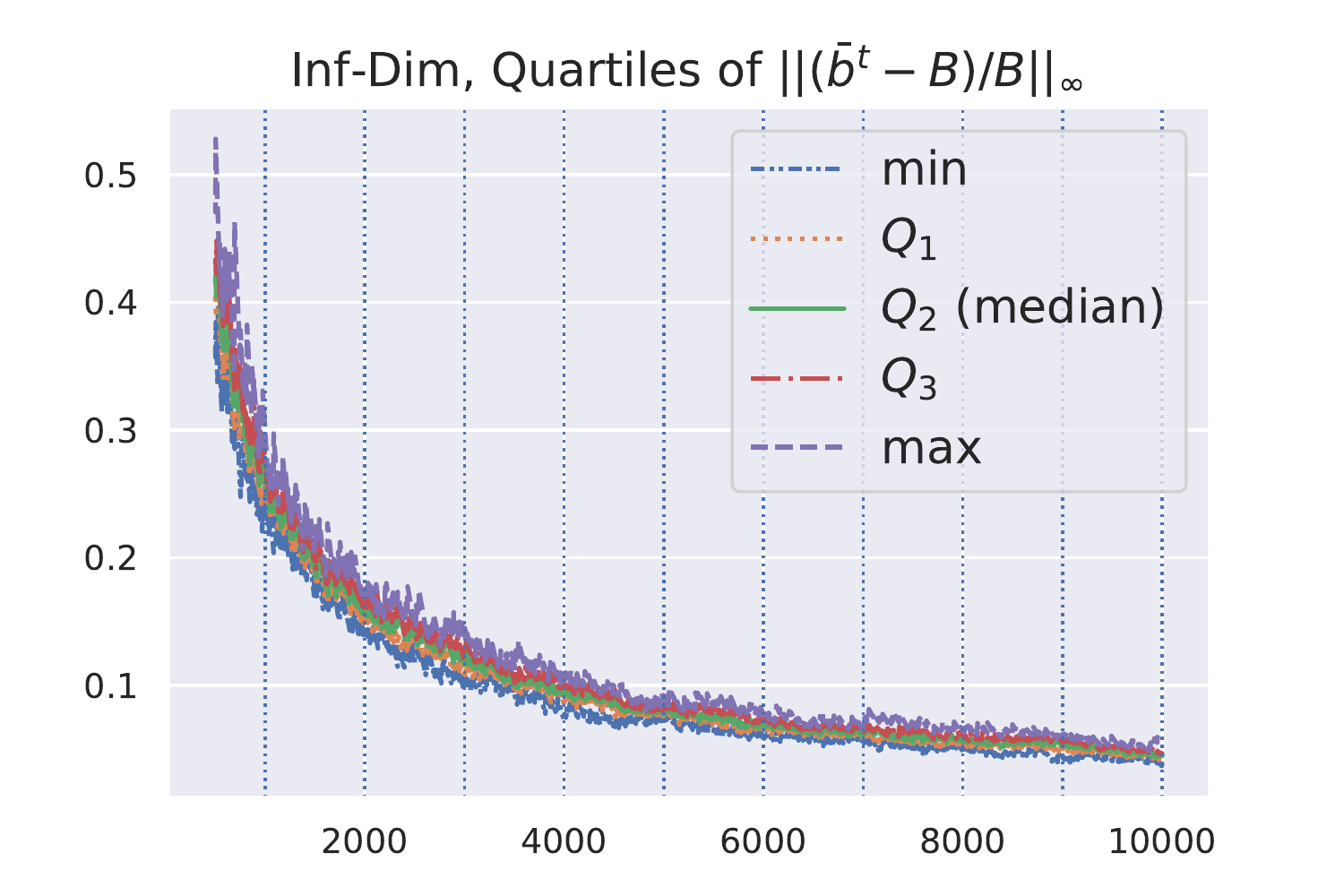}
        \includegraphics[scale=0.35]{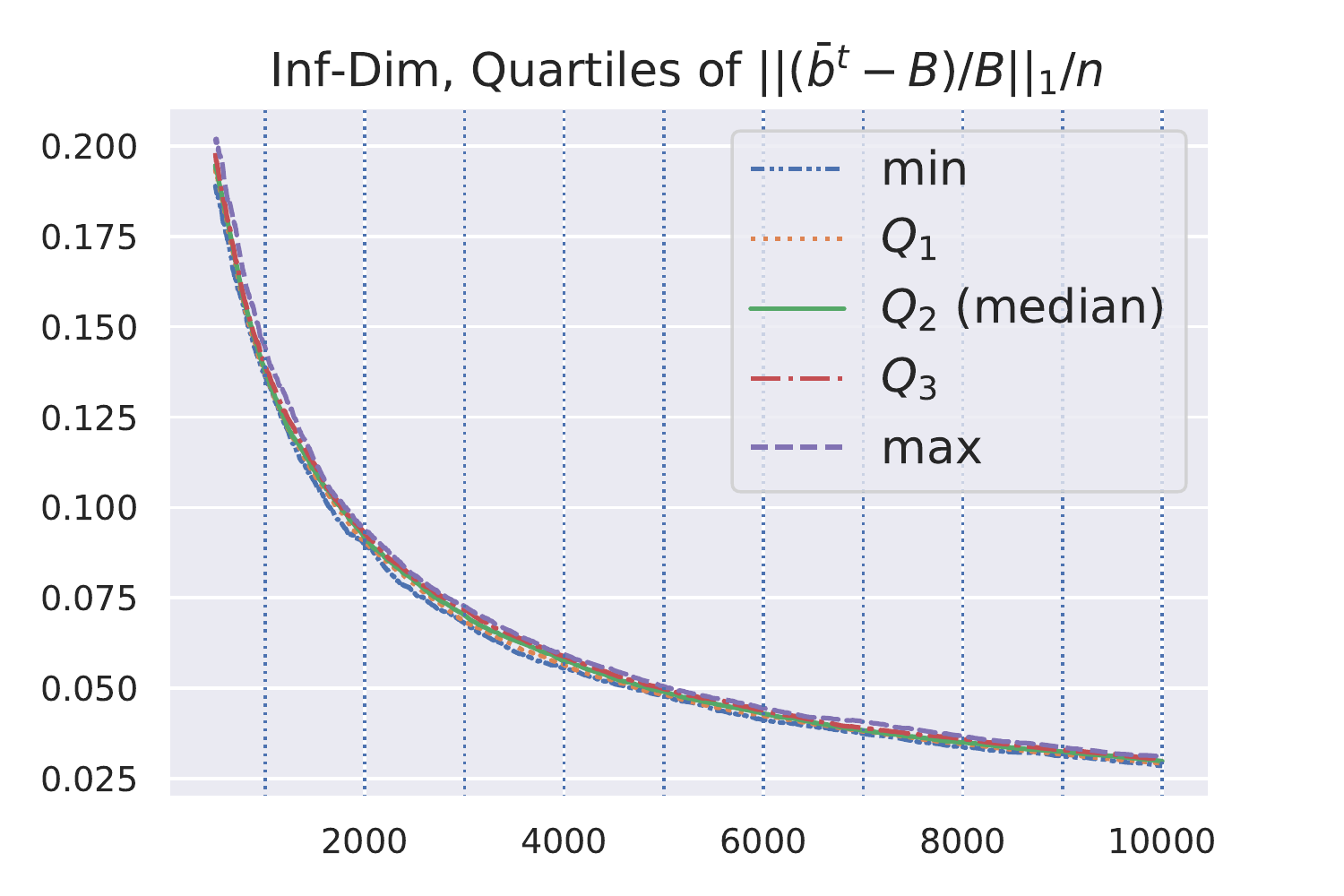}
        \caption{The PACE cumulative expenditure $\sum_{\tau=1}^t b^t_i$ of each buyer are close to the total amount of budget $tB_i$, as the quartile plots show. Vertical lines indicate when $t$ is a multiple of $10 n$. }
        \label{fig:quartiles-spending-vs-budget}
    \end{figure}
\end{center}

\end{document}